\documentclass[aps,prb,reprint,superscriptaddress,longbibliography,floatfix]{revtex4-2}

\usepackage[T1]{fontenc}
\usepackage[utf8]{inputenc}
\usepackage{amsmath,amssymb,amsthm}
\usepackage{array}
\usepackage{graphicx}
\usepackage{hyperref}
\usepackage{bibunits}

\hypersetup{hidelinks,
  pdftitle={Asymptotic Pseudospectra in Dissipative Floquet Quantum Systems: Geometric Structures and Observable Dynamics}}
\makeatletter
\let\SMsaved@author\author
\let\SMsaved@affiliation\affiliation
\let\SMsaved@thanks\thanks
\let\SMsaved@maketitle\maketitle
\let\SMsaved@AFjoin\@AF@join
\let\SMorig@addtocontents\addtocontents
\long\def\addtocontents#1#2{%
  \def\@tempa{toc}\def\@tempb{#1}%
  \ifx\@tempa\@tempb\else\SMorig@addtocontents{#1}{#2}\fi
}
\newcommand{\SMtableofcontents}{%
  \let\addtocontents\SMorig@addtocontents
  \tableofcontents
}
\makeatother

\newtheorem{proposition}{Proposition}
\newtheorem{lemma}{Lemma}
\newtheorem{assumption}{Assumption}

\newcommand{\suppref}[1]{SM Sec.~\ref{#1}}
\newcommand{\mainref}[1]{\ref{#1}}
\newcommand{\maineqref}[1]{\eqref{#1}}

\newcommand{\id}{\mathbb I}
\newcommand{\D}{\mathbb D}
\newcommand{\Tr}{\operatorname{Tr}}
\newcommand{\tr}{\operatorname{tr}}
\newcommand{\Ad}{\operatorname{Ad}}
\newcommand{\dist}{\operatorname{dist}}
\newcommand{\spec}{\operatorname{spec}}
\newcommand{\smin}{s_{\min}}
\newcommand{\eps}{\varepsilon}
\newcommand{\cL}{\mathcal L}
\newcommand{\cE}{\mathcal E}
\newcommand{\dd}{\mathrm d}
\newcommand{\ii}{\mathrm i}
\newcommand{\ee}{\mathrm e}
\newcommand{\DG}{\mathrm d\Gamma}
\newcommand{\norm}[1]{\left\lVert#1\right\rVert}
\newcommand{\pclass}{\mathsf P}
\newcommand{\Iclass}{\mathsf I}
\newcommand{\Cclass}{\mathsf C}

\begin{document}

\begin{bibunit}[apsrev4-2]
\title{Asymptotic Pseudospectra in Dissipative Floquet Quantum Systems: \\ Geometric Structures and Observable Dynamics}

\author{Yuncheng Xie}
\thanks{These authors are co-first authors and contributed equally to this work.}
\affiliation{Department of Physics, Fudan University, Shanghai 200433, China}
\affiliation{Key Laboratory of Computational Physical Sciences (Ministry of Education), State Key Laboratory of Surface Physics,
Fudan University, Shanghai 200433, China}

\author{Haozhe Shi}
\thanks{These authors are co-first authors and contributed equally to this work.}
\affiliation{Department of Physics, Fudan University, Shanghai 200433, China}
\affiliation{Key Laboratory of Computational Physical Sciences (Ministry of Education), State Key Laboratory of Surface Physics,
Fudan University, Shanghai 200433, China}

\author{Zhuocheng Ma}
\affiliation{State Key Laboratory of Artificial Microstructure and Mesoscopic Physics, School of Physics, Peking University, Beijing 100871, China}

\author{Weibin Chu}
\affiliation{Department of Physics, Fudan University, Shanghai 200433, China}
\affiliation{Key Laboratory of Computational Physical Sciences (Ministry of Education), State Key Laboratory of Surface Physics,
Fudan University, Shanghai 200433, China}

\author{Xin-Gao Gong}
\affiliation{Department of Physics, Fudan University, Shanghai 200433, China}
\affiliation{Key Laboratory of Computational Physical Sciences (Ministry of Education), State Key Laboratory of Surface Physics,
Fudan University, Shanghai 200433, China}
\date{\today}

\begin{abstract}
In periodically driven open quantum systems, nonnormality renders the Floquet spectrum insufficient as the system approaches the thermodynamic limit, so that pseudospectra are needed to characterize the dynamics accurately.
While conventional approaches mainly focus on the local dynamics of isolated pseudospectra, their global connections and the resulting physical consequences for observables have remained largely unexplored. Here, we uncover this collective behavior by classifying the unit disk into distinct domains of exponential, algebraic, and bounded accuracy according to the asymptotic size-scaling laws of pseudospectral residuals, yielding an underlying geometric structure. We demonstrate the physical implications of this structure through two exactly solvable models. First, in a dissipative shift chain, parameter tuning drives geometric transitions that are detectable via spin-wave observables. Second, in a chiral XY model, we use this geometric structure to explain the origin of a measurable phenomenon: two correlation signals exchange their retention order under continuous parameter tuning. Our findings not only establish a new theoretical paradigm for understanding dissipative Floquet quantum systems, but also predict geometry-driven observables for quantum computing experiments.
\end{abstract}

\maketitle
\raggedbottom

\section{Introduction}
\label{sec:intro}
The study of the open quantum systems~\cite{Lindblad1976,GKS1976}, especially those continuously driven by periodic forces and interacting with their environment~\cite{Bukov2015,OkaKitamura2019,Eckardt2017,Mori2023}, is essential for uncovering the fundamental physics of quantum information and ultimately realizing universal quantum computing~\cite{VerstraeteWolfCirac2009,Wendin2017,Barreiro2011,Kessler2021,Chen2025,Zhang2024,Bloch2012,Mi2022,Jiang2018}. To predict how these complex systems evolve over time, researchers have traditionally relied on exact eigenvalues~\cite{Minganti2018,Prosen2008,Koch2026}. However, this standard paradigm fails to accurately describe the nonnormal system's behavior, especially as the system size increases~\cite{MoriShirai2020,Haga2021,Ughrelidze2024}. To overcome this limitation, researchers introduce pseudospectra, a more general framework that uses approximate states to capture the missing dynamical information~\cite{TrefethenEmbree2005,Trefethen1997,Dencker2004,Davies2003,Kiorpelidis2025}. While it is understood that these approximate modes are essential for the evolutionary dynamics of the systems, the deeper question of how all these pseudospectra are globally organized and connected has remained entirely unexplored.

In this work, we introduce the concept of the \emph{asymptotic pseudospectral atlas} to systematically map the validity regimes of these approximate modes. By evaluating the size-scaling law of the optimal residual at each point in the closed unit disk, we classify the complex plane into distinct extended domains. These regions are characterized by whether the residual decays exponentially, algebraically, or remains bounded as the system size increases. The geometry of this atlas, including its boundaries, gaps, and contacts, dictates the connections of approximate modes and their response families.

To bridge this geometric classification with measurable physical properties, we apply our analysis to two exactly solvable models. First, we consider a dissipative shift chain. We show that parameter tuning drives geometric transitions in the atlas. These transitions involve the merging of adjacent domains and the closing of an enclosed hole, each leaving a measurable signature in the spin-wave observables.

Second, we investigate a chiral XY model subjected to engineered reservoir couplings~\cite{MetelmannClerk2015,Yang2022,McDonald2018,Wanjura2020}. In this model, we use the atlas geometry to explain a counterintuitive phenomenon: under continuous parameter tuning, two physical correlation signals exchange their retention order even though the entire Floquet spectrum remains invariant. Our analysis reveals that this signal inversion is governed by a contact between distinct domains in the atlas, providing a geometric origin for the dynamics.

These results illustrate that the asymptotic pseudospectral atlas is not merely a formal classification, but a physical structure that dictates the observable dynamics. Section~\ref{sec:theory} constructs the atlas and details its geometric properties. Sections~\ref{sec:toy} and~\ref{sec:xy} realize the geometry and its observable consequences in the dissipative shift chain and the chiral XY model, respectively. Finally, Section~\ref{sec:conclusions} turns to the challenges and potential numerical methods for evaluating the atlas in generically interacting driven chains.

The Supplemental Material collects the mathematical details and supporting proofs.

\section{Asymptotic pseudospectral atlas}\label{sec:theory}

\subsection{Construction of The Atlas}
\label{sec:atlas-definition}
Tracing out the environmental reservoir yields an open quantum system characterized by a density operator $\rho_N$, whose reduced dynamics are completely positive and trace-preserving (CPTP)~\cite{Lindblad1976,GKS1976,Pollock2018}. For an $N$-spin system governed by a $T$-periodic Markovian generator~\cite{Schnell2020,Mori2023} such that $\dot{\rho}_N(t) = \mathcal{L}_N(t)\rho_N(t)$, its associated generator and one-period evolution channel are defined as follows:
\begin{equation}
 \begin{gathered}
 \cL_N(t)X=-\ii[H_N(t),X]+\sum_\mu\mathcal D[J_{\mu,N}(t)]X,\\
 \mathcal D[J]X=JXJ^\dagger-\tfrac12\{J^\dagger J,X\},\\
 \Phi_N=\mathcal T\exp\!\left[\int_0^T\cL_N(t)\,\dd t\right].
 \end{gathered}
 \label{eq:main-open-system}
\end{equation}
Here $H_N$ generates the coherent evolution, while the jump operators $J_{\mu,N}$ describe the bath coupling. Furthermore, the overall evolution is best captured by directly defining the one-period channel $\Phi_N$. Throughout this work, we consider unital channels such that $\Phi_N(I)=I$, which ensures contractivity under the normalized Hilbert--Schmidt (HS) inner product, $\langle X,Y \rangle_N=2^{-N}\Tr(X^\dagger Y)$.

A Floquet multiplier $z$ characterizes an approximate mode when a nonzero operator $X$ nearly satisfies $\Phi_N X = zX$. The accuracy of this approximation is rigorously quantified by the residual and its minimum over the entire operator space:
\begin{align}
 \epsilon_N(z;X)&=\frac{\norm{(\Phi_N-zI)X}_{2,N}}{\norm X_{2,N}},\nonumber\\
 \eps_N(z)&=\min_{X\ne0}\epsilon_N(z;X)=\smin(\Phi_N-zI).
 \label{eq:intro-residual}
\end{align}

Conventionally, the tolerance-$\eta$ pseudospectrum $\{z:\eps_N(z)<\eta\}$ defines a specific level set of the residual field at a finite size $N$~\cite{TrefethenEmbree2005,Trefethen1997}. Here, working within the closed unit disk $\overline{\D}=\{z:|z|\le1\}$ where all exact multipliers reside, we instead characterize each point $z$ by the asymptotic law of its residual $\eps_N(z)$ in the thermodynamic limit. Guaranteed by the Lipschitz continuity $|\eps_N(z)-\eps_N(z')|\le|z-z'|$~\cite{TrefethenEmbree2005}, this asymptotic behavior varies controllably across the disk, naturally partitioning it into distinct extended domains. The geometric shapes and relative configurations of these domains form the basis of our subsequent analysis.

To systematically extract the asymptotic law defining these domains, we first introduce the logarithmic residual field:
\begin{equation}
F_N(z)=-\ln\eps_N(z),
\label{eq:main-F}
\end{equation}
which is taken to be $+\infty$ at an exact zero. Based on this field, we then define three pairs of asymptotic indices:
\begin{equation}
 \begin{aligned}
 d_\pm(z)&=\underset{N\to\infty}{\limsup/\liminf}\,\eps_N(z),\\
 \alpha_\pm(z)&=\underset{N\to\infty}{\limsup/\liminf}\,
                  \frac{\ln[1+F_N(z)]}{\ln N},\\
 p_\pm(z)&=\underset{N\to\infty}{\limsup/\liminf}\,
             \frac{F_N(z)}{\ln N}.
 \end{aligned}
 \label{eq:main-six}
\end{equation}
These limits are evaluated at a fixed $z$, where the $+$ and $-$ subscripts denote the $\limsup$ and $\liminf$, respectively. The pair $d_\pm(z)$ tests whether the residual is bounded away from zero or vanishes. When the residual vanishes, $\alpha_\pm(z)$ characterizes the power of $N$ controlling its exponential decay, while $p_\pm(z)$ captures the algebraic decay rate once that exponential power drops to zero. Distinguishing between the upper and lower limits explicitly preserves any oscillatory dependence on specific subsequences of the system size~\cite{Bingham1987}.

In typical regular physical families, size-dependent oscillations vanish as $N \to \infty$, allowing the upper and lower bounds to converge. We thus impose the \emph{common-limit hypothesis}:
\begin{equation}
 \begin{aligned}
 d_-(z)&=d_+(z)=d(z),\\
 \alpha_-(z)&=\alpha_+(z)=\alpha(z),\\
 p_-(z)&=p_+(z)=p(z).
 \end{aligned}
 \label{eq:main-common-limits}
\end{equation}
Under this hypothesis, the system acquires a single index $(d(z),\alpha(z),p(z))$. This index naturally partitions the closed unit disk $\overline{\D}$ into the domains of our atlas. Specifically, we classify the asymptotic behaviors into three regular classes:
\begin{equation}
 \begin{array}{c|ccc}
   &d(z)&\alpha(z)&p(z)\\ \hline
  \Cclass&>0&0&0\\
  \Iclass&0&>0&+\infty\\
  \pclass&0&0&>0\text{ finite}
 \end{array}
 \label{eq:main-classes}
\end{equation}
The $\Cclass$ region is characterized by a bounded residual $\eps_N(z)\to d(z)>0$. In contrast, $\pclass$ and $\Iclass$ regions correspond to algebraically ($\eps_N(z)=N^{-p(z)+o(1)}$) and exponentially ($\eps_N(z)=\exp[-N^{\alpha(z)+o(1)}]$) vanishing residuals, respectively. The $\pclass$ class typically emerges as the boundary separating the $\Iclass$ domains of exponential accuracy from the $\Cclass$ exterior. While this three-class picture captures the primary structure, further technical exceptions and exact-zero conventions are addressed in \suppref{app:foundations}.

\subsection{Physics of Atlas Geometry}
\label{sec:atlas-meaning}
The coordinate $z$ and its asymptotic label characterize independent physical properties within the atlas. Writing $z=|z|\ee^{\ii\phi}$, the radius sets the decay rate $-\ln|z|/T$ of the mode and the angle its Floquet frequency $\phi/T$. The asymptotic label instead sets how long this modal approximation can be trusted. These two aspects naturally define two distinct timescales:
\begin{equation}
 \tau_{\mathrm{dec}}(z)=-\frac{T}{\ln|z|},\qquad
 \tau_{\mathrm{ps},N}(z)=\frac{T}{\eps_N(z)}.
 \label{eq:main-timescales}
\end{equation}
Here $\eps_N(z)$ is the minimum residual of Eq.~\eqref{eq:intro-residual}, so $\tau_{\mathrm{ps},N}(z)$ is the longest validity time that any operator can attain at $z$~\cite{TrefethenEmbree2005,Davies2003}; it is infinite at an exact eigenvalue, where the residual vanishes.
The asymptotic atlas essentially classifies the thermodynamic scaling of this validity time: it diverges exponentially as $T\exp[N^{\alpha(z)+o(1)}]$ in the $\Iclass$ region, grows algebraically as $TN^{p(z)+o(1)}$ in the $\pclass$ region, and remains strictly bounded in the $\Cclass$ region. By contrast, the decay time $\tau_{\mathrm{dec}}(z)$ is governed entirely by the spectral radius $|z|$, dictating a nondecaying envelope on the unit circle ($|z|=1$) and an instantaneous collapse of the reference trajectory after a single Floquet cycle at the origin ($z=0$).

By the contractivity of the channel~\cite{SzNagy2010}, the dynamical error after $n$ cycles is bounded by
\begin{equation}
 \norm{\Phi_N^nX_{N,z}-z^nX_{N,z}}_{2,N}
 \le\eps_N(z)\sum_{j=0}^{n-1}|z|^j.
 \label{eq:main-validity}
\end{equation}
In the interior ($|z|<1$) the geometric sum converges, so the absolute error stays below $\eps_N(z)/(1-|z|)$ at every cycle. Long validity can therefore naturally coexist with the rapid decay of a nonzero signal.
On the unit circle ($|z|=1$), by contrast, the error bound grows linearly as $n\eps_N(z)=nT/\tau_{\mathrm{ps},N}(z)$. The approximation therefore stays within a tolerance $\delta$ for evolution times up to $\delta\,\tau_{\mathrm{ps},N}(z)$.

The geometry of the atlas determines which decay-phase combinations admit asymptotically accurate modes. Under Eq.~\eqref{eq:main-common-limits} the region supporting such vanishing residuals is formally defined as:
\begin{equation}
 K=\{z\in\overline\D:d(z)=0\},\qquad
 \Cclass=\overline\D\setminus K.
 \label{eq:main-vanishing-set}
\end{equation}
The Lipschitz continuity of the residual guarantees that $K$ is a closed set with well-defined boundaries. At a fixed angle, its radial cross-section gives the decay rates available to asymptotically accurate modes oscillating at that frequency. Similarly, at a fixed radius, an angular gap in $K$ marks a band of Floquet frequencies that no asymptotically accurate mode with that decay rate can carry. Geometrically, distant multipliers are linked if they share a connected domain, and dynamically separated if an intervening gap breaks them apart~\cite{ReichelTrefethen1992,TrefethenContediniEmbree2001}. While this geometry is intrinsic to the channel, an observable real-frequency absorption spectrum depends additionally on the specific state preparation and measurement.

The disconnected components of $K$ represent distinct regions of asymptotically accurate responses. Within this geometry, a hole is a finite-residual region completely enclosed by $K$, and a contact is a point where two previously isolated components meet. Continuous tuning of a physical parameter can drive geometric transitions, such as the merging of adjacent components or the abrupt closing of a hole. Consequently, these geometric reorganizations fundamentally reshape the global domains of the asymptotic atlas.

When two components are separated, a fixed closed contour can be drawn through the finite-residual region between them, and the residual along it stays bounded away from zero uniformly in $N$. The Riesz projection along this contour then isolates the exact eigenvalues it encloses, with a norm bound and perturbative stability that do not depend on system size~\cite{Macieszczak2016,Kato1995}. This provides a uniform separation certificate. Once a geometric transition merges the two components, every such contour must cross $K$ and the certificate is lost. In the dissipative shift chain, the susceptibility of the enclosed family to perturbations coupling it to its neighbor indeed grows algebraically at the merger and exponentially beyond it~\cite{Stewart1973,BhatiaRosenthal1997} (Sec.~\ref{sec:riesz-toy}). The explicit construction and bounds are given in \suppref{sec:riesz} and \suppref{app:cross-family}.

\section{Dissipative shift chain}
\label{sec:toy}
To demonstrate the theoretical framework developed above, we construct a dissipative spin chain that allows for an exact analytical evaluation of the atlas. Driven by directed shift and boundary reset, this model features a fundamentally nilpotent structure: nonidentity operators are carried along finite orbits that strictly terminate at the boundary. This mathematical property ensures global solvability of the residual problem, yielding a closed-form atlas tuned by a single parameter. Physically, this exact solution explicitly captures the geometric features anticipated in Sec.~\ref{sec:theory}, an angular gap and an enclosed hole, each closing at a specific threshold and leaving a measurable signature in spin-wave observables.

\subsection{Exact Solution}
The dynamics of the model over a single driving period is governed by a discrete sequence of operations~\cite{Ciccarello2022}: each period comprises a probabilistic dissipative shift followed by a uniform coherent rotation (see Fig.~\ref{fig:toy-atlases}(a)). Governed by a single random variable, the chain shifts to the right with probability $g$, discarding the rightmost spin and inserting a maximally mixed spin at the left, or remains in place with probability $a=1-g$. A global $\pi/2$ pulse about the $x$-axis then completes the cycle:
\begin{equation}
 \Phi_N=\Ad_{R_N}(a\,\mathrm{Id}+g\cE_N)
 \label{eq:main-shift-channel}
\end{equation}
where $\cE_N(X)=\frac{I_2}{2}\otimes\Tr_N X$ and $R_N=\bigotimes_j\ee^{-\ii\pi\sigma_j^x/4}$. This construction yields a unital completely positive trace-preserving (CPTP) channel whose unique stationary state is $2^{-N}I$.

\begin{figure*}[t]
 \centering
 \includegraphics[width=1\textwidth]{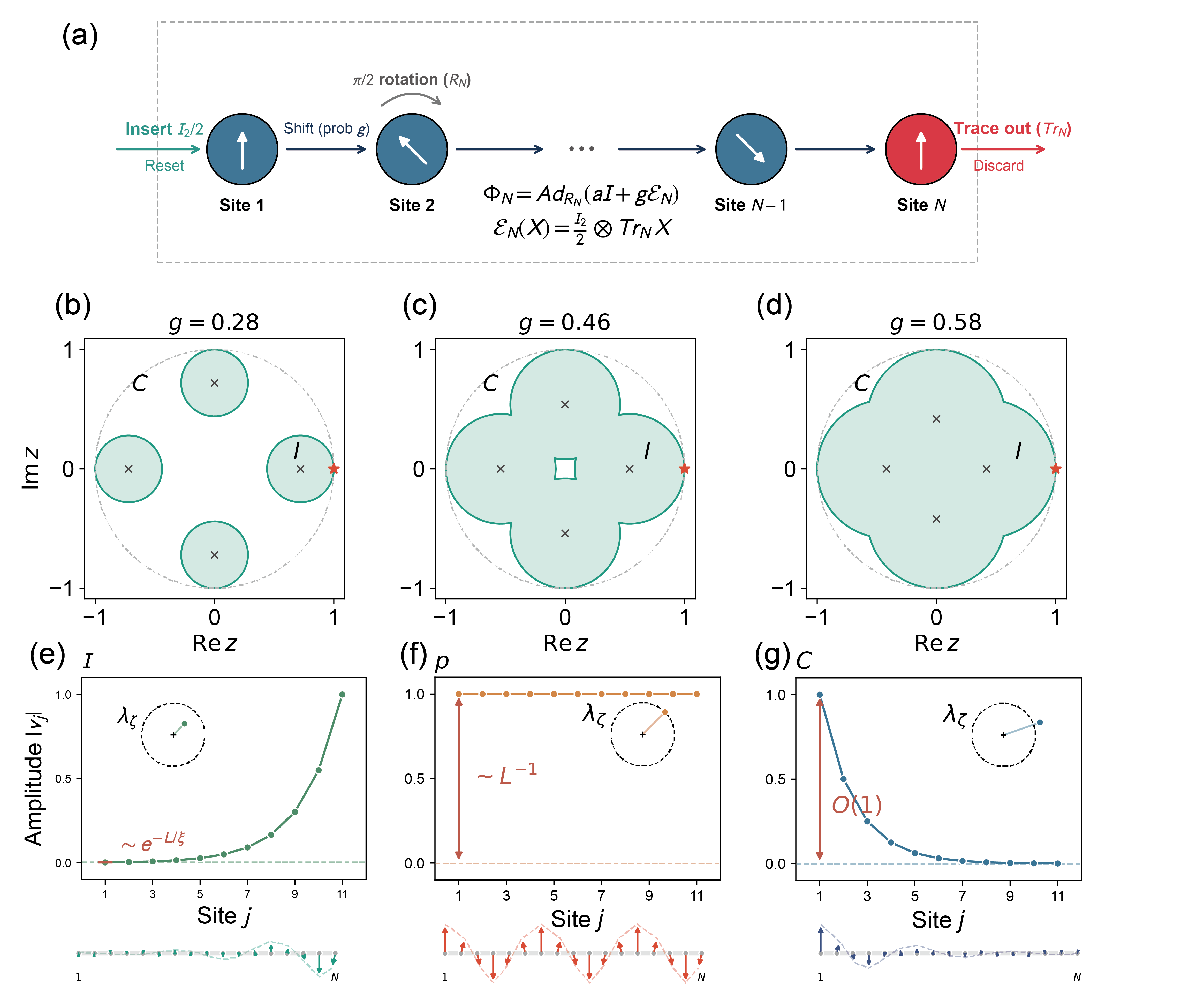}
 \caption{\textbf{Reorganization of the dissipative-shift-chain atlas with the activation probability $g$.}
\textbf{(a)} One period of the channel: insertion of a maximally mixed spin, a shift applied with probability $g$, a $\pi/2$ rotation, and the discarding of the last site.
\textbf{(b)--(d)} Complete asymptotic pseudospectral atlas at $g=0.28$, $0.46$, and $0.58$. The domains pass from disconnected, to a connected union enclosing a hole, to a filled center.
Filled regions carry exponential validity scales, exposed solid boundaries linear scales, and the exterior a positive limiting residual.
Crosses mark the exact nonstationary eigenvalues, the star the stationary multiplier, and the dashed circle is $|z|=1$.
\textbf{(e)--(g)} Spatial amplitude profiles $|v_j|$ of the corresponding spin-wave modes, one for each residual class: exponential ($\Iclass$), algebraic ($\pclass$), and bounded ($\Cclass$).}
 \label{fig:toy-atlases}
\end{figure*}

The solvability of this model relies on the dynamical property: translation preserves the spatial shape of any given spin configuration until it reaches the boundary, causing it to form an orthogonal orbit. In a basis of the coherent pulse, each orbit acquires one of four phases, $\zeta\in\{1,\ii,-1,-\ii\}$. Consequently, the full quantum channel restricted to an orbit of length $L$ reduces simply to $\zeta(aI_L+gS_L)$, where $S_L e_j = e_{j+1}$ and $S_L e_L = 0$. Because these independent orbits exhaustively cover all $4^N$ operator directions, the exact Floquet eigenvalues for $N\ge2$ are strictly limited to $1, a, \ii a, -a,$ and $-\ii a$. By decomposing the exponentially large many-spin problem into a collection of scalar shift problems, this exhaustive reduction guarantees that the global minimum residual over all operator directions can be evaluated exactly at every $z$, fulfilling the theoretical requirement established in Sec.~\ref{sec:atlas-definition}.

Solving the scalar block for each pulse phase reveals that each family contributes a pseudospectral disk centered at $a\zeta$. By defining the minimum distance $\rho(z) = \min_{\zeta^4=1} |z - a\zeta|$, we find that the complete channel obeys three distinct scaling laws, corresponding to three indices defined in Sec.~\ref{sec:atlas-definition}:
\begin{equation}
 \begin{array}{lll}
 \rho(z)<g:& \eps_N(z)=\Theta([\rho(z)/g]^N),& \Iclass\text{ region},\\
 \rho(z)=g:& \eps_N(z)\sim\pi g/(2N),&\pclass\text{ region},\\
 \rho(z)>g:& \eps_N(z)\to\rho(z)-g,&\Cclass\text{ region}.
 \end{array}
 \label{eq:main-shift-atlas}
\end{equation}

A profound physical meaning can be extracted from this geometric classification by applying Hermitian doubling to each residual block~\cite{Herviou2019}. These three scalings elegantly map onto an auxiliary Su--Schrieffer--Heeger (SSH) chain~\cite{SSH1979} with the intra-cell coupling $t_1 = |z - a\zeta|$ and the inter-cell coupling $t_2 = g$. The disk interior corresponds to the nontrivial SSH phase, where opposite-edge singular modes split only exponentially. At the boundary ($t_1 = t_2$), the bulk gap closes, and a critical extended mode produces the $N^{-1}$ algebraic residual. Finally, outside the disk, the auxiliary chain is trivial and gapped. This mapping establishes deep connections between the pseudospectral geometry and topological phases~\cite{Gong2018,YaoWang2018,Okuma2020,Ammari2025,Kawabata2019,Ashida2020,Bergholtz2021,OkumaSato2023}, and the full correspondence is provided in \suppref{sec:SSH}.

In the context of Eq.~\eqref{eq:main-classes}, these scaling laws translate to exponential validity within the four disks, linear validity along the exposed arcs, and bounded validity in the exterior. Geometrically, the atlas is a union of four disks of radius $g$ centered at $(1-g)\zeta$. Since $g$ uniquely determines both the disk sizes and their separation, growing $g$ forces the atlas to reorganize at two critical values:
\begin{equation}
 g_1=\sqrt2-1,\qquad g_2=\tfrac12.
 \label{eq:main-shift-thresholds}
\end{equation}

At the critical threshold $g_1$, adjacent disks first touch, fusing four previously isolated components into a single connected domain. At each contact point $z_\zeta=(1-g_1)\zeta(1+\ii)/2$, the adjacent $\zeta$ and $\ii\zeta$ SSH families simultaneously close their bulk gaps ($t_1=t_2=g_1$). By bridging the angular gap between consecutive pulse phases, this criticality completely eliminates the phase dead zone for long-lived modes. Concurrently, the merger severs the bounded-residual corridor that previously separated distinct response families, stripping them of their isolation protection. Beyond $g_1$, the connected union leaves a finite-residual hole around the origin.

That hole closes at $g_2$. At this point, all four auxiliary chains simultaneously reach criticality ($t_1=a=t_2=1/2$). Unlike the pairwise merger at $g_1$, this second transition concurrently drives all four SSH families from a trivial gapped phase into a topological phase, marked by a jump in their winding numbers from zero to one~\cite{Gong2018,ZhangYangFang2020,Borgnia2020}. Despite their distinct topological characters, both geometric transitions yield the exact same algebraic residual scaling, $\eps_N(z)\sim\pi g/(2N)$. Consequently, local accuracy labels alone cannot differentiate a pairwise merger from a hole closure; this distinction relies entirely on the global geometry. Figures~\ref{fig:toy-atlases}(b)--\ref{fig:toy-atlases}(d) illustrate these three evolving geometries.

\subsection{Spin-wave Observables}
\label{sec:shift-observables}\label{sec:riesz-toy}
To observe the geometric atlas, the initial preparation must actively excite the relevant Floquet multipliers~\cite{Kunjummen2023}. Fortunately, the translation symmetry of the shift chain naturally links spatial wave numbers to these multipliers, allowing spin waves to directly probe the atlas geometry. By tuning the spatial wave number, the bulk multipliers of these spin-wave excitations explicitly sweep the four bounding circles of the atlas. To demonstrate this, we initialize the system in a spatially modulated product state $\rho_0=\bigotimes_j\frac{I+m\cos(qj)\sigma_j^x}{2}$ with $|m|<1$. The resulting local magnetization $f_j(n)=\Tr[\sigma_j^x\Phi_N^n\rho_0]$ strictly obeys the finite-chain recursion
\begin{equation}
 f_j(n+1)=af_j(n)+gf_{j-1}(n),\qquad f_0(n)=0.
 \label{eq:main-shift-recursion}
\end{equation}

By linearly combining the preparations, one can reconstruct a complex spatial wave characterized by the bulk multiplier $a+g\ee^{\ii q}$. As the wave number $q$ varies, this multiplier traces the boundary of the $\zeta=1$ pseudospectral disk. Accessing the remaining three pulse phases, however, requires transverse spin textures and higher-order observables. Specifically, helical states with $\langle\sigma_j^y\rangle=m\cos(qj)$ and $\langle\sigma_j^z\rangle=-m\sin(qj)$ accessed via the observables $\sigma_j^y\pm\ii\sigma_j^z$ capture two of the phases, while a two-spin quadrupole measurement, $(\sigma_j^y\sigma_{j+1}^y-\sigma_j^z\sigma_{j+1}^z)/\sqrt2$, isolates the final one. Together, these experimental configurations yield the complete set of response curves:
\begin{equation}
 \lambda_\zeta(q)=\zeta(a+g\ee^{\ii q}).
 \label{eq:main-response-circles}
\end{equation}
In this unified framework, the initial texture and chosen observable fix the discrete pulse phase $\zeta$, while the continuous wave number $q$ selects a specific point along the corresponding circle. Detailed derivations are provided in \suppref{app:shift-readouts}.

\begin{figure*}[t]
 \centering
 \includegraphics[width=1\textwidth]{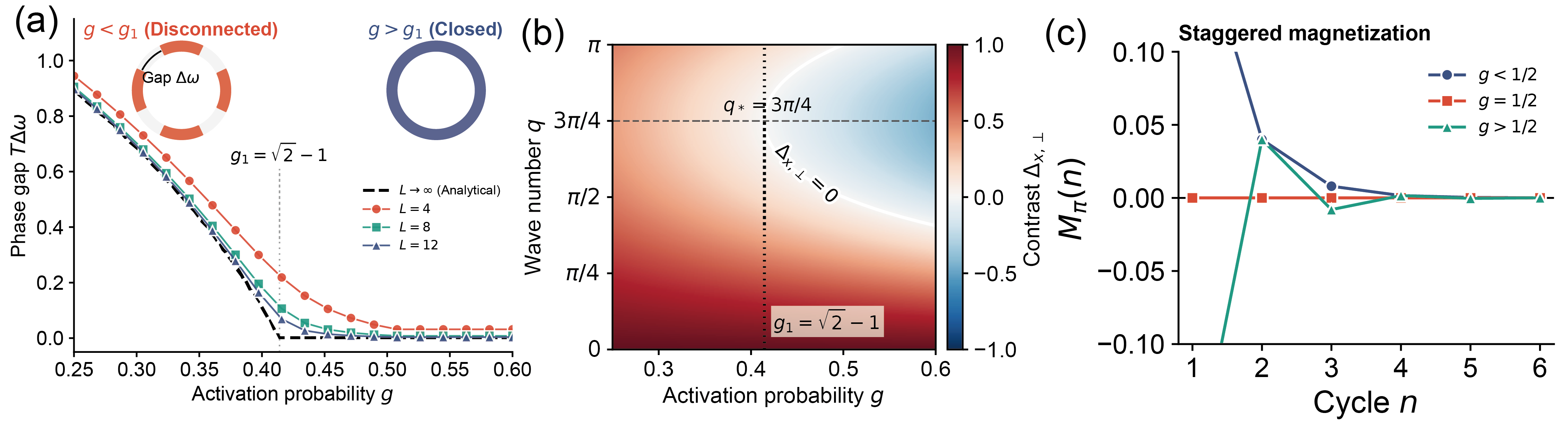}
 \caption{\textbf{Spin-wave observables diagnosing the two geometric transitions of the shift-chain atlas.} All curves are theoretical predictions.
\textbf{(a)} Width $T\Delta\omega$ of the excluded band of Floquet phases, Eq.~\eqref{eq:main-phase-gap}, against the activation probability $g$. It vanishes at the first threshold $g_1=\sqrt2-1$.
\textbf{(b)} One-cycle quadrature contrast against $g$. The contrast changes sign at $g_1$, so it reads out the merger of adjacent components directly.
\textbf{(c)} Staggered magnetization $M_\pi(n)$ over successive cycles. It cancels identically at the second threshold $g_2=1/2$ and alternates in sign while decaying above it. The initial value $M_\pi(0)=1$ is omitted so that the later cycles resolve.}
 \label{fig:physical-shift-readouts}
\end{figure*}

These spin-wave readouts directly uncover the physical origin of the three residual classes, revealing them as three distinct spatial localization behaviors of a single boundary mismatch. Fix a target multiplier $z$ away from the exact spectrum and set
\begin{equation}
\beta_\zeta(z)=\frac{z/\zeta-a}{g},\qquad
\lambda_\zeta(q)-z=\zeta g(\ee^{\ii q}-\beta_\zeta).
\label{eq:main-shift-beta}
\end{equation}

The asymptotic accuracy of this spin-wave mode is therefore entirely dictated by $|\beta_\zeta|$, which governs where the texture localizes relative to the boundary. For $|\beta_\zeta|<1$, the texture localizes at the right edge, exponentially suppressing the normalized boundary mismatch at site $1$ and placing the mode in class $\Iclass$ (Fig.~\ref{fig:toy-atlases}(e)). At $|\beta_\zeta|=1$, the magnitude becomes uniform across the chain, supporting an extended plane wave whose optimal $L^{-1}$ residual is achieved by smoothing the envelope (class $\pclass$; Fig.~\ref{fig:toy-atlases}(f)). For $|\beta_\zeta|>1$, the texture instead localizes at the left edge, so the normalized boundary mismatch remains a finite fraction and the residual stays finite (class $\Cclass$; Fig.~\ref{fig:toy-atlases}(g))~\cite{Wang2024,Kaneshiro2026,Yokomizo2019,YaoWang2018,Weidemann2020}.

The first transition removes an angular gap, corresponding to a forbidden band of Floquet phases that no accurate mode can carry~\cite{Koch2026}. For $g<1/2$, the frequency width of this excluded band is
\begin{equation}
T\Delta\omega(g)=[\pi/2-2\arcsin(g/(1-g))]_+.
\label{eq:main-phase-gap}
\end{equation}
The geometric event that this width vanishes precisely at $g_1$ can be detected within a single evolution cycle by interfering two wave numbers that straddle the gap. Specifically, combining an $x$ texture at $q=3\pi/4$ with a $yz$ helix at $-q$ yields a complex multiplier difference of $[1-(1+\sqrt 2)g](1-\ii)$. The normalized quadrature contrast of this readout changes sign exactly at the critical threshold $g_1$, providing an experimental signature of the domain merger.

Detecting the closure of the central hole requires a different preparation. A staggered initial state, $f_j(0)=m(-1)^j$, directly addresses this central region, yielding a normalized bulk magnetization
\begin{equation}
M_\pi(n)=(1-2g)^n.
\label{eq:main-staggered}
\end{equation}
At the second critical threshold $g_2=1/2$, the shifted and unshifted spin textures cancel exactly. Beyond this threshold ($g>g_2$), the magnetization signal begins to alternate in sign as it decays. This onset of sign alternation physically signifies the exact moment the response circle expands to completely enclose the origin. Consequently, as summarized in Fig.~\ref{fig:physical-shift-readouts}, both geometric transitions in the atlas carry independent experimental diagnostics.

The gaps between the pseudospectral disks correspond to the bounded-residual regions, which safely admit the stabilizing contour construction introduced in Sec.~\ref{sec:atlas-meaning}. This model quantitatively demonstrates how the stability of distinct response families is entirely dictated by the atlas geometry. As the disks transition from strictly separated to finally merged, their susceptibility to inter-family perturbations evolves from bounded, to algebraically growing, and ultimately to exponentially diverging with system size. Detailed perturbation bounds are provided in \suppref{app:cross-family}.

In summary, the dissipative shift chain provides a solvable paradigm that connects abstract pseudospectral geometries to physical observables. Tuning a single activation parameter drives two geometric transitions, the merging of isolated domains and the closing of a central hole, each carrying a measurable spin-wave observable. This model quantitatively demonstrates that the geometry of the atlas governs the dynamical stability of distinct response families.

\section{Chiral XY model}
\label{sec:xy}
While the dissipative shift chain in Sec.~\ref{sec:toy} provides a theoretical paradigm for evaluating the atlas, it remains a mathematical toy model. To ground these geometric concepts in a realistic physical setting, we introduce a chiral XY model governed by standard exchange interactions and engineered reservoirs. Through the Jordan--Wigner transformation~\cite{JordanWigner1928,LiebSchultzMattis1961,Katsura1962}, the XY Hamiltonian maps onto quadratic fermion hopping. This reservoir-engineering scheme recovers the nilpotent shift mechanism explored previously, translating the solvability of the abstract model into an implementable many-body quantum system~\cite{Diehl2008,Kraus2008,Barreiro2011,Diehl2011}.

\begin{figure*}[t]
 \centering
 \includegraphics[width=1\textwidth]{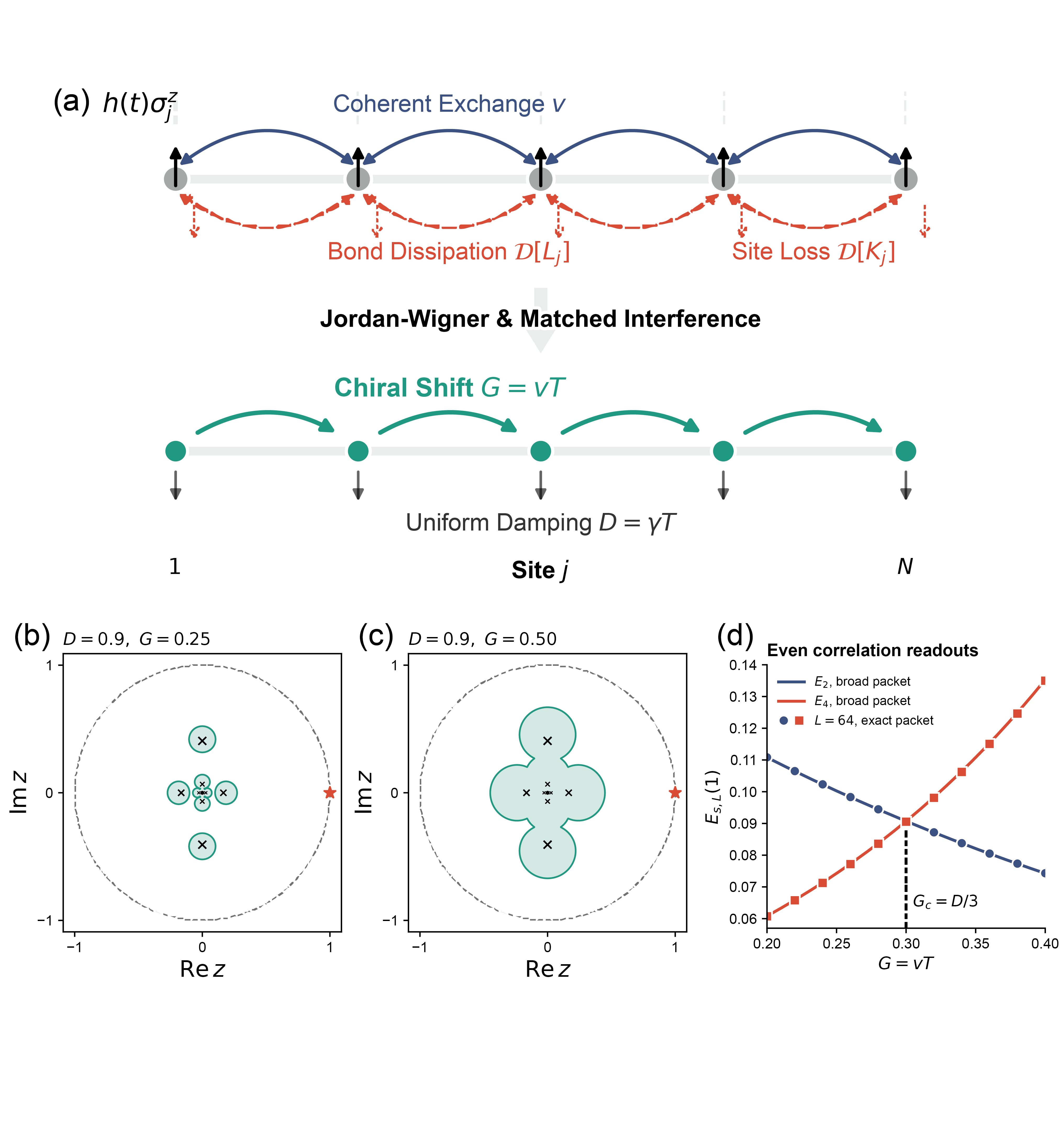}
 \caption{\textbf{Correlation control at fixed stationary state and fixed exact Floquet spectrum in the chiral XY model.}
\textbf{(a)} The effective dynamics. A spin chain with coherent exchange, bond dissipation, and site loss maps under the Jordan--Wigner transformation and matched interference onto a strict directed shift with uniform damping.
\textbf{(b)--(c)} Complete spin-operator pseudospectral atlases at $G=0.25$ and $G=0.50$, both at $D=0.9$. The exact eigenvalues (crosses), the stationary multiplier (star), and the unit circle are identical in the two panels; only the domains deform.
Filled domains carry exponential validity scales, exposed solid interfaces linear scales, and the exterior a positive limiting residual.
\textbf{(d)} The two even correlation readouts of Eq.~\eqref{eq:main-XY-signals} at $n=1$ against $G=vT$. They exchange their magnitude ordering at the exposed contact $G_c=D/3$. Solid lines give the broad-packet limits, symbols the exact $L=64$ packet formulas. Exchange and dissipation rates are varied together.}
 \label{fig:XY-atlases}
\end{figure*}

\subsection{Chiral Dynamics}
We begin with the isotropic XY Hamiltonian subjected to a time-periodic longitudinal field:
\begin{equation}
H_N(t)=\frac v4\sum_{j=1}^{N-1}
(\sigma_j^x\sigma_{j+1}^x+\sigma_j^y\sigma_{j+1}^y)
-\frac{h(t)}2\sum_{j=1}^N\sigma_j^z,
\label{eq:main-XY-H}
\end{equation}
where $h(t)=h_0+h_1\cos(2\pi t/T)$. Through the Jordan--Wigner transformation~\cite{JordanWigner1928,LiebSchultzMattis1961}, defined by $\sigma_j^z=I-2c_j^\dagger c_j$, the exchange interaction naturally maps to nearest-neighbor fermion hopping, while the magnetic field acts as an on-site potential. To preserve the solvability of this quadratic structure in the open-system dynamics, we engineer the reservoir using jump operators that are linear in these same fermions~\cite{Prosen2008,Prosen2011XXZ,Altland2021}. Specifically, the Lindbladian generator takes the form of Eq.~\eqref{eq:main-open-system}, with the following jump operators taken together with their Hermitian adjoints:
\begin{equation}
 \begin{gathered}
 L_j=\sqrt{v/2}(c_j-\ii c_{j+1}),\qquad
 K_j=\sqrt{\kappa_j}\,c_j,\\
 \kappa_j=\gamma-\tfrac v2\deg(j),\qquad \gamma>v>0.
 \end{gathered}
 \label{eq:main-XY-jumps}
\end{equation}
Here, $\deg(j)$ equals two in the bulk, one at either endpoint, and zero for a single isolated site, which inherently ensures exact gain-loss balance across the finite chain.

The relative phase between the coherent exchange and bond dissipation induces an interference that precisely cancels the reverse propagation, leaving only a chiral shift (Fig.~\ref{fig:XY-atlases}(a)). This continuous interference aligns with established reservoir-engineering mechanisms for nonreciprocal transport~\cite{MetelmannClerk2015, Yang2022, McDonald2018, Wanjura2020, LauClerk2018}, while retaining all trace-preserving recycling terms required by Eq.~\eqref{eq:main-open-system}. Explicitly, the damping matrix $M_N=\gamma I+\ii v(S_N-S_N^\dagger)/2$ and the one-particle Hamiltonian matrix $h_N(t)=v(S_N+S_N^\dagger)/2+h(t)I$ combine to yield
\begin{equation}
-M_N-\ii h_N(t)=-\gamma I-\ii vS_N-\ii h(t)I.
\label{eq:main-XY-cancellation}
\end{equation}
With the reverse-shift terms ($S_N^\dagger$) eliminated, integrating over one period with the dimensionless parameters $D=\gamma T$, $G=vT$, and $\theta=h_0T$ yields the elementary particle-hole propagation matrices:
\begin{equation}
B_{N,\pm}=\ee^{-D\pm\ii\theta}\ee^{GS_N},\qquad D>G>0.
\label{eq:main-XY-directed}
\end{equation}
The shift $S_N$ is the same nilpotent matrix utilized in Sec.~\ref{sec:toy}, with the continuous chiral hopping now generating the exponential propagator $\ee^{GS_N}$ rather than the discrete map $aI+gS_N$. While the Jordan--Wigner transformation and the matched bath render this fermion propagation explicitly solvable, evaluating the true pseudospectral residual requires optimizing over the entire $4^N$-dimensional spin-operator space. To bridge this gap, we employ a Hilbert--Schmidt unitary reduction, rigorously elevating the single-particle solution to the full operator algebra. Specifically, antisymmetrized products of $B_{N,+}$ and $B_{N,-}$ span all operator directions. This ensures that even single-fermion observables acquire their precise evolution strictly through this reduction framework~\cite{ProsenIlievski2011,Prosen2008,BarthelZhang2022,Altland2021}. The complete reduction is detailed in \suppref{app:XY-operator-space}.

Exploiting this nilpotency yields the exact spectrum:
\begin{equation}
 \begin{gathered}
 \lambda_{r,q}=\ee^{-rD+\ii q\theta},\\
 r=r_++r_-,\quad q=r_+-r_-,\quad 0\le r_\pm\le N.
 \end{gathered}
 \label{eq:main-XY-isospectral}
\end{equation}
where each $(r_+,r_-)$ sector contributes a multiplicity of $\binom N{r_+}\binom N{r_-}$, and coincident values add their multiplicities. In Eq.~\eqref{eq:main-XY-isospectral}, only $D$ and $\theta$ dictate this spectrum; the coherent exchange rate $G$ is entirely absent. Therefore the entire $4^N$-dimensional eigenvalue multiset and the unique stationary state $\rho_{\mathrm{ss}}=2^{-N}I$ remain frozen as $G$ varies. The exchange rate $G$ does, however, dictate the geometry of the pseudospectral atlas, setting the domain radii in the logarithmic coordinates of Sec.~\ref{sec:xy-atlas-contacts}. Tuning $G$ therefore deforms the atlas along an isospectral family (Figs.~\ref{fig:XY-atlases}(b) and
  \ref{fig:XY-atlases}(c)), guaranteeing that any observable consequence of this deformation is strictly attributable to the underlying geometry alone.

\subsection{Measurable Correlation Signals}
\label{sec:xy-observables}
To demonstrate the geometry-driven dynamics along this isospectral family, we evaluate two specific physical observables chosen for their strategic positions in the atlas: a staggered mode-occupation difference and a joint-occupation correlation of two smooth modes. These configurations are purposefully selected because they react oppositely to the chiral propagation. Absorbing the site phase into $\widetilde c_j=\ii^{j-1}c_j$, we construct two disjoint smooth spatial envelopes, $f_0$ and $f_1$, of length $L$, each proportional to $\sin[\pi j/(L+1)]$, alongside a highly oscillatory staggered envelope $f_{\pi,j}=(-1)^jf_{0,j}$. For a sufficiently long chain ($N\ge2L$), the corresponding encoded modes and target observables are formally defined as:
\begin{equation}
 \begin{gathered}
 d_a=\sum_j f_{a,j}\widetilde c_j,\quad Q_a=I-2d_a^\dagger d_a,
 \quad a=0,1,\pi,\\
 A_2=Q_\pi,\quad A_4=Q_0Q_1,\quad
 \rho_{s,\pm}=2^{-N}(I\pm A_s).
 \end{gathered}
 \label{eq:main-XY-preparation}
\end{equation}
with $s=2,4$. Here, the parity-even operators $A_2$ and $A_4$ rigorously formulate the mode-occupation difference and the joint-occupation correlation, respectively, where the subscript $s$ indicates the number of constituent Majorana factors. Because $A_s^2=I$ and $\Tr A_s=0$, the resulting initial states $\rho_{s,\pm}$ are strictly positive and normalized, establishing perfectly well-defined dynamical readouts for our theoretical comparison.

The normalized dynamical contrast of these preparations,
\begin{equation}
E_{s,L}(n)=\Tr\left[A_s\Phi_N^n\frac{(\rho_{s,+}-\rho_{s,-})}{2}\right]
=2^{-N}\Tr[A_s\Phi_N^n(A_s)],
\label{eq:main-XY-readout}
\end{equation}
starts identically at one. Evaluating this contrast in the broad-packet limit reveals the exact analytical decay rates for both signals at any fixed cycle $n$:
\begin{equation}
\begin{gathered}
E_2(n)=\ee^{-2n(D+G)},\quad E_4(n)=\ee^{-4n(D-G)},\\
\frac{E_4(n)}{E_2(n)}=\ee^{n(6G-2D)}.
\end{gathered}
\label{eq:main-XY-signals}
\end{equation}
This exposes a striking dynamical inversion~\cite{SongYaoWang2019,Haga2021}. For $n\ge1$, the occupation-difference signal $E_2$ is larger when $G<D/3$, whereas $E_4$ dominates the joint-occupation signal when $G>D/3$, with the two perfectly intersecting at the critical threshold $G_c=D/3$. The retention order of these distinct physical correlations is therefore tunable by the coherent exchange $G$, a remarkable dynamical crossing that unfolds while the exact spectrum remains invariant. Exact finite-packet expressions are detailed in \suppref{app:readout-support}.

\subsection{Geometric Explanation}
\label{sec:xy-atlas-contacts}
The preceding analysis revealed a striking dynamical inversion. The physical origin of this unusual phenomenon lies entirely in the geometric deformation of the pseudospectral atlas~\cite{Kiorpelidis2025,ReichelTrefethen1992}. To reveal this mechanism, we classify the response families by their Majorana degree $r\ge1$ and phase charge $q=-r,-r+2,\ldots,r$. Using the logarithmic coordinates $x=-\ln|z|$ and $\phi=\arg z$, the domain for each family is rigorously established as:
\begin{equation}
 \begin{gathered}
 \delta_{r,q}(z)=\sqrt{(x-rD)^2+
       \dist(\phi-q\theta,2\pi\mathbb Z)^2},\\
 \Omega_{r,q}=\{z\ne0:\delta_{r,q}(z)\le rG\},\qquad
 \mathcal K=\{0,1\}\cup\bigcup_{r,q}\Omega_{r,q}.
 \end{gathered}
 \label{eq:main-XY-domains}
\end{equation}
Equation~\eqref{eq:main-XY-domains} dictates that while the domain centers $rD$ are anchored by dissipation, their radii $rG$ expand proportionally with the coherent exchange. Consequently, as $G$ increases, domains belonging to distinct families inflate and advance toward one another. This purely geometric expansion elegantly explains how tuning $G$ continuously reorganizes the atlas, ultimately driving the signal crossing without altering the exact eigenvalues.

For the neutral families ($q=0$), the staggered textures select the outer and inner real tips $\ee^{-r(D-G)}$ and $\ee^{-r(D+G)}$, respectively. Eq.~\eqref{eq:main-XY-signals} probes the inner tip of the $r=2$ domain and the outer tip of the $r=4$ domain. These two boundaries first meet when $\ee^{-2(D+G)}=\ee^{-4(D-G)}$, yielding the critical point:
\begin{equation}
G_c=D/3,\qquad z_c=\ee^{-8D/3}.
\label{eq:main-XY-contact}
\end{equation}
The dynamical threshold at which the two physical signals cross is therefore the threshold at which their geometric domains touch. Our chosen parameters $D=0.9$ and $\theta=\pi/2$ are illustrated in Fig.~\ref{fig:XY-atlases}. This signal crossing marks a genuine geometric transition in the complete atlas.

The contact point $z_c$ naturally carries the algebraic label characteristic of a domain interface. The full-space estimate yields a residual of $\eps_N(z_c)=\Theta(N^{-1})$, which translates to a validity time of $\tau_{\mathrm{ps},N}(z_c)=\Theta(NT)$. Conversely, the modal decay time evaluates to a system-size-independent constant, $\tau_{\mathrm{dec}}(z_c)=3T/(8D)$. This explicitly realizes the theoretical separation of timescales anticipated in Sec.~\ref{sec:atlas-meaning}: the validity time of the dynamical description diverges with the system size, whereas the intrinsic signal decay rate remains fixed.

In summary, the chiral XY model grounds our mathematical framework in a concrete physical setting, demonstrating the profound practical utility of the geometric analysis. The striking dynamical inversion of physical signals, which might otherwise appear as an anomalous experimental phenomenon, finds a natural explanation through the collision of geometric domains within the atlas. This correspondence vividly illustrates how real-world open-system dynamics can be directly decoded and predicted through the lens of underlying geometric structures.

\section{Conclusions and outlook}\label{sec:conclusions}

By classifying the closed unit disk according to the size scaling of pseudospectral residuals, we have organized the accuracy scales of nonnormal Floquet channels into a unified geometric framework~\cite{TrefethenEmbree2005,Kiorpelidis2025}. The resulting asymptotic pseudospectral atlas partitions the complex plane into domains of exponential, algebraic, and bounded scaling. This geometry systematically maps the validity regimes of approximate modes, and the boundaries between these domains determine the stability of different response families.

Our analysis of two exactly solvable models translates this geometric classification into measurable physical phenomena. In the dissipative shift chain, tuning a single parameter alters the radii and separation of the atlas domains. We identified two distinct geometric transitions. Both of these geometric events are directly detectable through spin-wave measurements. In the chiral XY model, we observed two preparable correlations exchanging their retention order by tuning the system along a parameter family. This signal inversion occurs precisely at an exposed contact between two atlas domains, which confirms that the geometry of the residual field can drive observable dynamics.

Constructing the atlas for a generically interacting open quantum system remains an open challenge~\cite{Lee2013,MoriKuwaharaSaito2016}. The theoretical classification developed here applies perfectly to such systems, but the computational cost is the bottleneck. Evaluating the asymptotic residual requires resolving the smallest singular value of a superoperator whose dimension grows exponentially with system size. For instance, the complexity of the optimal operator direction can grow alongside the system size, meaning that truncations to low-order correlations must be carefully justified~\cite{NachtergaeleSims2006,NachtergaeleSimsYoung2019}.

Despite these computational challenges, established numerical tools bring the atlas of interacting systems within reach. Matrix product operators and variational schemes can effectively represent one-dimensional driven dissipative dynamics~\cite{ZwolakVidal2004,Verstraete2004,Werner2016,Cui2015}, while iterative solvers can extract large-scale pseudospectra~\cite{WrightTrefethen2001,Hochstenbach2001}. A driven dissipative Ising chain is a natural first target for these tensor-network approaches. For such future numerical endeavors, the two exact models presented in this work provide essential benchmarks to validate any computational implementations.

\acknowledgments
This work was supported by the National Natural Science Foundation of China (Grants No. 12188101, No. 12274081), National Key Research and Development Program of China (Grant No. 2024YFA1409800).

Y.X. and H.S. contributed equally to this work.

\putbib[references]
\end{bibunit}

\clearpage
\begin{bibunit}[apsrev4-2]
\setcounter{section}{0}
\setcounter{equation}{0}
\setcounter{figure}{0}
\setcounter{table}{0}
\setcounter{proposition}{0}
\setcounter{lemma}{0}
\setcounter{assumption}{0}
\renewcommand{\thesection}{S\arabic{section}}
\renewcommand{\thesubsection}{\thesection.\arabic{subsection}}
\makeatletter\def\p@subsection{}\def\p@subsubsection{}\makeatother
\renewcommand{\theequation}{S\arabic{equation}}
\renewcommand{\thefigure}{S\arabic{figure}}
\renewcommand{\thetable}{S\arabic{table}}
\renewcommand{\theproposition}{S\arabic{proposition}}
\renewcommand{\thelemma}{S\arabic{lemma}}
\renewcommand{\theassumption}{S\arabic{assumption}}
\makeatletter
\let\author\SMsaved@author
\let\affiliation\SMsaved@affiliation
\let\thanks\SMsaved@thanks
\let\maketitle\SMsaved@maketitle
\let\@AF@join\SMsaved@AFjoin
\let\@AAC@list\@empty
\let\@AFF@list\@empty
\let\@AFG@list\@empty
\let\@address\@empty
\c@affil\z@
\c@footnote\z@
\c@mpfootnote\z@
\let\SMsaved@label\label
\let\label\@gobble
\makeatother
\title{Supplemental Material for ``Asymptotic Pseudospectra in Dissipative Floquet Quantum Systems: Geometric Structures and Observable Dynamics''}
\author{Yuncheng Xie}
\thanks{These authors are co-first authors and contributed equally to this work.}
\affiliation{Department of Physics, Fudan University, Shanghai 200433, China}
\affiliation{Key Laboratory of Computational Physical Sciences (Ministry of Education), State Key Laboratory of Surface Physics,
Fudan University, Shanghai 200433, China}

\author{Haozhe Shi}
\thanks{These authors are co-first authors and contributed equally to this work.}
\affiliation{Department of Physics, Fudan University, Shanghai 200433, China}
\affiliation{Key Laboratory of Computational Physical Sciences (Ministry of Education), State Key Laboratory of Surface Physics,
Fudan University, Shanghai 200433, China}

\author{Zhuocheng Ma}
\affiliation{State Key Laboratory of Artificial Microstructure and Mesoscopic Physics, School of Physics, Peking University, Beijing 100871, China}

\author{Weibin Chu}
\affiliation{Department of Physics, Fudan University, Shanghai 200433, China}
\affiliation{Key Laboratory of Computational Physical Sciences (Ministry of Education), State Key Laboratory of Surface Physics,
Fudan University, Shanghai 200433, China}

\author{Xin-Gao Gong}
\affiliation{Department of Physics, Fudan University, Shanghai 200433, China}
\affiliation{Key Laboratory of Computational Physical Sciences (Ministry of Education), State Key Laboratory of Surface Physics,
Fudan University, Shanghai 200433, China}
\date{\today}

\maketitle
\makeatletter
\let\label\SMsaved@label
\makeatother
\setcounter{page}{1}
\renewcommand{\thepage}{S\arabic{page}}
\raggedbottom
This Supplemental Material contains the proofs supporting the paper.
Sections S1--S3 treat the general framework and the two exactly solvable
models in the order of the main text.

\setcounter{tocdepth}{1}
\SMtableofcontents

\section{Residual framework}
\label{app:foundations}
Throughout we use the channel, norm, residual, and indicator definitions of
Sec.~\mainref{sec:theory}, and establish their limiting properties for the
fixed-period unital sequences considered here.

\subsection{Asymptotic Classes}
\label{app:residual-geometry}
Multiplying both input and output Hilbert--Schmidt (HS) inner products by the same
size-dependent scalar leaves superoperator singular values unchanged, whereas
passing to a different norm can change the asymptotic rates~\cite{TrefethenEmbree2005}. For a unital
completely positive trace-preserving (CPTP) map~\cite{Lindblad1976,GKS1976}, the Schwarz inequality gives the
HS contraction used in the main text~\cite{SzNagy2010},
\begin{equation}
 \norm{\Phi_NX}_{2,N}^2
 \le 2^{-N}\Tr\Phi_N(X^\dagger X)=\norm X_{2,N}^2.
 \label{eq:contraction}
\end{equation}

For $z\notin\spec\Phi_N$, singular-value decomposition gives the inverse-norm
identity~\cite{TrefethenEmbree2005,Trefethen1997,vonNeumann1951}
\begin{equation}
 \eps_N(z)=\smin(\Phi_N-z\id)
 =\norm{(\Phi_N-z\id)^{-1}}^{-1}.
 \label{eq:resolvent}
\end{equation}
The same quantity is the smallest unstructured perturbation that creates an
eigenvalue at $z$~\cite{TrefethenEmbree2005}:
\begin{equation}
 \eps_N(z)=\inf\{\norm E:z\in\spec(\Phi_N+E)\}.
 \label{eq:perturbation}
\end{equation}

If $(\Phi_N-z)X=\eps_NY$ with $\norm X=\norm Y=1$, the rank-one superoperator $E=-\eps_N\lvert Y\rangle\langle X\rvert$ makes $X$ an exact eigenoperator, while conversely $z\in\spec(\Phi_N+E)$ implies $\eps_N(z)\le\norm E$. These perturbations are \emph{unstructured}~\cite{TrefethenEmbree2005}: they need not preserve complete positivity, trace preservation, locality, or a fixed bath realization. We work on the closed unit disk, retaining the peripheral information needed for dynamics.

\begin{proposition}[Lipschitz residual geometry]\label{prop:lipschitz}
For every $N$,
\begin{equation}
 |\eps_N(z)-\eps_N(w)|\le|z-w|.
 \label{eq:lipschitz}
\end{equation}
For a uniformly Hilbert--Schmidt-bounded sequence of channels, $d_-(z)$ and $d_+(z)$ are also 1-Lipschitz. Consequently,
\begin{equation}
 K_{\rm all}=\{z:d_+(z)=0\},\qquad
 K_{\rm sub}=\{z:d_-(z)=0\}
 \label{eq:Ksets}
\end{equation}
are closed and $K_{\rm all}\subseteq K_{\rm sub}$.
\end{proposition}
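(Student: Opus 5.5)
The proof has three steps: a Lipschitz bound at fixed $N$, passage of that bound through $\limsup$ and $\liminf$, and closedness of the zero sets.

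\textbf{Fixed $N$.} We use the variational characterization $\eps_N(z)=\min_{X\ne0}\epsilon_N(z;X)$ from Eq.~\eqref{eq:intro-residual}. For any unit-norm $X$ (normalized HS norm), the triangle inequality gives
\begin{equation}
\norm{(\Phi_N-w\id)X}_{2,N}\le\norm{(\Phi_N-z\id)X}_{2,N}+|z-w|.
\end{equation}
Minimizing over $X$ on both sides yields $\eps_N(w)\le\eps_N(z)+|z-w|$. Exchanging $z$ and $w$ gives Eq.~\eqref{eq:lipschitz}. This step uses nothing beyond the fact that $\smin$ is a minimum of norms, and it also covers points of $\spec\Phi_N$, where $\eps_N=0$.

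\textbf{Passage to the limit.} Uniform HS-boundedness makes the quantities finite. By Eq.~\eqref{eq:contraction}, $\eps_N(z)\le\norm{\Phi_N}+|z|\le1+|z|$, so $d_\pm(z)$ are finite real numbers. Fix $z,w$. From $\eps_N(z)\le\eps_N(w)+|z-w|$ for every $N$, taking $\limsup$ gives $d_+(z)\le d_+(w)+|z-w|$. Taking $\liminf$ instead gives $d_-(z)\le d_-(w)+|z-w|$, using that $\liminf(a_N+c)=\liminf a_N+c$ for a constant $c$. Symmetrizing in $z$ and $w$ shows that $d_\pm$ are 1-Lipschitz. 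The only care needed is that the shift is a constant independent of $N$, so no subsequence issues arise. This is the step that needs the finiteness supplied by boundedness, and I expect it to be the only mildly delicate point.

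\textbf{Closedness and inclusion.} A Lipschitz function is continuous, so $K_{\rm all}=d_+^{-1}(\{0\})$ and $K_{\rm sub}=d_-^{-1}(\{0\})$ are preimages of a closed set, hence closed. Intersecting with $\overline\D$ preserves closedness if one restricts to the disk. Finally, $0\le d_-\le d_+$ pointwise, so $d_+(z)=0$ forces $d_-(z)=0$, which gives $K_{\rm all}\subseteq K_{\rm sub}$.
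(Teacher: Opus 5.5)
Your proof is correct and follows the paper's argument step for step: the triangle inequality for a unit operator, then the infimum and the exchange of $z,w$, then separate passage to $\limsup$ and $\liminf$ with boundedness keeping $d_\pm$ finite, and finally closedness from continuity together with $d_-\le d_+$. One minor point: your bound $\eps_N(z)\le 1+|z|$ uses the unital HS contraction, whereas for a general uniformly Hilbert--Schmidt-bounded sequence you would use $\eps_N(z)\le\sup_N\norm{\Phi_N}+|z|$ instead.
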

\begin{proof}
For a unit operator vector $X$,
$\norm{(\Phi_N-z)X}\le\norm{(\Phi_N-w)X}+|z-w|$.
Take the infimum, exchange $z,w$, and pass separately to the lower and upper limits, where uniform boundedness keeps these limiting functions finite on compact sets.
\end{proof}
Applying the antiunitary map $X\mapsto X^\dagger$ and using Hermiticity preservation gives $\eps_N(\bar z)=\eps_N(z)$, so the exact atlases are symmetric under reflection about the real axis.

Under the common-limit hypothesis in
Eq.~\maineqref{eq:main-common-limits}, the two sets in Eq.~\eqref{eq:Ksets}
coincide with $K$ in Eq.~\maineqref{eq:main-vanishing-set}, and their relative
complement in the closed disk is the finite-residual class $\{d>0\}$. Both sets
are defined directly from finite systems, and calling either one the spectrum of
a specific infinite-system channel requires in addition a common operator space,
an embedding, and a convergence theorem.

The extended-real limits in Eq.~\maineqref{eq:main-six} are logarithmic growth
orders, weaker than the dilation-uniform indices of regular-variation
theory~\cite{Bingham1987}:
\begin{align}
 d_-(z)>0 &: \text{ eventual separation from zero},\nonumber\\
 d_-(z)=0<d_+(z) &: \text{ vanishing on some subsequences},\nonumber\\
 d_+(z)=0 &: \eps_N(z)\longrightarrow0.
 \label{eq:first-tier}
\end{align}

The condition $d_+(z)>0$ alone does not classify a point as finite-residual, since a sequence alternating between $1/2$ and $\ee^{-N}$ satisfies it. If $0<\alpha_-(z) =\alpha_+(z)=\alpha(z)<\infty$, then directly from the definition
\begin{equation}
 \eps_N(z)=\exp[-N^{\alpha(z)+o(1)}].
 \label{eq:alpha-class}
\end{equation}
Finite positive upper and lower orders give, for each sufficiently small $\delta>0$ and all sufficiently large $N$,
\begin{equation}
 \ee^{-N^{\alpha_+(z)+\delta}}
 \le\eps_N(z)\le
 \ee^{-N^{\alpha_-(z)-\delta}}.
 \label{eq:alpha-bounds}
\end{equation}
When $0<p_-(z)=p_+(z)=p(z)<\infty$, the corresponding statement is
\begin{equation}
 \eps_N(z)=N^{-p(z)+o(1)},
 \label{eq:p-class}
\end{equation}
with analogous bounds $N^{-(p_+(z)+\delta)}\le\eps_N(z)\le N^{-(p_-(z)-\delta)}$ when both orders are finite and positive. The reduction of Eq.~\maineqref{eq:main-six} to common orders uses
the exact inversion $\eps_N(z)=\ee^{-F_N(z)}$. If
$\ln(1+F_N(z))/\ln N\to\alpha(z)>0$, then
$1+F_N(z)=N^{\alpha(z)+o(1)}$ and $F_N(z)\to\infty$, whereupon the inversion gives
Eq.~\eqref{eq:alpha-class}. If $F_N(z)/\ln N\to p(z)\in(0,\infty)$,
then $F_N(z)=p(z)\ln N+o(\ln N)$, giving Eq.~\eqref{eq:p-class}.
The common $d(z)$ limit follows directly from the first pair. Existence of a
finite nonzero limit of $F_N(z)/N^{\alpha(z)}$ is a separate requirement, which
the common growth order leaves open. Exact zeros are recorded separately through
a discrete flag specifying whether exact eigenvalue coincidences occur only
finitely often, infinitely often but not eventually always, or at every
sufficiently large size.

Coincidence of all three lower/upper pairs still falls short of an exhaustive three-class picture: both $\eps_N(z)=\ee^{-(\ln N)^2}$ and $\eps_N(z)=1/\ln N$ have coincident paired indicators, yet neither has a finite positive exponent in Eq.~\eqref{eq:alpha-class} or \eqref{eq:p-class}. Reducing the descriptor therefore uses one further assumption.

\begin{assumption}[Regular-scale trichotomy]\label{ass:trichotomy}
Away from separately marked exact spectral points, every $z$ under consideration shows exactly one of the following behaviors:
\begin{enumerate}
\item $\eps_N(z)\to d(z)>0$;
\item $\eps_N(z)=\exp[-\kappa(z)N^{\alpha(z)}+o(N^{\alpha(z)})]$, with $0<\alpha(z),\kappa(z)<\infty$;
\item $\eps_N(z)=N^{-p(z)+o(1)}$, with $0<p(z)<\infty$.
\end{enumerate}
When the algebraic set is used as an \emph{interface}, we further assume it has no two-dimensional interior and separates the exponential and finite-residual domains apart from isolated critical contacts.
\end{assumption}

Under Assumption~\ref{ass:trichotomy}, the classes are those of
Eq.~\maineqref{eq:main-classes}. An algebraic locus becomes an interface only
through the geometric condition above, since an isolated algebraic closure point
need not be a boundary point of the closed vanishing set.

The three pairs fix the atlas classes while retaining only selected asymptotic
scales, since the residuals $N^{-2}/\ln N$ and $N^{-2}/\ln\ln N$ share
one descriptor and $\alpha(z)=1$ does not distinguish $\ee^{-N}$ from
$\ee^{-N\ln N}$. A quantitative exponential rate is therefore extra
information beyond the classification indicators, which we state model by
model. At an extensive exponential point we write
\begin{equation}
 I(z)=\lim_{N\to\infty}\frac{-\ln\eps_N(z)}N,
 \qquad \eps_N(z)=\ee^{-NI(z)+o(N)},
 \label{eq:extensive-rate}
\end{equation}
a limit the dissipative shift chain below attains with a closed expression for
$I(z)$. Both models satisfy the common-limit hypothesis at every multiplier, so
a single descriptor suffices throughout.

\subsection{Spectral Separation}
\label{sec:riesz}
Isolation of a true spectral cluster uses a contour projection that groups
generalized eigenspaces~\cite{Kato1995,Stewart1973}. For a contour $\Gamma\subset\Cclass$ enclosing a true
spectral cluster of $\Phi_N$, the Riesz projection is
\begin{equation}
 P_{A,N}=\frac1{2\pi\ii}\oint_\Gamma(zI-\Phi_N)^{-1}\,\dd z,
 \label{eq:riesz-projection}
\end{equation}
and the certificate of Sec.~\mainref{sec:atlas-meaning} is the following bound,
uniform in $N$ because the margin is.

\begin{proposition}[Uniform contour separation]\label{prop:riesz}
Let $\Gamma$ be a fixed, positively oriented, rectifiable simple closed contour with length $\ell_\Gamma$, and suppose that, for all sufficiently large $N$,
\[\inf_{z\in\Gamma}\eps_N(z)\ge\delta>0.\]
Then Eq.~\eqref{eq:riesz-projection} obeys
\begin{equation}
 \norm{P_{A,N}}\le\frac{\ell_\Gamma}{2\pi\delta}.
 \label{eq:riesz-bound}
\end{equation}
If $\widetilde\Phi_N=\Phi_N+E_N$, with $\norm{E_N}\le\eta<\delta$, the same contour defines $\widetilde P_{A,N}$ and
\begin{equation}
 \norm{\widetilde P_{A,N}-P_{A,N}}
 \le\frac{\ell_\Gamma}{2\pi}
       \frac{\eta}{\delta(\delta-\eta)}.
 \label{eq:riesz-perturb}
\end{equation}
\end{proposition}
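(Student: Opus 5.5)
The plan is to control the resolvent pointwise on $\Gamma$ through the inverse-norm identity \eqref{eq:resolvent}, and then integrate along the contour. Three preliminary facts are needed.

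First, since $\inf_{z\in\Gamma}\eps_N(z)\ge\delta>0$, no point of $\Gamma$ lies in $\spec\Phi_N$. Hence $(zI-\Phi_N)^{-1}$ exists on $\Gamma$, and \eqref{eq:resolvent} gives
\[
\norm{(zI-\Phi_N)^{-1}}=\eps_N(z)^{-1}\le\delta^{-1}.
\]
The sign convention $zI-\Phi_N$ versus $\Phi_N-zI$ does not affect the norm.

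Second, the resolvent is continuous in $z$ on $\Gamma$. This is standard for finite-dimensional operators away from the spectrum, and it also follows from the Lipschitz property of Proposition~\ref{prop:lipschitz}. The contour integral \eqref{eq:riesz-projection} is therefore well defined as a Riemann--Stieltjes integral over a rectifiable curve.

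Third, the norm of a contour integral is bounded by the length times the supremum of the integrand:
\[
\norm{P_{A,N}}\le\frac{1}{2\pi}\,\ell_\Gamma\sup_{z\in\Gamma}\norm{(zI-\Phi_N)^{-1}}\le\frac{\ell_\Gamma}{2\pi\delta}.
\]
This is \eqref{eq:riesz-bound}. I will note that the bound holds for every sufficiently large $N$, with $\delta$ and $\ell_\Gamma$ independent of $N$.

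For the perturbed channel I will first show that the same contour still avoids the spectrum of $\widetilde\Phi_N=\Phi_N+E_N$. One route is the perturbation characterization \eqref{eq:perturbation}: if some $z\in\Gamma$ lay in $\spec(\Phi_N+E_N)$, then $\eps_N(z)\le\norm{E_N}\le\eta<\delta$, a contradiction. A second, quantitative route is the singular-value Weyl-type inequality
\[
\smin(\widetilde\Phi_N-z)\ge\smin(\Phi_N-z)-\norm{E_N}\ge\delta-\eta,
\]
which follows exactly as in the proof of Proposition~\ref{prop:lipschitz}, by the triangle inequality applied to unit vectors. It gives $\norm{(zI-\widetilde\Phi_N)^{-1}}\le(\delta-\eta)^{-1}$ on $\Gamma$, so $\widetilde P_{A,N}$ is well defined by the same formula.

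Next I will use the second resolvent identity,
\[
(zI-\widetilde\Phi_N)^{-1}-(zI-\Phi_N)^{-1}=(zI-\widetilde\Phi_N)^{-1}E_N(zI-\Phi_N)^{-1}.
\]
Taking norms, the integrand is bounded by $\eta/[\delta(\delta-\eta)]$ uniformly on $\Gamma$. Subtracting the two contour integrals and applying the length-times-supremum bound once more yields \eqref{eq:riesz-perturb}.

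The main obstacle is not analytic. It is making sure the perturbed object is legitimately defined, meaning $\Gamma$ stays in the resolvent set of $\widetilde\Phi_N$, before the identity is applied; the singular-value bound above settles this. I will also remark that $E_N$ is unstructured, so $\widetilde\Phi_N$ need not be CPTP. Nothing in the argument uses positivity or contractivity, only finite-dimensionality and the Hilbert--Schmidt operator norm. The enclosed spectral set of $\widetilde\Phi_N$ is whatever lies inside $\Gamma$, and the argument does not need to identify it.
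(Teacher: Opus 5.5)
Your proposal is correct and follows the paper's proof: bound the resolvent by $\delta^{-1}$ on $\Gamma$ through the inverse-norm identity and integrate, then apply the resolvent identity with a $(\delta-\eta)^{-1}$ bound on the perturbed resolvent. The only difference is cosmetic: the paper gets the perturbed bound from a Neumann series, whereas your singular-value triangle inequality gives the same estimate directly. The paper also adds a remark, outside the stated hypothesis, on how a common $\delta$ follows from pointwise $d_-(z)>0$ via Lipschitz continuity and a finite cover.
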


\begin{proof}
The resolvent norm on $\Gamma$ is at most $\delta^{-1}$, and integrating it gives Eq.~\eqref{eq:riesz-bound}. A Neumann series bounds the perturbed resolvent by $(\delta-\eta)^{-1}$, whereupon the resolvent identity gives Eq.~\eqref{eq:riesz-perturb}. If $d_-(z)>0$ at every point of a fixed compact $\Gamma$, then Proposition~\ref{prop:lipschitz}, a finite cover, and the definition of $\liminf$ supply a common $\delta$.
\end{proof}

The projection collects the generalized eigenspaces of the \emph{true} eigenvalues inside $\Gamma$, so a pseudomode is not an additional Riesz eigenstate and an empty spectral interior gives $P_{A,N}=0$. Disjoint clusters satisfy $P_AP_B=0$, though generally $P_A^\dagger\ne P_A$. If the cluster excludes $1$, trace preservation implies
\begin{equation}
 \Tr(P_AX)=0\quad\text{for every }X,
 \label{eq:riesz-tracezero}
\end{equation}
since $\Tr[(z-\Phi)^{-1}X]=\Tr X/(z-1)$, whose contour integral vanishes. A nonzero $P_A\rho$ is therefore not a density matrix, and $P_A$ need not be positive or completely positive. Its contribution to a measured signal, $\Tr[O\Phi_N^nP_{A,N}\delta\rho]$, requires suitable overlaps with both the preparation $\delta\rho$ and the observable $O$ \cite{Macieszczak2016}.

Merging vanishing-residual domains withdraws this uniform separation certificate. Section~\mainref{sec:riesz-toy} identifies the quantity that then becomes singular: the susceptibility of a cluster projection to cross-cluster perturbations.

\section{Dissipative shift chain}
\label{app:shift-proof}
This section proves the complete-channel claims for
Eq.~\maineqref{eq:main-shift-channel}.

\subsection{Exact Reduction}
\label{app:shift-reduction}\label{sec:SSH}
A collision realization~\cite{Ciccarello2022} uses one maximally mixed ancillary spin per site. Conditional on activation, first swap system spin $j$ with ancilla $j$, then swap ancilla $j$ with system spin $j+1$ for $j<N$, and finally discard all ancillas. Both steps are layers of disjoint local swaps on the enlarged system, and exchange interactions generate each swap:
\begin{equation}
 \ee^{-\ii\frac\pi4(\sigma^x\otimes\sigma^x+
 \sigma^y\otimes\sigma^y+\sigma^z\otimes\sigma^z)}
 =\ee^{-\ii\pi/4}\,\mathrm{SWAP}.
 \label{eq:swap}
\end{equation}

The uniform rotation follows every period, and one shared Bernoulli variable
selects activation, giving the stated repeated-interaction
channel~\cite{Ciccarello2022}. Iterating the reset $N$ times gives
\begin{equation}
 \cE_N^N(X)=2^{-N}\id\,\Tr X.
 \label{eq:reset-nilpotent}
\end{equation}

An orthonormal one-spin operator basis is
\begin{equation}
 \id,\quad X=\sigma^x,\quad
 Q_\pm=\frac{\sigma^y\pm\ii\sigma^z}{\sqrt2},
 \label{eq:toy-basis}
\end{equation}
with rotation phases $1,1,-\ii,+\ii$ respectively. In any nonidentity tensor-product word, remove the leading and trailing identities, after which the remaining core has width $\ell$ and $L=N-\ell+1$ allowed positions in the chain. Its translates are mutually orthogonal, and on their span $\cE_N$ acts as the nilpotent shift
\begin{equation}
 S_Le_j=e_{j+1}\ (j<L),\qquad S_Le_L=0.
 \label{eq:shift}
\end{equation}
All translates acquire the same rotation phase $\zeta$ with $\zeta^4=1$, so the corresponding block is
\begin{equation}
 \Phi_{L,\zeta}=\zeta(a\id_L+gS_L).
 \label{eq:toy-block}
\end{equation}
Every nonidentity word belongs to exactly one such orbit, and the identity forms a separate one-dimensional block with eigenvalue one. The decomposition is therefore orthogonal and exhausts the full $4^N$-dimensional space.

Define
\begin{equation}
 \delta_L(\rho)=\smin(\rho\id_L-gS_L),\qquad \rho\ge0.
 \label{eq:delta-scalar}
\end{equation}
Diagonal phase unitaries remove the phases of the diagonal and subdiagonal entries of $\Phi_{L,\zeta}-z$, and $\delta_{L+1}(\rho)\le\delta_L(\rho)$ because the last $L$ coordinates form an invariant subspace of $S_{L+1}$ carrying $S_L$, so only the longest orbit of each phase need be retained. Cores of width one realize $\zeta=1,\ii,-\ii$ and width two is needed for $\zeta=-1$, so for $N\ge2$
\begin{align}
 \eps_N(z)=\min\bigl\{&|1-z|,\delta_N(|z-a|),\delta_N(|z-\ii a|),\nonumber\\
 &\delta_{N-1}(|z+a|),\delta_N(|z+\ii a|)\bigr\}.
 \label{eq:toy-full}
\end{align}
In particular,
\begin{equation}
 \spec\Phi_N=\{1,a,\ii a,-a,-\ii a\}.
 \label{eq:toy-spectrum}
\end{equation}
Five exact eigenvalues therefore generate the extended regions derived below,
and the full minimum reduces the atlas problem to a scalar estimate.

\begin{lemma}[Scalar shift residual]\label{lem:scalar}
For fixed $g>0$, the following hold as $L\to\infty$:
\begin{align}
 (g-\rho)(\rho/g)^L&\le\delta_L(\rho)
 \le g(\rho/g)^L,&&0<\rho<g,\label{eq:scalar-exp}\\
 \delta_L(g)&=2g\sin\frac{\pi}{4L+2}
 \sim\frac{\pi g}{2L},\label{eq:scalar-boundary}\\
 \delta_L(\rho)&\longrightarrow\rho-g,&&\rho>g.
 \label{eq:scalar-outside}
\end{align}
At $\rho=0$, the residual is exactly zero.
\end{lemma}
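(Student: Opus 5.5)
The plan is to treat $A_L(\rho)=\rho\id_L-gS_L$ as a lower-bidiagonal Toeplitz matrix, with $\rho$ on the diagonal and $-g$ on the subdiagonal. For the lower bounds we use $\delta_L=\norm{A_L^{-1}}^{-1}$ from Eq.~\eqref{eq:resolvent}. For the upper bounds we insert explicit trial vectors. In every case, $\delta_L\le\norm{A_Lx}/\norm x$ for any $x\neq0$.

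\emph{Exponential regime $0<\rho<g$.} For the upper bound, take $x_j=(g/\rho)^{j-1}$. Then $(A_Lx)_j=\rho x_j-gx_{j-1}=0$ for every $j\ge2$, so the only surviving component is $(A_Lx)_1=\rho$. Since $\norm x\ge x_L=(g/\rho)^{L-1}$, we obtain $\delta_L\le\rho(\rho/g)^{L-1}=g(\rho/g)^L$.

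For the lower bound, invert using nilpotency:
\[
A_L^{-1}=\rho^{-1}\sum_{k=0}^{L-1}(g/\rho)^kS_L^k.
\]
Since $\norm{S_L}\le1$, this gives
\[
\norm{A_L^{-1}}\le\rho^{-1}\frac{(g/\rho)^L-1}{g/\rho-1}\le\frac{(g/\rho)^L}{g-\rho}.
\]
Inverting yields $\delta_L\ge(g-\rho)(\rho/g)^L$, which is Eq.~\eqref{eq:scalar-exp}.

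\emph{Critical case $\rho=g$.} This is the step where an exact computation is needed, and I expect it to be the main obstacle. Here $A_L=-g(\id_L-S_L)$. From $S_L^\dagger S_L=\mathrm{diag}(1,\dots,1,0)$ we get
\[
(\id-S_L)^\dagger(\id-S_L)=\mathrm{tridiag}(-1,2,-1),
\]
with the last diagonal entry replaced by $1$. This is a discrete Laplacian with a Dirichlet condition at site $0$ and a reflecting (Neumann-type) condition at the far end. Use the ansatz $v_j=\sin(j\theta)$. The interior equations give eigenvalue $4\sin^2(\theta/2)$, and the last-row condition $v_{L+1}=v_L$ forces $(2L+1)\theta=(2k-1)\pi$. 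The smallest eigenvalue therefore has $\theta=\pi/(2L+1)$, which gives $\delta_L(g)=2g\sin\frac{\pi}{4L+2}\sim\pi g/(2L)$. I will double-check that the $L$ values of $k$ give distinct eigenvalues, so that the spectrum is exhausted and $k=1$ is indeed the minimum.

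\emph{Exterior regime $\rho>g$.} The lower bound follows from Weyl's inequality for singular values: $\delta_L\ge\rho-g\norm{S_L}\ge\rho-g$. For the matching upper bound, use a smooth envelope $x_j=\sin(\pi j/(L+1))$, i.e.\ a symbol evaluated at $q=0$, where $|\rho-g\ee^{\ii q}|$ attains its minimum $\rho-g$. Then
\[
(A_Lx)_j=(\rho-g)x_j+g(x_j-x_{j-1}),
\]
and the difference term has norm $O(L^{-1})\norm x$ because the envelope vanishes at both ends. Hence $\delta_L\le\rho-g+O(1/L)$, and the limit in Eq.~\eqref{eq:scalar-outside} follows.

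\emph{Case $\rho=0$.} Here $A_L=-gS_L$ annihilates $e_L$, so $\delta_L(0)=0$ exactly.
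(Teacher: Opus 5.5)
Your proposal is correct and follows essentially the same route as the paper. It uses the nilpotent Neumann series with the triangle inequality for the lower bound when $0<\rho<g$, and the first column of $A_L^{-1}$ (your trial vector $x_j=(g/\rho)^{j-1}$) for the upper bound, which is equivalent to the paper's bottom-left-entry bound; at $\rho=g$ it uses the modified-endpoint second-difference eigenproblem, and for $\rho>g$ it combines the $(\rho-g)$ lower bound with a smooth trial vector. The differences are cosmetic: the paper uses the normalized constant vector instead of your sine envelope for $\rho>g$, and at $\rho=g$ one has $A_L=+g(\id_L-S_L)$ rather than $-g(\id_L-S_L)$, which does not affect singular values.
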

\begin{proof}
For $\rho>0$, nilpotency gives
\begin{equation}
 (\rho\id-gS_L)^{-1}
 =\rho^{-1}\sum_{j=0}^{L-1}(g/\rho)^jS_L^j.
 \label{eq:scalar-inverse}
\end{equation}
For $\rho<g$ the triangle inequality bounds its norm above by $(g-\rho)^{-1}(g/\rho)^L$, while its bottom-left entry has modulus $g^{L-1}/\rho^L$, yielding the opposite bound and hence Eq.~\eqref{eq:scalar-exp}. At $\rho=g$ the squared singular-value problem is a second-difference equation with one modified endpoint, whose eigenvalues $4g^2\sin^2[(2j-1)\pi/(4L+2)]$, $1\le j\le L$, prove Eq.~\eqref{eq:scalar-boundary}. For $\rho>g$, $\norm{(\rho\id-gS_L)v}\ge(\rho-g)\norm v$, and the normalized constant vector gives an upper bound whose square is $(\rho-g)^2+[\rho^2-(\rho-g)^2]/L$, proving Eq.~\eqref{eq:scalar-outside}.
\end{proof}
Combining the full-space minimum with Lemma~\ref{lem:scalar} proves
Eq.~\maineqref{eq:main-shift-atlas}. In the exponential interior the largest
orbit lengths $N$ or $N-1$ give an extensive residual rate,
\begin{equation}
 -\ln\eps_N(z)=NI(z)+O(1),\qquad
 I(z)=\ln\frac{g}{\rho(z)},
 \label{eq:toy-rate}
\end{equation}
while on an exposed arc all relevant longest blocks
share the leading prefactor $\pi g/(2N)$. The
stationary block cannot produce a smaller limit there, since
$\rho(z)-g\le|z-a|-g\le|z-1|$.

The exponential region is the union of the four open disks of radius $g$ centered at $a\zeta$, and only the exposed circle arcs carry the algebraic label, since an arc covered by another open disk stays exponentially small. Writing $z=x+\ii y$, the exposed candidate equation is
\begin{equation}
 x^2+y^2+(1-g)^2-g^2
 =2(1-g)\max(|x|,|y|).
 \label{eq:toy-boundary}
\end{equation}
Its left side minus right side is negative in $\Iclass$ and positive in $\Cclass$.

Adjacent centers are separated by $\sqrt2(1-g)$, so first contact occurs at
\begin{equation}
 g_{\rm merge}=\sqrt2-1.
 \label{eq:toy-merge}
\end{equation}
For $\sqrt2-1<g<1/2$ neighboring disks overlap while opposite disks do not, and the union encloses a central finite-residual hole. At
\begin{equation}
 g_{\rm hole}=\frac12,
 \label{eq:toy-hole}
\end{equation}
the hole closes, and for $g>1/2$ every disk contains the origin, making the union star-shaped about it. At the origin,
\begin{equation}
 \begin{cases}
 d(0)=1-2g,&g<1/2,\\
 \eps_N(0)\sim\pi/(4N),&g=1/2,\\
 I(0)=\ln[g/(1-g)],&g>1/2.
 \end{cases}
 \label{eq:toy-origin}
\end{equation}
The isolated algebraic point at $g=1/2$ is not a boundary point of the closed disk union, which illustrates the geometric qualification in Assumption~\ref{ass:trichotomy}.

The two contacts correspond, as derived next, to simultaneous criticality in
different numbers of auxiliary Su--Schrieffer--Heeger (SSH) families. For a fixed
shift sector, let
\begin{equation}
 A_{L,\zeta}(z)=\zeta(a\id_L+gS_L)-z\id_L,
 \qquad \rho_\zeta=|z-a\zeta|.
 \label{eq:SSH-A}
\end{equation}
Its Hermitian doubling is
\begin{equation}
 \mathbb H_{L,\zeta}(z)=
 \begin{pmatrix}0&A_{L,\zeta}(z)\\A_{L,\zeta}(z)^\dagger&0\end{pmatrix},
 \qquad
 \Gamma=\begin{pmatrix}\id&0\\0&-\id\end{pmatrix}.
 \label{eq:SSH-double}
\end{equation}
It obeys $\Gamma\mathbb H\Gamma=-\mathbb H$ with eigenvalues $\pm s_j(A_{L,\zeta})$, so a left/right singular-vector pair of the original residual is an eigenvector pair of a chiral Hermitian Hamiltonian, a doubling with precedents in the non-Hermitian and directional-amplification literature \cite{Porras2019,Gong2018}.

The correspondence is explicit: phase unitaries make the diagonal and subdiagonal
couplings real, and interleaving the two sublattices identifies
Eq.~\eqref{eq:SSH-double} with a Su--Schrieffer--Heeger chain \cite{SSH1979} of
intracell and intercell couplings
\begin{equation}
 t_1=\rho_\zeta,\qquad t_2=g.
 \label{eq:SSH-couplings}
\end{equation}
A shift of Bloch momentum gives the bands
\begin{equation}
 E_\pm(k)=\pm\sqrt{\rho_\zeta^2+g^2+2\rho_\zeta g\cos k}.
 \label{eq:SSH-bands}
\end{equation}
The distance of the bulk bands from zero is
\begin{equation}
 \Delta_{\rm aux}(z,g)=|\rho_\zeta-g|.
 \label{eq:SSH-gap}
\end{equation}
This vanishes exactly on the residual circle $|z-a\zeta|=g$. With the convention $S\leftrightarrow\beta$, the point-gap winding is
\begin{equation}
 \nu_\zeta(z)=\frac1{2\pi\ii}\int_0^{2\pi}
 \partial_k\ln[\zeta(a+g\ee^{\ii k})-z]\,\dd k
 =\begin{cases}1,&\rho_\zeta<g,\\0,&\rho_\zeta>g,\end{cases}
 \label{eq:SSH-winding}
\end{equation}
and is undefined when the circle passes through $z$.

For $0<\rho_\zeta<g$ the semi-infinite zero-mode recurrence has amplitude ratio $-\rho_\zeta/g$ per cell, so
\begin{equation}
 \xi_\zeta^{-1}(z)=\ln(g/\rho_\zeta),
 \qquad
 \eps_L^{(\zeta)}(z)\asymp\ee^{-L/\xi_\zeta}.
 \label{eq:SSH-xi}
\end{equation}
In a finite chain the two opposite-edge singular modes hybridize and split by an exponentially small energy, becoming exact zero modes only at the dimerized spectral center~\cite{SSH1979,Kunst2018}. At $\rho_\zeta=g$ their localization length diverges and the exact finite-section result is $2g\sin[\pi/(4L+2)]$. The $\Iclass/\pclass/\Cclass$ distinction thus acquires an exact auxiliary reading: edge-mode splitting in the nontrivial phase, a critical extended mode with $p=1$, and a trivial gapped singular-value problem.

The first atlas contact is located by following the midpoint between adjacent
centers, $z_m(g)=a\zeta(1+\ii)/2$. Direct substitution gives
\begin{align}
 \rho_\zeta(z_m)=\rho_{\ii\zeta}(z_m)&=a/\sqrt2,\nonumber\\
 \rho_{-\zeta}(z_m)=\rho_{-\ii\zeta}(z_m)&=a\sqrt{5/2}.
 \label{eq:SSH-adjacent-contact}
\end{align}
The two adjacent-family bulk gaps are $|a/\sqrt2-g|$ and close at
$a/\sqrt2=g$, that is at $g=g_1=\sqrt2-1$, where the other two gaps remain
positive. Along this midpoint path the adjacent windings are both zero below
$g_1$ and both one above it, with inverse localization lengths
$\ln(\sqrt2g/a)$ on the nontrivial side. The four choices of $\zeta$ give four
simultaneous pairwise contacts, with no third family covering their algebraic
residual, and at the origin the auxiliary gap is still $a-g=3-2\sqrt2>0$ when
$g=g_1$.

The second contact occurs at $z=0$, where every family has $\rho_\zeta=a$, and
Eqs.~\eqref{eq:SSH-gap} and \eqref{eq:SSH-winding} give
\begin{equation}
 \begin{gathered}
  \Delta_{\rm aux}(0,g)=|1-2g|,\qquad
  \nu_\zeta(0)=\begin{cases}0,&g<g_2,\\1,&g>g_2,\end{cases}\\
  \xi_\zeta^{-1}(0)=\ln(g/a),\qquad g>g_2=\tfrac12.
 \end{gathered}
 \label{eq:SSH-origin-contact}
\end{equation}
All four windings therefore change together at the hole closure. At either type
of contact the critical blocks have lengths $N$ or $N-1$, at least one of them
$N$, so
Eq.~\eqref{eq:scalar-boundary} and the full-space minimum give
$\eps_N\sim\pi g/(2N)$. At $g_2$ in particular
$\eps_N(0)=\sin[\pi/(4N+2)]$, consistent with Eq.~\eqref{eq:toy-origin}.

The same three classes have an equivalent reading through the root moduli
$\beta_\zeta(z)=(z/\zeta-a)/g$ of the four-branch symbol
$f_z(\beta)=z^4-(a+g\beta)^4$. The conditions
$|\beta_\zeta|\lessgtr1$ that classify the boundary mismatch of the measurable
texture in Eq.~\maineqref{eq:main-shift-beta} are exactly the root-modulus
conditions of $f_z$, and the atlas boundaries are the corresponding amoeba
walls \cite{Wang2024,Kaneshiro2026}. The rate of Eq.~\eqref{eq:toy-rate} is
$I(z)=-\min_\zeta\ln|\beta_\zeta(z)|$ wherever this is positive, so the root
count carries the integer winding while the radial root positions carry the
continuous rate.

\subsection{Spin-wave Readouts}
\label{app:shift-readouts}
The orthogonal shift orbits of Eq.~\eqref{eq:toy-block} yield the finite-chain
texture propagation and the four response families measured in
Sec.~\mainref{sec:shift-observables}.

Prepare
\begin{equation}
 \rho_0=\bigotimes_{j=1}^N\frac{I+t_j\sigma_j^x}{2},
 \qquad t_j=m\cos(k_0j),\quad |m|<1.
 \label{eq:prepared-texture}
\end{equation}
For $f_j(n)=\Tr[\sigma_j^x\Phi_N^n\rho_0]$, partial tracing and reinjection give the exact recursion
\begin{equation}
 f_j(n+1)=af_j(n)+gf_{j-1}(n),\qquad f_0(n)=0.
 \label{eq:magnetization-recursion}
\end{equation}
This is exact at finite $m$, with no linearization of the density operator. Explicitly,
\begin{equation}
 f_j(n)=\sum_{s=0}^{\min(n,j-1)}
 \binom{n}{s}a^{n-s}g^s f_{j-s}(0).
 \label{eq:magnetization-binomial}
\end{equation}
The propagation distance before truncation has mean $gn$ and variance $ng(1-g)$, and a spatially periodic bulk input $f_j\propto\ee^{-\ii qj}$ has response multiplier $\lambda_1(q)=a+g\ee^{\ii q}$. For a local $n$-cycle readout the bulk formula is exact at $j>n$, and cosine and sine preparations reconstruct the complex amplitude from two real experiments.

Single-site $yz$ helices and a two-site quadrupolar observable complete the four response families
\begin{equation}
 \lambda_\zeta(q)=\zeta(a+g\ee^{\ii q}),
 \qquad \zeta\in\{1,\ii,-1,-\ii\}.
 \label{eq:physical-response-circles}
\end{equation}
For instance, $\langle\sigma_j^y\rangle=m\cos(qj)$ and $\langle\sigma_j^z\rangle=-m\sin(qj)$ give the $+\ii$ multiplier for the measured combination $\langle\sigma_j^y\rangle+\ii\langle\sigma_j^z\rangle$, while the Hermitian two-site operator $(\sigma_j^y\sigma_{j+1}^y-\sigma_j^z\sigma_{j+1}^z)/\sqrt2$ changes sign under the pulse and supplies the $-1$ family.

For $g<1/2$ the phase range of the first circle is $[-\alpha,\alpha]$ with $\alpha=\arcsin[g/(1-g)]$, so the four ranges leave a phase-coverage gap
\begin{equation}
 T\Delta\omega(g)=
 [\pi/2-2\arcsin(g/(1-g))]_+,
 \label{eq:response-phase-gap}
\end{equation}
which closes exactly at $g_1=\sqrt2-1$, where the four atlas lobes first connect. The excluded band is a gap in the allowed \emph{phases of bulk response multipliers}, not in real-frequency absorption or the relaxation gap. The phase winding of one branch,
\begin{equation}
 \nu(g)=\frac1{2\pi\ii}\int_{-\pi}^{\pi}
          \partial_q\ln[a+g\ee^{\ii q}]\,\dd q,
 \label{eq:response-winding}
\end{equation}
changes from zero to one at $g_2=1/2$ and is undefined there because $\lambda_1(\pi)=0$. It is a point-gap invariant of the response family, and the finite-chain eigenvalue set stays fixed \cite{Gong2018}. The identity $|\lambda_\zeta(q;g)|^2=1-4g(1-g)\sin^2(q/2)$ shows in addition that $g$ and $1-g$ share modal envelopes while differing in response winding.

An economical test uses $q_*=3\pi/4$: reconstruct $\lambda_x(q_*)$ from phase-cycled $x$ textures and $\lambda_\perp(-q_*)=\ii\lambda_x(-q_*)$ from a $yz$ helix. Their difference is
\begin{equation}
 \lambda_x(q_*)-\lambda_\perp(-q_*)
 =[1-(1+\sqrt2)g](1-\ii),
 \label{eq:shift-response-difference}
\end{equation}
so the normalized phase-resolved one-cycle contrast
\begin{equation}
 \mathcal D_{x\perp}(g)=1-(1+\sqrt2)g
 \label{eq:shift-crossing-readout}
\end{equation}
changes sign at the first merger. At the crossing the two responses coincide at $(2-\sqrt2)(1+\ii)/2$ with the finite decay time $-T/\ln(\sqrt2-1)$, so the protocol identifies the coalescence of two \emph{response-family endpoints}.

For the second threshold, prepare a staggered texture $f_j(0)=m(-1)^j$. The bulk normalized staggered magnetization is
\begin{equation}
 M_\pi(n)=(1-2g)^n.
 \label{eq:staggered-observable}
\end{equation}
For $g<1/2$ the pattern decays without alternating sign, at $g=1/2$ its bulk component disappears in one step, and for $g>1/2$ it alternates while decaying, in each case through cancellation between the unshifted and one-site-shifted textures. The bulk average excludes the leftmost site, whose inclusion would add $g/N$.

\subsection{Cross-family Conditioning}
\label{app:cross-family}
This section proves the perturbation scales quoted in
Sec.~\mainref{sec:riesz-toy}, where the merger of two disks costs the response
families their isolation. On the transient space, put $\mathcal U=\Ad_{R_N}$.
The four unperturbed phase projectors are independent of $g$ and have unit norm,
\begin{equation}
 P_\zeta=\frac14\sum_{k=0}^3\zeta^{-k}\mathcal U^k,
 \qquad\norm{P_\zeta}=1,
 \label{eq:fixed-phase-projectors}
\end{equation}
so what becomes singular at a merger is not the norm of these projections.

A block construction quantifies how a perturbation coupling two response families affects their classification~\cite{Stewart1973,Kato1995}. For blocks $A,B$ with disjoint spectra, take
\begin{align}
 T(u)&=\begin{pmatrix}A&uE\\0&B\end{pmatrix},\nonumber\\
 P_A(u)&=\begin{pmatrix}I&uX\\0&0\end{pmatrix},\qquad AX-XB=E.
 \label{eq:triangular-riesz}
\end{align}
The inverse Sylvester operator, standard in invariant-subspace perturbation theory \cite{Stewart1973,BhatiaRosenthal1997}, controls the susceptibility:
\begin{equation}
 \chi_{A,B}^{(\nu)}
 =\sup_{E\ne0}\frac{\norm{(X\mapsto AX-XB)^{-1}E}_\nu}
                        {\norm E_\nu},
 \label{eq:sylvester-susceptibility}
\end{equation}
where $\nu$ denotes the matrix spectral or Frobenius norm. The two differ only in finite-size prefactors, so the leading classifications below hold for either.

\begin{proposition}[Cross-family conditioning at a disk contact]\label{prop:sylvester}
Take equal-length translation blocks
$A=\zeta_1(aI_L+gS_L)$ and $B=\zeta_2(aI_L+gS_L)$, and let $d=a|\zeta_1-\zeta_2|>0$. Then
\begin{equation}
\begin{array}{ll}
 d>2g:&\displaystyle\chi_{A,B}^{(\nu)}\longrightarrow(d-2g)^{-1},\\[1mm]
 d=2g:&\displaystyle\chi_{A,B}^{(\nu)}=\Theta(L/g),\\[1mm]
 d<2g:&\displaystyle\ln\chi_{A,B}^{(\nu)}
       =2L\ln(2g/d)+O(\ln L).
\end{array}
\label{eq:sylvester-scales}
\end{equation}
\end{proposition}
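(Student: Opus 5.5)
The plan is to turn the Sylvester operator $\mathcal S(X)=AX-XB$ into a Kronecker sum of two copies of the scalar block of Lemma~\ref{lem:scalar}. Then the three regimes $d\gtrless2g$ become the three regimes $\rho=d/2\gtrless g$ already proved there.

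\textbf{Reduction.} Vectorize $X$ with entries $X_{ij}$, $1\le i,j\le L$, using the Frobenius norm (the spectral-norm version differs only by the finite-size prefactors mentioned after Eq.~\eqref{eq:sylvester-susceptibility}). Then
\[
\mathcal S=c\,\id+g\,(\zeta_1 S_L\otimes\id-\zeta_2\,\id\otimes S_L^{T}),\qquad c=a(\zeta_1-\zeta_2),\quad |c|=d .
\]
Here $S_L\otimes\id$ moves an entry $(i,j)\to(i+1,j)$ and $\id\otimes S_L^T$ moves it $(i,j)\to(i,j-1)$, so both raise the grading $k=i-j$ by one. A diagonal unitary $X_{ij}\mapsto\omega_1^{i}\omega_2^{j}X_{ij}$ with suitable unimodular $\omega_{1,2}$, followed by an overall phase, rotates both hopping coefficients to $-g$ and the diagonal to $d$. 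Hence $\mathcal S$ is unitarily equivalent to
\[
\mathcal T=A'\otimes\id+\id\otimes A'',\qquad A'=\tfrac d2\id-gS_L,
\]
where $A''$ is the analogous matrix built from $S_L^T$ (unitarily equivalent to $A'$ by the flip permutation). Consequently $\chi=\norm{\mathcal T^{-1}}=\smin(\mathcal T)^{-1}$, and the comparison parameter is $\rho=d/2$ against $g$.

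\textbf{Lower bounds on $\chi$.} Take the minimizing right/left singular pairs $A'u=\delta_L(d/2)\,w$ and $A''v=\delta_L(d/2)\,w'$ with unit vectors. Then
\[
\mathcal T(u\otimes v)=\delta_L(d/2)\,(w\otimes v+u\otimes w'),
\]
so $\smin(\mathcal T)\le2\delta_L(d/2)$. Lemma~\ref{lem:scalar} then gives the three lower bounds:
\begin{itemize}
\item for $d>2g$, $\chi\ge[2(d/2-g)]^{-1}(1+o(1))=(d-2g)^{-1}(1+o(1))$;
\item for $d=2g$, $\chi\ge[4g\sin(\pi/(4L+2))]^{-1}=\Theta(L/g)$;
\item for $d<2g$, $\chi\gtrsim(2g/d)^{L}$ from this product vector.
\end{itemize}
The product vector is too weak in the last case, since the claim is $(2g/d)^{2L}$. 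There I will instead test a single corner entry. Because $\mathcal T=d(\id-\tfrac gd\mathcal N)$ with $\mathcal N$ the nilpotent sum of the two commuting shifts, the Neumann series terminates:
\[
\mathcal T^{-1}=d^{-1}\sum_{m=0}^{2L-2}(g/d)^m\mathcal N^m .
\]
All its terms are entrywise nonnegative. The entry of $\mathcal T^{-1}$ from corner $(1,L)$ to corner $(L,1)$ is therefore
\[
d^{-1}(g/d)^{2L-2}\binom{2L-2}{L-1}\asymp (2g/d)^{2L}/\mathrm{poly}(L),
\]
which gives $\ln\chi\ge2L\ln(2g/d)-O(\ln L)$.

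\textbf{Upper bounds on $\chi$.} I will use the nonnegative expansion above together with a Schur test: the operator norm is at most the geometric mean of the maximal row sum and maximal column sum. The entry from $(i,j)$ to $(i+p,j+q)$ is $d^{-1}\binom{p+q}{p}(g/d)^{p+q}$ with $0\le p,q<L$. Grouping by $m=p+q$ and using $\sum_p\binom mp\le2^m$, each row sum and each column sum is at most
\[
d^{-1}\sum_{m=0}^{2L-2}(2g/d)^m .
\]
\begin{itemize}
\item For $d>2g$ this is $\le(d-2g)^{-1}$, matching the lower bound and proving the limit.
\item For $d=2g$ it is $\le(2L-1)/d<L/g$, proving $\Theta(L/g)$.
\item For $d<2g$ it is $\le C(2g/d)^{2L}$, proving $\ln\chi\le2L\ln(2g/d)+O(1)$.
\end{itemize}

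\textbf{Main obstacle.} The step I expect to need the most care is the phase gauge in the reduction. It must make both hopping terms real and of the same sign simultaneously with a real positive diagonal, since otherwise the combinatorial positivity used in the Schur test and the corner-entry bound fails. I will first try the gauge $X_{ij}\mapsto(-\bar\zeta_1 c/|c|)^{i}(\zeta_2c/|c|)^{-j}X_{ij}$ and verify it directly. If that gauge does not work, entrywise absolute values still give the upper bounds. For the lower bounds, which need exact positive interference, I would instead fall back on the Kronecker-product construction, using the singular vectors of Lemma~\ref{lem:scalar} carried with their own phases.
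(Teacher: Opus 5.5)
Your proof is correct. Its backbone matches the paper's:
\begin{itemize}
\item the same two-sided phase gauge, reducing $X\mapsto AX-XB$ to $dI-g\mathcal Q$ with $\mathcal Q(X)=S_LX+XS_L$ (your $\mathcal N$);
\item the same terminating Neumann series;
\item the same geometric upper bound $d^{-1}\sum_{k=0}^{2L-2}(2g/d)^k$;
\item the same corner test $E=|1\rangle\langle L|$, giving $d^{-1}\binom{2L-2}{L-1}(g/d)^{2L-2}$.
\end{itemize}

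The real difference is the lower bound for $d\ge2g$. The paper inserts a smooth packet $X=ff^\dagger$ with $\norm{(I-S_L)f},\norm{(I-S_L^\dagger)f}=O(L^{-1})$. You instead write the reduced map as a Kronecker sum of two half-distance scalar blocks $\tfrac d2 I-gS_L$ and reuse the extremal singular vectors of Lemma~\ref{lem:scalar}. Your route makes explicit that the cross-family threshold $d=2g$ is the scalar criticality $\rho=g$ at half the center distance, i.e.\ disk contact. It also gives an explicit constant at contact, $\chi\ge[4g\sin(\pi/(4L+2))]^{-1}\approx L/(\pi g)$. The paper's route is shorter and works in both norms at once.

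That last point matters for your upper bounds. The Schur test only controls the Frobenius case. For $\nu$ the spectral norm you cannot fall back on the finite-size-prefactor remark. Norm equivalence gives only $\chi^{(2)}\le\sqrt L\,\chi^{(F)}$. That factor is harmless for $d<2g$, where $\tfrac12\ln L$ is absorbed in $O(\ln L)$. It would, however, spoil both the exact limit for $d>2g$ and the $\Theta(L/g)$ law at $d=2g$. The fix is to use $\norm{S_LX+XS_L}_2\le2\norm X_2$ term by term in your own Neumann series, as the paper does. Your lower bounds transfer to the spectral norm unchanged: all trial operators ($uv^{\mathsf T}$ and $E$) are rank one, and their images have rank at most two.

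The gauge you flag as the main obstacle is not one. Under $X_{ij}\mapsto\omega_1^i\omega_2^jX_{ij}$ the two shift terms acquire the independent phases $\omega_1$ and $\omega_2^{-1}$. So both hoppings can always be set to $-g$. Your explicit trial gauge needs $\bar\zeta_2$ in place of $\zeta_2$ in the column factor; the two differ exactly when $\zeta_2=\pm\ii$. Moreover, the corner entry needs no positivity. Every contributing path has exactly $L-1$ steps of each kind, so all paths carry the same weight. The modulus $d^{-1}\binom{2L-2}{L-1}(g/d)^{2L-2}$ therefore holds without any gauge.
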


\begin{proof}
For the triangular family in Eq.~\eqref{eq:triangular-riesz}, the Sylvester solution is unique whenever $\sigma(A)\cap\sigma(B)=\varnothing$. Independent diagonal phase unitaries on the two translation chains, together with a scalar phase, reduce the Sylvester map to $dI-g\mathcal Q$ with $\mathcal Q(X)=S_LX+XS_L$ and $d>0$.

In either norm $\norm{\mathcal Q}\le2$, and left and right multiplication commute, so $\mathcal Q^{2L-1}=0$ and the inverse is exactly
\begin{equation}
 \mathcal S^{-1}=\frac1d\sum_{k=0}^{2L-2}(g/d)^k\mathcal Q^k.
 \label{eq:sylvester-series}
\end{equation}

For $d>2g$ this gives
\begin{equation}
 \chi_{A,B}^{(\nu)}\le(d-2g)^{-1}.
 \label{eq:sylvester-outside-upper}
\end{equation}
Take a smooth normalized packet $f$ for which both
$\norm{(I-S_L)f}=O(L^{-1})$ and
$\norm{(I-S_L^\dagger)f}=O(L^{-1})$.
For $X=ff^\dagger$, $\norm X_\nu=1$ and
\begin{equation}
 \norm{[dI-g\mathcal Q]X-(d-2g)X}_\nu=O(g/L),
 \label{eq:sylvester-trial}
\end{equation}
which matches the limit outside. At $d=2g$ the same trial gives $\chi\ge cL/g$, while the $2L-1$ terms of Eq.~\eqref{eq:sylvester-series} give $\chi\le(2L-1)/(2g)$.

For $d<2g$, summing the geometric norm bound gives
\begin{equation}
 \chi\le\frac1d\sum_{k=0}^{2L-2}(2g/d)^k.
 \label{eq:sylvester-inside-upper}
\end{equation}
For $E=|1\rangle\langle L|$, only the term with exactly $L-1$ left and $L-1$ right shifts contributes to $X_{L1}$, giving
\begin{equation}
 X_{L1}=\frac1d\binom{2L-2}{L-1}(g/d)^{2L-2}.
 \label{eq:sylvester-corner}
\end{equation}

Its binomial coefficient has logarithm $(2L-2)\ln2+O(\ln L)$, and since the absolute
value of an entry is bounded by either matrix norm, this gives the lower exponential
bound and Eq.~\eqref{eq:sylvester-scales}.
\end{proof}

For adjacent rotation sectors $d=\sqrt2(1-g)$, so the condition $d=2g$ is exactly
$g_1=\sqrt2-1$, while for opposite sectors it is $g_2=1/2$. Equal-length
representative trajectories can be chosen with $L=N-O(1)$ in the full algebra,
making these genuine size-dependent cross-family susceptibilities.

The susceptibility maximizes over unrestricted cross-block matrices, so
symmetry-preserving perturbations can leave the projectors exactly fixed. Taken
with Eq.~\eqref{eq:fixed-phase-projectors}, this identifies what the merger
costs: the projections themselves keep unit norm, while their susceptibility to
cross-family perturbations passes from bounded to algebraic to exponential in
the system size.

\section{Chiral XY model}
\label{app:XY-proof}
The Hamiltonian, balanced jumps, and matching parameters throughout this
section are those of
Eqs.~\maineqref{eq:main-XY-H}--\maineqref{eq:main-XY-directed}.

\subsection{Operator-space Reduction}
\label{app:XY-operator-space}
Jordan--Wigner transformation~\cite{JordanWigner1928,LiebSchultzMattis1961,Katsura1962} gives, up to a scalar,
\begin{equation}
 H_N(t)=\frac v2\sum_{j=1}^{N-1}
 (c_j^\dagger c_{j+1}+c_{j+1}^\dagger c_j)
 +h(t)\sum_jc_j^\dagger c_j,
 \label{eq:XY-fermion}
\end{equation}
and the balanced bond and site jumps yield the complex damping matrix and
one-particle Hamiltonian matrix
\begin{align}
 M_N&=\gamma\id+\frac{\ii v}{2}(S_N-S_N^\dagger),\nonumber\\
 h_N(t)&=\frac v2(S_N+S_N^\dagger)+h(t)\id.
 \label{eq:matched-matrices}
\end{align}

Their reverse-shift terms cancel as in Eq.~\maineqref{eq:main-XY-cancellation}.
A diagonal phase unitary and a circular particle/hole basis give
\begin{equation}
 A_\pm(t)=-\gamma\id+vS_N\pm\ii h(t)\id.
 \label{eq:Apm}
\end{equation}

These are structure matrices, and they determine the full spin problem only
after the HS-unitary treatment of both physical parities carried out next.
Ordered products of $2N$ Majoranas with fixed phases form an orthonormal HS basis~\cite{Prosen2008,BarthelZhang2022}. For a Hermitian linear Majorana jump $L=\ell^{\mathsf T}w$ with real $\ell$, put $M=\sum_\mu\ell_\mu\ell_\mu^{\mathsf T}$. A real orthogonal change of Majorana basis diagonalizes $M$ without changing Hilbert--Schmidt singular values, and in that basis the jumps $\sqrt{m_a}\,w_a$ suffice.

Let $w_A$ be an ordered product of $s$ distinct Majoranas, so that $w_aw_Aw_a=(-1)^s w_A$ for $a\notin A$ and the sign is $(-1)^{s-1}$ for $a\in A$. Since
\begin{equation}
 \mathcal D[\sqrt{m_a}w_a]w_A
 =m_a(w_aw_Aw_a-w_A),
\end{equation}
the dissipative eigenvalue is $-2\sum_{a\in A}m_a$ on even $s$ and $-2\sum_{a\notin A}m_a$ on odd $s$. Restoring basis covariance gives respectively
\begin{align}
 \mathcal D^{(s)}&=-2\DG_s(M),&&s\text{ even},\nonumber\\
 \mathcal D^{(s)}&=2\DG_s(M)-2\tr M\,\id,&&s\text{ odd}.
 \label{eq:noise-degrees}
\end{align}
The Hamiltonian is quadratic in Majoranas and its commutator is a derivation, so on every degree it acts as $\DG_s(K)$ with $K^{\mathsf T}=-K$. For odd $s$ the combined generator is
\begin{equation}
 \mathcal L^{(s)}=\DG_s(K)+2\DG_s(M)-2\tr M\,\id.
 \label{eq:odd-sector}
\end{equation}

The Hodge map sends an ordered subset to its ordered complement with the permutation sign, and is complex linear and unitary on the coefficient Hilbert spaces. The identity
\begin{equation}
 *\DG_s(A)*^{-1}=\tr A\,\id-\DG_{2N-s}(A^{\mathsf T})
 \label{eq:hodge}
\end{equation}
holds for matrix units and then by linearity. Since $\tr K=0$, $M^{\mathsf T}=M$, and $K^{\mathsf T}=-K$, it transforms the odd-degree generator into
\begin{align}
 *[\DG_s(K)+2\DG_s(M)-2\tr M]*^{-1}
 &=\DG_{2N-s}(K-2M).
 \label{eq:hodge-generator}
\end{align}
Because $2N$ is even, complementing the odd degrees permutes the odd part of the full exterior algebra while leaving the even degrees unchanged, so no operator-space dimension is lost.

Let $B(t)$ solve $\dot B=(K-2M)B$ with $B(0)=\id$. The exterior-power identity
\begin{equation}
 \frac\dd{\dd t}\bigwedge^s B(t)
 =\DG_s(K-2M)\bigwedge^s B(t)
\end{equation}
holds even when the matrices at different times fail to commute, making the full channel unitarily equivalent to $\Gamma(B)=\bigoplus_s\bigwedge^s B$. For the number-conserving matched model a constant circular basis reduces $K-2M$ to the complex particle/hole matrices of Eq.~\eqref{eq:Apm}, and the exterior identity
\begin{equation}
 \bigwedge^r(V_+\oplus V_-)
 \cong\bigoplus_{r_++r_-=r}
 \bigwedge^{r_+}V_+\otimes\bigwedge^{r_-}V_-
\end{equation}
then gives Eq.~\eqref{eq:full-exterior}. Singular values are transferable here
because of the special real-noise structure together with unitarity of the
reductions, neither of which follows from spectral third
quantization~\cite{BarthelZhang2022} alone: that reduction keeps the distinction
between even and odd sectors, whereas the one here retains the odd physical
Majorana degrees and complements them, covering the full spin algebra.

Since the time-dependent scalar field in Eq.~\eqref{eq:Apm} commutes with the rest, the exact full decomposition is
\begin{equation}
 \Phi_N\simeq_U
 \bigoplus_{r_+=0}^{N}\bigoplus_{r_-=0}^{N}
 a_{r,q}\,\ee^{GQ_{N;r_+,r_-}},
 \label{eq:full-exterior}
\end{equation}
where
\begin{align}
 r&=r_++r_-,\qquad q=r_+-r_-,\nonumber\\
 a_{r,q}&=\ee^{-rD+\ii q\theta},\nonumber\\
 Q_{N;r_+,r_-}
 &=\DG_{r_+}(S_N)\otimes\id
 +\id\otimes\DG_{r_-}(S_N).
 \label{eq:Q}
\end{align}
The block dimensions $\binom N{r_+}\binom N{r_-}$ sum to $4^N$, so
\begin{equation}
 \eps_N(z)=\min_{0\le r_\pm\le N}
 \smin\bigl(a_{r,q}\ee^{GQ_{N;r_+,r_-}}-z\id\bigr).
 \label{eq:XY-full-residual}
\end{equation}
The $r=0$ block is the identity eigenoperator, and nilpotency of $Q$ makes every other eigenvalue $a_{r,q}$. At fixed $D,\theta$ the entire finite-size eigenvalue multiset is independent of $G$, while the pseudospectrum is not.

From $\norm{\DG_s(S_N)}\le s$,
\begin{equation}
 \norm{Q_{N;r_+,r_-}}\le r,\qquad
 \norm{a_{r,q}\ee^{GQ}}\le\ee^{-r(D-G)},
 \label{eq:grade-bound}
\end{equation}
and therefore
\begin{equation}
 \eps_N^{(r,q)}(z)
 \ge\bigl[|z|-\ee^{-r(D-G)}\bigr]_+.
 \label{eq:grade-exclusion}
\end{equation}
At each fixed $z\ne0$, sufficiently large $r$ are thus uniformly unable to produce a vanishing residual, which carries the fixed-degree estimates into an exact large-size conclusion about the full many-body space.

At fixed degree, increasing $N$ embeds the previous block in the trailing-coordinate invariant subspace of the next, an embedding preserved by exterior powers, so the fixed-sector residuals and their full minimum are nonincreasing in $N$. In this model the ordinary residual limit $d(z)$ therefore exists before any growth exponent is analyzed.

\subsection{Physical Readouts}
\label{app:readout-support}
The proofs below treat the even occupation and joint-occupation protocol of
Sec.~\mainref{sec:xy-observables} and the finite-packet corrections to its
broad-packet limits.

For the encoded modes of Eq.~\maineqref{eq:main-XY-preparation}, orthogonality of
the disjoint $0,1$ envelopes gives $\{d_a,d_b^\dagger\}=\delta_{ab}$ for
$a,b\in\{0,1\}$, and the canonical staggered mode enters the separate
degree-two preparation. Hence $n_a=d_a^\dagger d_a$ is a projection, $Q_a=I-2n_a$
a Hermitian involution, and $Q_0,Q_1$ commute. In an orthonormal encoded Majorana
basis $Q_a=-\ii w_{a1}w_{a2}$, so $A_2$ and $A_4$ are traceless monomials of
physical degrees two and four, whence $A_s^2=I$, $\norm{A_s}_{2,N}=1$, and
$2^{-N}(I\pm A_s)$ is positive with unit trace. Substituting the difference of
these states in the readout proves Eq.~\maineqref{eq:main-XY-readout}, including
its normalization at $n=0$.

For the normalized sine envelope, define
\begin{align}
 f_j&=\sqrt{\frac2{L+1}}\sin\frac{\pi j}{L+1},\quad 1\le j\le L,
 \nonumber\\
 w_s&=\sum_{j=1}^{L-s}f_jf_{j+s},
 \label{eq:sine-packet}
\end{align}
together with the associated envelope series
\begin{equation}
 K_L(t)=\sum_{s=0}^{L-1}w_s\frac{t^s}{s!}.
 \label{eq:packet-K}
\end{equation}

In either circular sector the propagator is
$\ee^{-nD\pm\ii n\theta}\ee^{nGS_N}$ after the known site gauge.
Its diagonal packet matrix element is $K_L(nG)$ for a smooth envelope and
$K_L(-nG)$ for the staggered version. Exterior propagation gives more generally
the determinant identity
\begin{equation}
 \left\langle\bigwedge_{a=1}^s f_a,
 \left(\bigwedge^s B\right)\bigwedge_{b=1}^s f_b\right\rangle
 =\det[\langle f_a,Bf_b\rangle]_{a,b=1}^s,
 \label{eq:packet-determinant}
\end{equation}
which follows by expanding both wedges in their ordered basis. For the two
ordered intervals a right packet cannot propagate leftwards, so the overlap
matrix is triangular with both diagonal entries $K_L(nG)$. The two circular
factors of $Q_\pi$ give the square
and the two two-packet determinants of $Q_0Q_1$ the fourth power:
\begin{align}
 E_{2,L}(n)&=\ee^{-2nD}K_L(-nG)^2,\nonumber\\
 E_{4,L}(n)&=\ee^{-4nD}K_L(nG)^4.
 \label{eq:finite-XY24}
\end{align}

These identities are exact for $N\ge2L$ and integer $n\ge0$, open boundaries
included, and they fix the comparison for a finite experiment. The
finite-envelope errors are bounded next at fixed time.

For the normalized sine packet, set $\alpha_L=\pi/(L+1)$. Direct trigonometric summation gives, for $0\le s<L$,
\begin{equation}
 w_s=\frac{(L-s)\cos(s\alpha_L)
            +\sin[(s+1)\alpha_L]/\sin\alpha_L}{L+1},
 \label{eq:sine-overlap-exact}
\end{equation}
with $w_s=0$ for $s\ge L$. Zero extension of the packet to a bilateral lattice with unitary shift $U$ gives
\begin{equation}
 1-w_s=\tfrac12\norm{f-U^sf}^2
       \le s^2(1-\cos\alpha_L),
 \label{eq:sine-overlap-bound}
\end{equation}
so that, for every fixed real $t$,
\begin{equation}
 |K_L(t)-\ee^t|
 \le (1-\cos\alpha_L)(|t|+t^2)\ee^{|t|}
 =O(L^{-2}).
 \label{eq:packet-uniform-bound}
\end{equation}
The same bound controls the finite-time correlation corrections. Near the even $2/4$ crossing the derivative of the logarithmic ratio is nonzero, so the implicit-function argument gives a crossing displacement $O(L^{-2})$. Equation~\maineqref{eq:main-XY-signals} thus holds at fixed cycle number, with an approximation that is not uniform on the boundary-emptying timescale.

Define $\lambda_{2,\star}=\ee^{-2(D+G)}$ and $\lambda_{4,\star}=\ee^{-4(D-G)}$.
For their operator residuals, zero extension to the bilateral shift $U$ gives
$\norm{(U-I)f}=\sqrt{2(1-\cos\alpha_L)}=:b_L$, an error that compression to the
open chain cannot increase, and staggering changes the corresponding approximate
shift eigenvalue from $+1$ to $-1$. In the orthogonal exterior representation the
normalized $A_2$ is a product of two staggered packet factors and $A_4$ of four
smooth ones, so the Leibniz rule for $Q$ and the norm bound for wedge products
give
\begin{equation}
 \norm{(Q-\mu_r)A_r}_{2,N}\le rb_L,
 \qquad \mu_2=-2,\quad \mu_4=4.
 \label{eq:packet-generator-residual}
\end{equation}
Both blocks have $q=0$, and the identity
\begin{equation}
 (\ee^{GQ}-\ee^{G\mu_r})A_r
 =\int_0^G\ee^{(G-t)Q}\ee^{t\mu_r}(Q-\mu_r)A_r\,\dd t
 \label{eq:packet-exponential-residual}
\end{equation}
with $\norm Q\le r$ bounds the right-hand side by $G\ee^{Gr}rb_L$, whereupon multiplying by $\ee^{-rD}$ proves Eq.~\eqref{eq:packet-operator-residual}.

\begin{equation}
 \begin{gathered}
 \epsilon_N(\lambda_{r,\star};A_r)
 \le rG\ee^{-r(D-G)}
       \sqrt{2\left(1-\cos\frac{\pi}{L+1}\right)},\\
 r=2,4.
 \end{gathered}
 \label{eq:packet-operator-residual}
\end{equation}

At the exposed even contact $z_c=\ee^{-8D/3}$ with $N=2L$, the packet bound gives
$\eps_N(z_c)\le\epsilon_N(z_c;A_r)\le C/N$ for $r=2,4$ with a constant
independent of $N$, which realizes from above the interface law
$\eps_N(z_c)=\Theta(N^{-1})$ quoted in Sec.~\mainref{sec:xy-atlas-contacts}.
The two scales separate accordingly: the certified validity time at $z_c$ grows
at least linearly in $N$, whereas the decay time stays size independent. The
$O(L^{-2})$ scalar correlation correction and the $O(L^{-1})$ full-operator
residual control different errors.

The balanced generator preserves physical Majorana degree and distinct degrees
are HS-orthogonal, so the two readouts never mix:
\begin{equation}
 \Tr[A_4\Phi_N^n(A_2)]=0.
 \label{eq:even-cross-grade-null}
\end{equation}

\putbib[references]
\end{bibunit}


\begin{thebibliography}{80}%
\makeatletter
\providecommand \@ifxundefined [1]{%
 \@ifx{#1\undefined}
}%
\providecommand \@ifnum [1]{%
 \ifnum #1\expandafter \@firstoftwo
 \else \expandafter \@secondoftwo
 \fi
}%
\providecommand \@ifx [1]{%
 \ifx #1\expandafter \@firstoftwo
 \else \expandafter \@secondoftwo
 \fi
}%
\providecommand \natexlab [1]{#1}%
\providecommand \enquote  [1]{``#1''}%
\providecommand \bibnamefont  [1]{#1}%
\providecommand \bibfnamefont [1]{#1}%
\providecommand \citenamefont [1]{#1}%
\providecommand \href@noop [0]{\@secondoftwo}%
\providecommand \href [0]{\begingroup \@sanitize@url \@href}%
\providecommand \@href[1]{\@@startlink{#1}\@@href}%
\providecommand \@@href[1]{\endgroup#1\@@endlink}%
\providecommand \@sanitize@url [0]{\catcode `\\12\catcode `\$12\catcode
  `\&12\catcode `\#12\catcode `\^12\catcode `\_12\catcode `\%12\relax}%
\providecommand \@@startlink[1]{}%
\providecommand \@@endlink[0]{}%
\providecommand \url  [0]{\begingroup\@sanitize@url \@url }%
\providecommand \@url [1]{\endgroup\@href {#1}{\urlprefix }}%
\providecommand \urlprefix  [0]{URL }%
\providecommand \Eprint [0]{\href }%
\providecommand \doibase [0]{https://doi.org/}%
\providecommand \selectlanguage [0]{\@gobble}%
\providecommand \bibinfo  [0]{\@secondoftwo}%
\providecommand \bibfield  [0]{\@secondoftwo}%
\providecommand \translation [1]{[#1]}%
\providecommand \BibitemOpen [0]{}%
\providecommand \bibitemStop [0]{}%
\providecommand \bibitemNoStop [0]{.\EOS\space}%
\providecommand \EOS [0]{\spacefactor3000\relax}%
\providecommand \BibitemShut  [1]{\csname bibitem#1\endcsname}%
\let\auto@bib@innerbib\@empty
\bibitem [{\citenamefont {Lindblad}(1976)}]{Lindblad1976}%
  \BibitemOpen
  \bibfield  {author} {\bibinfo {author} {\bibfnamefont {G.}~\bibnamefont
  {Lindblad}},\ }\bibfield  {title} {\bibinfo {title} {On the generators of
  quantum dynamical semigroups},\ }\href {https://doi.org/10.1007/BF01608499}
  {\bibfield  {journal} {\bibinfo  {journal} {Commun. Math. Phys.}\ }\textbf
  {\bibinfo {volume} {48}},\ \bibinfo {pages} {119} (\bibinfo {year}
  {1976})}\BibitemShut {NoStop}%
\bibitem [{\citenamefont {Gorini}\ \emph {et~al.}(1976)\citenamefont {Gorini},
  \citenamefont {Kossakowski},\ and\ \citenamefont {Sudarshan}}]{GKS1976}%
  \BibitemOpen
  \bibfield  {author} {\bibinfo {author} {\bibfnamefont {V.}~\bibnamefont
  {Gorini}}, \bibinfo {author} {\bibfnamefont {A.}~\bibnamefont
  {Kossakowski}},\ and\ \bibinfo {author} {\bibfnamefont {E.~C.~G.}\
  \bibnamefont {Sudarshan}},\ }\bibfield  {title} {\bibinfo {title} {Completely
  positive dynamical semigroups of {N}-level systems},\ }\href
  {https://doi.org/10.1063/1.522979} {\bibfield  {journal} {\bibinfo  {journal}
  {J. Math. Phys.}\ }\textbf {\bibinfo {volume} {17}},\ \bibinfo {pages} {821}
  (\bibinfo {year} {1976})}\BibitemShut {NoStop}%
\bibitem [{\citenamefont {Bukov}\ \emph {et~al.}(2015)\citenamefont {Bukov},
  \citenamefont {D'Alessio},\ and\ \citenamefont {Polkovnikov}}]{Bukov2015}%
  \BibitemOpen
  \bibfield  {author} {\bibinfo {author} {\bibfnamefont {M.}~\bibnamefont
  {Bukov}}, \bibinfo {author} {\bibfnamefont {L.}~\bibnamefont {D'Alessio}},\
  and\ \bibinfo {author} {\bibfnamefont {A.}~\bibnamefont {Polkovnikov}},\
  }\bibfield  {title} {\bibinfo {title} {Universal high-frequency behavior of
  periodically driven systems: from dynamical stabilization to {Floquet}
  engineering},\ }\href {https://doi.org/10.1080/00018732.2015.1055918}
  {\bibfield  {journal} {\bibinfo  {journal} {Adv. Phys.}\ }\textbf {\bibinfo
  {volume} {64}},\ \bibinfo {pages} {139} (\bibinfo {year} {2015})}\BibitemShut
  {NoStop}%
\bibitem [{\citenamefont {Oka}\ and\ \citenamefont
  {Kitamura}(2019)}]{OkaKitamura2019}%
  \BibitemOpen
  \bibfield  {author} {\bibinfo {author} {\bibfnamefont {T.}~\bibnamefont
  {Oka}}\ and\ \bibinfo {author} {\bibfnamefont {S.}~\bibnamefont {Kitamura}},\
  }\bibfield  {title} {\bibinfo {title} {{Floquet} engineering of quantum
  materials},\ }\href
  {https://doi.org/10.1146/annurev-conmatphys-031218-013423} {\bibfield
  {journal} {\bibinfo  {journal} {Annu. Rev. Condens. Matter Phys.}\ }\textbf
  {\bibinfo {volume} {10}},\ \bibinfo {pages} {387} (\bibinfo {year}
  {2019})}\BibitemShut {NoStop}%
\bibitem [{\citenamefont {Eckardt}(2017)}]{Eckardt2017}%
  \BibitemOpen
  \bibfield  {author} {\bibinfo {author} {\bibfnamefont {A.}~\bibnamefont
  {Eckardt}},\ }\bibfield  {title} {\bibinfo {title} {Colloquium: Atomic
  quantum gases in periodically driven optical lattices},\ }\href
  {https://doi.org/10.1103/RevModPhys.89.011004} {\bibfield  {journal}
  {\bibinfo  {journal} {Rev. Mod. Phys.}\ }\textbf {\bibinfo {volume} {89}},\
  \bibinfo {pages} {011004} (\bibinfo {year} {2017})}\BibitemShut {NoStop}%
\bibitem [{\citenamefont {Mori}(2023)}]{Mori2023}%
  \BibitemOpen
  \bibfield  {author} {\bibinfo {author} {\bibfnamefont {T.}~\bibnamefont
  {Mori}},\ }\bibfield  {title} {\bibinfo {title} {Floquet states in open
  quantum systems},\ }\href
  {https://doi.org/10.1146/annurev-conmatphys-040721-015537} {\bibfield
  {journal} {\bibinfo  {journal} {Annu. Rev. Condens. Matter Phys.}\ }\textbf
  {\bibinfo {volume} {14}},\ \bibinfo {pages} {35} (\bibinfo {year}
  {2023})}\BibitemShut {NoStop}%
\bibitem [{\citenamefont {Verstraete}\ \emph {et~al.}(2009)\citenamefont
  {Verstraete}, \citenamefont {Wolf},\ and\ \citenamefont
  {Cirac}}]{VerstraeteWolfCirac2009}%
  \BibitemOpen
  \bibfield  {author} {\bibinfo {author} {\bibfnamefont {F.}~\bibnamefont
  {Verstraete}}, \bibinfo {author} {\bibfnamefont {M.~M.}\ \bibnamefont
  {Wolf}},\ and\ \bibinfo {author} {\bibfnamefont {J.~I.}\ \bibnamefont
  {Cirac}},\ }\bibfield  {title} {\bibinfo {title} {Quantum computation and
  quantum-state engineering driven by dissipation},\ }\href
  {https://doi.org/10.1038/nphys1342} {\bibfield  {journal} {\bibinfo
  {journal} {Nat. Phys.}\ }\textbf {\bibinfo {volume} {5}},\ \bibinfo {pages}
  {633} (\bibinfo {year} {2009})}\BibitemShut {NoStop}%
\bibitem [{\citenamefont {Wendin}(2017)}]{Wendin2017}%
  \BibitemOpen
  \bibfield  {author} {\bibinfo {author} {\bibfnamefont {G.}~\bibnamefont
  {Wendin}},\ }\bibfield  {title} {\bibinfo {title} {Quantum information
  processing with superconducting circuits: a review},\ }\href
  {https://doi.org/10.1088/1361-6633/aa7e1a} {\bibfield  {journal} {\bibinfo
  {journal} {Rep. Prog. Phys.}\ }\textbf {\bibinfo {volume} {80}},\ \bibinfo
  {pages} {106001} (\bibinfo {year} {2017})}\BibitemShut {NoStop}%
\bibitem [{\citenamefont {Barreiro}\ \emph {et~al.}(2011)\citenamefont
  {Barreiro}, \citenamefont {M{\"u}ller}, \citenamefont {Schindler},
  \citenamefont {Nigg}, \citenamefont {Monz}, \citenamefont {Chwalla},
  \citenamefont {Hennrich}, \citenamefont {Roos}, \citenamefont {Zoller},\ and\
  \citenamefont {Blatt}}]{Barreiro2011}%
  \BibitemOpen
  \bibfield  {author} {\bibinfo {author} {\bibfnamefont {J.~T.}\ \bibnamefont
  {Barreiro}}, \bibinfo {author} {\bibfnamefont {M.}~\bibnamefont
  {M{\"u}ller}}, \bibinfo {author} {\bibfnamefont {P.}~\bibnamefont
  {Schindler}}, \bibinfo {author} {\bibfnamefont {D.}~\bibnamefont {Nigg}},
  \bibinfo {author} {\bibfnamefont {T.}~\bibnamefont {Monz}}, \bibinfo {author}
  {\bibfnamefont {M.}~\bibnamefont {Chwalla}}, \bibinfo {author} {\bibfnamefont
  {M.}~\bibnamefont {Hennrich}}, \bibinfo {author} {\bibfnamefont {C.~F.}\
  \bibnamefont {Roos}}, \bibinfo {author} {\bibfnamefont {P.}~\bibnamefont
  {Zoller}},\ and\ \bibinfo {author} {\bibfnamefont {R.}~\bibnamefont
  {Blatt}},\ }\bibfield  {title} {\bibinfo {title} {An open-system quantum
  simulator with trapped ions},\ }\href {https://doi.org/10.1038/nature09801}
  {\bibfield  {journal} {\bibinfo  {journal} {Nature}\ }\textbf {\bibinfo
  {volume} {470}},\ \bibinfo {pages} {486} (\bibinfo {year}
  {2011})}\BibitemShut {NoStop}%
\bibitem [{\citenamefont {Ke{\ss}ler}\ \emph {et~al.}(2021)\citenamefont
  {Ke{\ss}ler}, \citenamefont {Kongkhambut}, \citenamefont {Georges},
  \citenamefont {Mathey}, \citenamefont {Cosme},\ and\ \citenamefont
  {Hemmerich}}]{Kessler2021}%
  \BibitemOpen
  \bibfield  {author} {\bibinfo {author} {\bibfnamefont {H.}~\bibnamefont
  {Ke{\ss}ler}}, \bibinfo {author} {\bibfnamefont {P.}~\bibnamefont
  {Kongkhambut}}, \bibinfo {author} {\bibfnamefont {C.}~\bibnamefont
  {Georges}}, \bibinfo {author} {\bibfnamefont {L.}~\bibnamefont {Mathey}},
  \bibinfo {author} {\bibfnamefont {J.~G.}\ \bibnamefont {Cosme}},\ and\
  \bibinfo {author} {\bibfnamefont {A.}~\bibnamefont {Hemmerich}},\ }\bibfield
  {title} {\bibinfo {title} {Observation of a dissipative time crystal},\
  }\href {https://doi.org/10.1103/PhysRevLett.127.043602} {\bibfield  {journal}
  {\bibinfo  {journal} {Phys. Rev. Lett.}\ }\textbf {\bibinfo {volume} {127}},\
  \bibinfo {pages} {043602} (\bibinfo {year} {2021})}\BibitemShut {NoStop}%
\bibitem [{\citenamefont {Chen}\ \emph {et~al.}(2025)\citenamefont {Chen},
  \citenamefont {Abbasi}, \citenamefont {Erdamar}, \citenamefont {Muldoon},
  \citenamefont {Joglekar},\ and\ \citenamefont {Murch}}]{Chen2025}%
  \BibitemOpen
  \bibfield  {author} {\bibinfo {author} {\bibfnamefont {W.}~\bibnamefont
  {Chen}}, \bibinfo {author} {\bibfnamefont {M.}~\bibnamefont {Abbasi}},
  \bibinfo {author} {\bibfnamefont {S.}~\bibnamefont {Erdamar}}, \bibinfo
  {author} {\bibfnamefont {J.}~\bibnamefont {Muldoon}}, \bibinfo {author}
  {\bibfnamefont {Y.~N.}\ \bibnamefont {Joglekar}},\ and\ \bibinfo {author}
  {\bibfnamefont {K.~W.}\ \bibnamefont {Murch}},\ }\bibfield  {title} {\bibinfo
  {title} {Engineering nonequilibrium steady states through {Floquet
  Liouvillians}},\ }\href {https://doi.org/10.1103/PhysRevLett.134.090402}
  {\bibfield  {journal} {\bibinfo  {journal} {Phys. Rev. Lett.}\ }\textbf
  {\bibinfo {volume} {134}},\ \bibinfo {pages} {090402} (\bibinfo {year}
  {2025})}\BibitemShut {NoStop}%
\bibitem [{\citenamefont {Zhang}\ \emph {et~al.}(2024)\citenamefont {Zhang},
  \citenamefont {Zhang}, \citenamefont {Xu}, \citenamefont {Hu}, \citenamefont
  {Bao},\ and\ \citenamefont {Shen}}]{Zhang2024}%
  \BibitemOpen
  \bibfield  {author} {\bibinfo {author} {\bibfnamefont {Z.}~\bibnamefont
  {Zhang}}, \bibinfo {author} {\bibfnamefont {F.}~\bibnamefont {Zhang}},
  \bibinfo {author} {\bibfnamefont {Z.}~\bibnamefont {Xu}}, \bibinfo {author}
  {\bibfnamefont {Y.}~\bibnamefont {Hu}}, \bibinfo {author} {\bibfnamefont
  {H.}~\bibnamefont {Bao}},\ and\ \bibinfo {author} {\bibfnamefont
  {H.}~\bibnamefont {Shen}},\ }\bibfield  {title} {\bibinfo {title} {Realizing
  exceptional points by {Floquet} dissipative couplings in thermal atoms},\
  }\href {https://doi.org/10.1103/PhysRevLett.133.133601} {\bibfield  {journal}
  {\bibinfo  {journal} {Phys. Rev. Lett.}\ }\textbf {\bibinfo {volume} {133}},\
  \bibinfo {pages} {133601} (\bibinfo {year} {2024})}\BibitemShut {NoStop}%
\bibitem [{\citenamefont {Bloch}\ \emph {et~al.}(2012)\citenamefont {Bloch},
  \citenamefont {Dalibard},\ and\ \citenamefont {Nascimb{\`e}ne}}]{Bloch2012}%
  \BibitemOpen
  \bibfield  {author} {\bibinfo {author} {\bibfnamefont {I.}~\bibnamefont
  {Bloch}}, \bibinfo {author} {\bibfnamefont {J.}~\bibnamefont {Dalibard}},\
  and\ \bibinfo {author} {\bibfnamefont {S.}~\bibnamefont {Nascimb{\`e}ne}},\
  }\bibfield  {title} {\bibinfo {title} {Quantum simulations with ultracold
  quantum gases},\ }\href {https://doi.org/10.1038/nphys2259} {\bibfield
  {journal} {\bibinfo  {journal} {Nat. Phys.}\ }\textbf {\bibinfo {volume}
  {8}},\ \bibinfo {pages} {267} (\bibinfo {year} {2012})}\BibitemShut {NoStop}%
\bibitem [{\citenamefont {Mi}\ \emph {et~al.}(2022)\citenamefont {Mi},
  \citenamefont {Ippoliti}, \citenamefont {Quintana}, \citenamefont {Greene},
  \citenamefont {Chen}, \citenamefont {Gross}, \citenamefont {Arute},
  \citenamefont {Arya}, \citenamefont {Atalaya}, \citenamefont {Babbush},
  \citenamefont {Moessner}, \citenamefont {Kechedzhi}, \citenamefont
  {Khemani},\ and\ \citenamefont {Roushan}}]{Mi2022}%
  \BibitemOpen
  \bibfield  {author} {\bibinfo {author} {\bibfnamefont {X.}~\bibnamefont
  {Mi}}, \bibinfo {author} {\bibfnamefont {M.}~\bibnamefont {Ippoliti}},
  \bibinfo {author} {\bibfnamefont {C.}~\bibnamefont {Quintana}}, \bibinfo
  {author} {\bibfnamefont {A.}~\bibnamefont {Greene}}, \bibinfo {author}
  {\bibfnamefont {Z.}~\bibnamefont {Chen}}, \bibinfo {author} {\bibfnamefont
  {J.}~\bibnamefont {Gross}}, \bibinfo {author} {\bibfnamefont
  {F.}~\bibnamefont {Arute}}, \bibinfo {author} {\bibfnamefont
  {K.}~\bibnamefont {Arya}}, \bibinfo {author} {\bibfnamefont {J.}~\bibnamefont
  {Atalaya}}, \bibinfo {author} {\bibfnamefont {R.}~\bibnamefont {Babbush}},
  \bibinfo {author} {\bibfnamefont {R.}~\bibnamefont {Moessner}}, \bibinfo
  {author} {\bibfnamefont {K.}~\bibnamefont {Kechedzhi}}, \bibinfo {author}
  {\bibfnamefont {V.}~\bibnamefont {Khemani}},\ and\ \bibinfo {author}
  {\bibfnamefont {P.}~\bibnamefont {Roushan}},\ }\bibfield  {title} {\bibinfo
  {title} {Time-crystalline eigenstate order on a quantum processor},\ }\href
  {https://doi.org/10.1038/s41586-021-04257-w} {\bibfield  {journal} {\bibinfo
  {journal} {Nature}\ }\textbf {\bibinfo {volume} {601}},\ \bibinfo {pages}
  {531} (\bibinfo {year} {2022})},\ \bibinfo {note} {author list abridged; see
  the published record for the full Google Quantum AI
  collaboration}\BibitemShut {NoStop}%
\bibitem [{\citenamefont {Jiang}\ \emph {et~al.}(2018)\citenamefont {Jiang},
  \citenamefont {Sung}, \citenamefont {Kechedzhi}, \citenamefont
  {Smelyanskiy},\ and\ \citenamefont {Boixo}}]{Jiang2018}%
  \BibitemOpen
  \bibfield  {author} {\bibinfo {author} {\bibfnamefont {Z.}~\bibnamefont
  {Jiang}}, \bibinfo {author} {\bibfnamefont {K.~J.}\ \bibnamefont {Sung}},
  \bibinfo {author} {\bibfnamefont {K.}~\bibnamefont {Kechedzhi}}, \bibinfo
  {author} {\bibfnamefont {V.~N.}\ \bibnamefont {Smelyanskiy}},\ and\ \bibinfo
  {author} {\bibfnamefont {S.}~\bibnamefont {Boixo}},\ }\bibfield  {title}
  {\bibinfo {title} {Quantum algorithms to simulate many-body physics of
  correlated fermions},\ }\href
  {https://doi.org/10.1103/PhysRevApplied.9.044036} {\bibfield  {journal}
  {\bibinfo  {journal} {Phys. Rev. Applied}\ }\textbf {\bibinfo {volume} {9}},\
  \bibinfo {pages} {044036} (\bibinfo {year} {2018})}\BibitemShut {NoStop}%
\bibitem [{\citenamefont {Minganti}\ \emph {et~al.}(2018)\citenamefont
  {Minganti}, \citenamefont {Biella}, \citenamefont {Bartolo},\ and\
  \citenamefont {Ciuti}}]{Minganti2018}%
  \BibitemOpen
  \bibfield  {author} {\bibinfo {author} {\bibfnamefont {F.}~\bibnamefont
  {Minganti}}, \bibinfo {author} {\bibfnamefont {A.}~\bibnamefont {Biella}},
  \bibinfo {author} {\bibfnamefont {N.}~\bibnamefont {Bartolo}},\ and\ \bibinfo
  {author} {\bibfnamefont {C.}~\bibnamefont {Ciuti}},\ }\bibfield  {title}
  {\bibinfo {title} {Spectral theory of {Liouvillians} for dissipative phase
  transitions},\ }\href {https://doi.org/10.1103/PhysRevA.98.042118} {\bibfield
   {journal} {\bibinfo  {journal} {Phys. Rev. A}\ }\textbf {\bibinfo {volume}
  {98}},\ \bibinfo {pages} {042118} (\bibinfo {year} {2018})}\BibitemShut
  {NoStop}%
\bibitem [{\citenamefont {Prosen}(2008)}]{Prosen2008}%
  \BibitemOpen
  \bibfield  {author} {\bibinfo {author} {\bibfnamefont {T.}~\bibnamefont
  {Prosen}},\ }\bibfield  {title} {\bibinfo {title} {Third quantization: a
  general method to solve master equations for quadratic open {Fermi}
  systems},\ }\href {https://doi.org/10.1088/1367-2630/10/4/043026} {\bibfield
  {journal} {\bibinfo  {journal} {New J. Phys.}\ }\textbf {\bibinfo {volume}
  {10}},\ \bibinfo {pages} {043026} (\bibinfo {year} {2008})}\BibitemShut
  {NoStop}%
\bibitem [{\citenamefont {Koch}\ \emph {et~al.}(2026)\citenamefont {Koch},
  \citenamefont {Hu},\ and\ \citenamefont {Budich}}]{Koch2026}%
  \BibitemOpen
  \bibfield  {author} {\bibinfo {author} {\bibfnamefont {F.}~\bibnamefont
  {Koch}}, \bibinfo {author} {\bibfnamefont {Y.-M.}\ \bibnamefont {Hu}},\ and\
  \bibinfo {author} {\bibfnamefont {J.~C.}\ \bibnamefont {Budich}},\ }\bibfield
   {title} {\bibinfo {title} {Liouvillian topology and nonreciprocal dynamics
  in open {Floquet} chains},\ }\href {https://doi.org/10.1103/284g-926d}
  {\bibfield  {journal} {\bibinfo  {journal} {Phys. Rev. Research}\ }\textbf
  {\bibinfo {volume} {8}},\ \bibinfo {pages} {023270} (\bibinfo {year}
  {2026})}\BibitemShut {NoStop}%
\bibitem [{\citenamefont {Mori}\ and\ \citenamefont
  {Shirai}(2020)}]{MoriShirai2020}%
  \BibitemOpen
  \bibfield  {author} {\bibinfo {author} {\bibfnamefont {T.}~\bibnamefont
  {Mori}}\ and\ \bibinfo {author} {\bibfnamefont {T.}~\bibnamefont {Shirai}},\
  }\bibfield  {title} {\bibinfo {title} {Resolving a discrepancy between
  {Liouvillian} gap and relaxation time in boundary-dissipated quantum
  many-body systems},\ }\href {https://doi.org/10.1103/PhysRevLett.125.230604}
  {\bibfield  {journal} {\bibinfo  {journal} {Phys. Rev. Lett.}\ }\textbf
  {\bibinfo {volume} {125}},\ \bibinfo {pages} {230604} (\bibinfo {year}
  {2020})}\BibitemShut {NoStop}%
\bibitem [{\citenamefont {Haga}\ \emph {et~al.}(2021)\citenamefont {Haga},
  \citenamefont {Nakagawa}, \citenamefont {Hamazaki},\ and\ \citenamefont
  {Ueda}}]{Haga2021}%
  \BibitemOpen
  \bibfield  {author} {\bibinfo {author} {\bibfnamefont {T.}~\bibnamefont
  {Haga}}, \bibinfo {author} {\bibfnamefont {M.}~\bibnamefont {Nakagawa}},
  \bibinfo {author} {\bibfnamefont {R.}~\bibnamefont {Hamazaki}},\ and\
  \bibinfo {author} {\bibfnamefont {M.}~\bibnamefont {Ueda}},\ }\bibfield
  {title} {\bibinfo {title} {Liouvillian skin effect: Slowing down of
  relaxation processes without gap closing},\ }\href
  {https://doi.org/10.1103/PhysRevLett.127.070402} {\bibfield  {journal}
  {\bibinfo  {journal} {Phys. Rev. Lett.}\ }\textbf {\bibinfo {volume} {127}},\
  \bibinfo {pages} {070402} (\bibinfo {year} {2021})}\BibitemShut {NoStop}%
\bibitem [{\citenamefont {Ughrelidze}\ \emph {et~al.}(2024)\citenamefont
  {Ughrelidze}, \citenamefont {Flynn}, \citenamefont {Cobanera},\ and\
  \citenamefont {Viola}}]{Ughrelidze2024}%
  \BibitemOpen
  \bibfield  {author} {\bibinfo {author} {\bibfnamefont {M.}~\bibnamefont
  {Ughrelidze}}, \bibinfo {author} {\bibfnamefont {V.~P.}\ \bibnamefont
  {Flynn}}, \bibinfo {author} {\bibfnamefont {E.}~\bibnamefont {Cobanera}},\
  and\ \bibinfo {author} {\bibfnamefont {L.}~\bibnamefont {Viola}},\ }\bibfield
   {title} {\bibinfo {title} {Interplay of finite- and infinite-size stability
  in quadratic bosonic {Lindbladians}},\ }\href
  {https://doi.org/10.1103/PhysRevA.110.032207} {\bibfield  {journal} {\bibinfo
   {journal} {Phys. Rev. A}\ }\textbf {\bibinfo {volume} {110}},\ \bibinfo
  {pages} {032207} (\bibinfo {year} {2024})}\BibitemShut {NoStop}%
\bibitem [{\citenamefont {Trefethen}\ and\ \citenamefont
  {Embree}(2005)}]{TrefethenEmbree2005}%
  \BibitemOpen
  \bibfield  {author} {\bibinfo {author} {\bibfnamefont {L.~N.}\ \bibnamefont
  {Trefethen}}\ and\ \bibinfo {author} {\bibfnamefont {M.}~\bibnamefont
  {Embree}},\ }\href {https://doi.org/10.1515/9780691213101} {\emph {\bibinfo
  {title} {Spectra and Pseudospectra: The Behavior of Nonnormal Matrices and
  Operators}}}\ (\bibinfo  {publisher} {Princeton University Press},\ \bibinfo
  {address} {Princeton, NJ},\ \bibinfo {year} {2005})\BibitemShut {NoStop}%
\bibitem [{\citenamefont {Trefethen}(1997)}]{Trefethen1997}%
  \BibitemOpen
  \bibfield  {author} {\bibinfo {author} {\bibfnamefont {L.~N.}\ \bibnamefont
  {Trefethen}},\ }\bibfield  {title} {\bibinfo {title} {Pseudospectra of linear
  operators},\ }\href {https://doi.org/10.1137/S0036144595295284} {\bibfield
  {journal} {\bibinfo  {journal} {SIAM Rev.}\ }\textbf {\bibinfo {volume}
  {39}},\ \bibinfo {pages} {383} (\bibinfo {year} {1997})}\BibitemShut
  {NoStop}%
\bibitem [{\citenamefont {Dencker}\ \emph {et~al.}(2004)\citenamefont
  {Dencker}, \citenamefont {Sj{\"o}strand},\ and\ \citenamefont
  {Zworski}}]{Dencker2004}%
  \BibitemOpen
  \bibfield  {author} {\bibinfo {author} {\bibfnamefont {N.}~\bibnamefont
  {Dencker}}, \bibinfo {author} {\bibfnamefont {J.}~\bibnamefont
  {Sj{\"o}strand}},\ and\ \bibinfo {author} {\bibfnamefont {M.}~\bibnamefont
  {Zworski}},\ }\bibfield  {title} {\bibinfo {title} {Pseudospectra of
  semiclassical (pseudo-) differential operators},\ }\href
  {https://doi.org/10.1002/cpa.20004} {\bibfield  {journal} {\bibinfo
  {journal} {Commun. Pure Appl. Math.}\ }\textbf {\bibinfo {volume} {57}},\
  \bibinfo {pages} {384} (\bibinfo {year} {2004})}\BibitemShut {NoStop}%
\bibitem [{\citenamefont {Davies}(2003)}]{Davies2003}%
  \BibitemOpen
  \bibfield  {author} {\bibinfo {author} {\bibfnamefont {E.~B.}\ \bibnamefont
  {Davies}},\ }\href {https://doi.org/10.48550/arXiv.math/0303235} {\bibinfo
  {title} {Approximating semigroups by using pseudospectra}} (\bibinfo {year}
  {2003}),\ \Eprint {https://arxiv.org/abs/math/0303235} {arXiv:math/0303235
  [math.SP]} \BibitemShut {NoStop}%
\bibitem [{\citenamefont {Kiorpelidis}\ and\ \citenamefont
  {Makris}(2025)}]{Kiorpelidis2025}%
  \BibitemOpen
  \bibfield  {author} {\bibinfo {author} {\bibfnamefont {I.}~\bibnamefont
  {Kiorpelidis}}\ and\ \bibinfo {author} {\bibfnamefont {K.~G.}\ \bibnamefont
  {Makris}},\ }\bibfield  {title} {\bibinfo {title} {Scaling of pseudospectra
  in exponentially sensitive lattices},\ }\href
  {https://doi.org/10.1103/PhysRevResearch.7.L032043} {\bibfield  {journal}
  {\bibinfo  {journal} {Phys. Rev. Research}\ }\textbf {\bibinfo {volume}
  {7}},\ \bibinfo {pages} {L032043} (\bibinfo {year} {2025})}\BibitemShut
  {NoStop}%
\bibitem [{\citenamefont {Metelmann}\ and\ \citenamefont
  {Clerk}(2015)}]{MetelmannClerk2015}%
  \BibitemOpen
  \bibfield  {author} {\bibinfo {author} {\bibfnamefont {A.}~\bibnamefont
  {Metelmann}}\ and\ \bibinfo {author} {\bibfnamefont {A.~A.}\ \bibnamefont
  {Clerk}},\ }\bibfield  {title} {\bibinfo {title} {Nonreciprocal photon
  transmission and amplification via reservoir engineering},\ }\href
  {https://doi.org/10.1103/PhysRevX.5.021025} {\bibfield  {journal} {\bibinfo
  {journal} {Phys. Rev. X}\ }\textbf {\bibinfo {volume} {5}},\ \bibinfo {pages}
  {021025} (\bibinfo {year} {2015})}\BibitemShut {NoStop}%
\bibitem [{\citenamefont {Yang}\ \emph {et~al.}(2022)\citenamefont {Yang},
  \citenamefont {Jiang},\ and\ \citenamefont {Bergholtz}}]{Yang2022}%
  \BibitemOpen
  \bibfield  {author} {\bibinfo {author} {\bibfnamefont {F.}~\bibnamefont
  {Yang}}, \bibinfo {author} {\bibfnamefont {Q.-D.}\ \bibnamefont {Jiang}},\
  and\ \bibinfo {author} {\bibfnamefont {E.~J.}\ \bibnamefont {Bergholtz}},\
  }\bibfield  {title} {\bibinfo {title} {Liouvillian skin effect in an exactly
  solvable model},\ }\href {https://doi.org/10.1103/PhysRevResearch.4.023160}
  {\bibfield  {journal} {\bibinfo  {journal} {Phys. Rev. Research}\ }\textbf
  {\bibinfo {volume} {4}},\ \bibinfo {pages} {023160} (\bibinfo {year}
  {2022})}\BibitemShut {NoStop}%
\bibitem [{\citenamefont {McDonald}\ \emph {et~al.}(2018)\citenamefont
  {McDonald}, \citenamefont {Pereg-Barnea},\ and\ \citenamefont
  {Clerk}}]{McDonald2018}%
  \BibitemOpen
  \bibfield  {author} {\bibinfo {author} {\bibfnamefont {A.}~\bibnamefont
  {McDonald}}, \bibinfo {author} {\bibfnamefont {T.}~\bibnamefont
  {Pereg-Barnea}},\ and\ \bibinfo {author} {\bibfnamefont {A.~A.}\ \bibnamefont
  {Clerk}},\ }\bibfield  {title} {\bibinfo {title} {Phase-dependent chiral
  transport and effective non-{Hermitian} dynamics in a bosonic
  {Kitaev-Majorana} chain},\ }\href {https://doi.org/10.1103/PhysRevX.8.041031}
  {\bibfield  {journal} {\bibinfo  {journal} {Phys. Rev. X}\ }\textbf {\bibinfo
  {volume} {8}},\ \bibinfo {pages} {041031} (\bibinfo {year}
  {2018})}\BibitemShut {NoStop}%
\bibitem [{\citenamefont {Wanjura}\ \emph {et~al.}(2020)\citenamefont
  {Wanjura}, \citenamefont {Brunelli},\ and\ \citenamefont
  {Nunnenkamp}}]{Wanjura2020}%
  \BibitemOpen
  \bibfield  {author} {\bibinfo {author} {\bibfnamefont {C.~C.}\ \bibnamefont
  {Wanjura}}, \bibinfo {author} {\bibfnamefont {M.}~\bibnamefont {Brunelli}},\
  and\ \bibinfo {author} {\bibfnamefont {A.}~\bibnamefont {Nunnenkamp}},\
  }\bibfield  {title} {\bibinfo {title} {Topological framework for directional
  amplification in driven-dissipative cavity arrays},\ }\href
  {https://doi.org/10.1038/s41467-020-16863-9} {\bibfield  {journal} {\bibinfo
  {journal} {Nat. Commun.}\ }\textbf {\bibinfo {volume} {11}},\ \bibinfo
  {pages} {3149} (\bibinfo {year} {2020})}\BibitemShut {NoStop}%
\bibitem [{\citenamefont {Pollock}\ \emph {et~al.}(2018)\citenamefont
  {Pollock}, \citenamefont {Rodr{\'i}guez-Rosario}, \citenamefont {Frauenheim},
  \citenamefont {Paternostro},\ and\ \citenamefont {Modi}}]{Pollock2018}%
  \BibitemOpen
  \bibfield  {author} {\bibinfo {author} {\bibfnamefont {F.~A.}\ \bibnamefont
  {Pollock}}, \bibinfo {author} {\bibfnamefont {C.}~\bibnamefont
  {Rodr{\'i}guez-Rosario}}, \bibinfo {author} {\bibfnamefont {T.}~\bibnamefont
  {Frauenheim}}, \bibinfo {author} {\bibfnamefont {M.}~\bibnamefont
  {Paternostro}},\ and\ \bibinfo {author} {\bibfnamefont {K.}~\bibnamefont
  {Modi}},\ }\bibfield  {title} {\bibinfo {title} {Operational {Markov}
  condition for quantum processes},\ }\href
  {https://doi.org/10.1103/PhysRevLett.120.040405} {\bibfield  {journal}
  {\bibinfo  {journal} {Phys. Rev. Lett.}\ }\textbf {\bibinfo {volume} {120}},\
  \bibinfo {pages} {040405} (\bibinfo {year} {2018})}\BibitemShut {NoStop}%
\bibitem [{\citenamefont {Schnell}\ \emph {et~al.}(2020)\citenamefont
  {Schnell}, \citenamefont {Eckardt},\ and\ \citenamefont
  {Denisov}}]{Schnell2020}%
  \BibitemOpen
  \bibfield  {author} {\bibinfo {author} {\bibfnamefont {A.}~\bibnamefont
  {Schnell}}, \bibinfo {author} {\bibfnamefont {A.}~\bibnamefont {Eckardt}},\
  and\ \bibinfo {author} {\bibfnamefont {S.}~\bibnamefont {Denisov}},\
  }\bibfield  {title} {\bibinfo {title} {Is there a {Floquet Lindbladian}?},\
  }\href {https://doi.org/10.1103/PhysRevB.101.100301} {\bibfield  {journal}
  {\bibinfo  {journal} {Phys. Rev. B}\ }\textbf {\bibinfo {volume} {101}},\
  \bibinfo {pages} {100301(R)} (\bibinfo {year} {2020})}\BibitemShut {NoStop}%
\bibitem [{\citenamefont {Bingham}\ \emph {et~al.}(1987)\citenamefont
  {Bingham}, \citenamefont {Goldie},\ and\ \citenamefont
  {Teugels}}]{Bingham1987}%
  \BibitemOpen
  \bibfield  {author} {\bibinfo {author} {\bibfnamefont {N.~H.}\ \bibnamefont
  {Bingham}}, \bibinfo {author} {\bibfnamefont {C.~M.}\ \bibnamefont
  {Goldie}},\ and\ \bibinfo {author} {\bibfnamefont {J.~L.}\ \bibnamefont
  {Teugels}},\ }\href {https://doi.org/10.1017/CBO9780511721434} {\emph
  {\bibinfo {title} {Regular Variation}}},\ \bibinfo {series} {Encyclopedia of
  Mathematics and its Applications}, Vol.~\bibinfo {volume} {27}\ (\bibinfo
  {publisher} {Cambridge University Press},\ \bibinfo {address} {Cambridge},\
  \bibinfo {year} {1987})\BibitemShut {NoStop}%
\bibitem [{\citenamefont {Sz.-Nagy}\ \emph {et~al.}(2010)\citenamefont
  {Sz.-Nagy}, \citenamefont {Foias}, \citenamefont {Bercovici},\ and\
  \citenamefont {K{\'e}rchy}}]{SzNagy2010}%
  \BibitemOpen
  \bibfield  {author} {\bibinfo {author} {\bibfnamefont {B.}~\bibnamefont
  {Sz.-Nagy}}, \bibinfo {author} {\bibfnamefont {C.}~\bibnamefont {Foias}},
  \bibinfo {author} {\bibfnamefont {H.}~\bibnamefont {Bercovici}},\ and\
  \bibinfo {author} {\bibfnamefont {L.}~\bibnamefont {K{\'e}rchy}},\ }\href
  {https://doi.org/10.1007/978-1-4419-6094-8} {\emph {\bibinfo {title}
  {Harmonic Analysis of Operators on Hilbert Space}}},\ \bibinfo {edition}
  {2nd}\ ed.\ (\bibinfo  {publisher} {Springer},\ \bibinfo {address} {New
  York},\ \bibinfo {year} {2010})\BibitemShut {NoStop}%
\bibitem [{\citenamefont {Reichel}\ and\ \citenamefont
  {Trefethen}(1992)}]{ReichelTrefethen1992}%
  \BibitemOpen
  \bibfield  {author} {\bibinfo {author} {\bibfnamefont {L.}~\bibnamefont
  {Reichel}}\ and\ \bibinfo {author} {\bibfnamefont {L.~N.}\ \bibnamefont
  {Trefethen}},\ }\bibfield  {title} {\bibinfo {title} {Eigenvalues and
  pseudo-eigenvalues of {Toeplitz} matrices},\ }\href
  {https://doi.org/10.1016/0024-3795(92)90374-J} {\bibfield  {journal}
  {\bibinfo  {journal} {Linear Algebra Appl.}\ }\textbf {\bibinfo {volume}
  {162--164}},\ \bibinfo {pages} {153} (\bibinfo {year} {1992})}\BibitemShut
  {NoStop}%
\bibitem [{\citenamefont {Trefethen}\ \emph {et~al.}(2001)\citenamefont
  {Trefethen}, \citenamefont {Contedini},\ and\ \citenamefont
  {Embree}}]{TrefethenContediniEmbree2001}%
  \BibitemOpen
  \bibfield  {author} {\bibinfo {author} {\bibfnamefont {L.~N.}\ \bibnamefont
  {Trefethen}}, \bibinfo {author} {\bibfnamefont {M.}~\bibnamefont
  {Contedini}},\ and\ \bibinfo {author} {\bibfnamefont {M.}~\bibnamefont
  {Embree}},\ }\bibfield  {title} {\bibinfo {title} {Spectra, pseudospectra,
  and localization for random bidiagonal matrices},\ }\href
  {https://doi.org/10.1002/cpa.4} {\bibfield  {journal} {\bibinfo  {journal}
  {Commun. Pure Appl. Math.}\ }\textbf {\bibinfo {volume} {54}},\ \bibinfo
  {pages} {595} (\bibinfo {year} {2001})}\BibitemShut {NoStop}%
\bibitem [{\citenamefont {Macieszczak}\ \emph {et~al.}(2016)\citenamefont
  {Macieszczak}, \citenamefont {Gu\c{t}\u{a}}, \citenamefont {Lesanovsky},\
  and\ \citenamefont {Garrahan}}]{Macieszczak2016}%
  \BibitemOpen
  \bibfield  {author} {\bibinfo {author} {\bibfnamefont {K.}~\bibnamefont
  {Macieszczak}}, \bibinfo {author} {\bibfnamefont {M.}~\bibnamefont
  {Gu\c{t}\u{a}}}, \bibinfo {author} {\bibfnamefont {I.}~\bibnamefont
  {Lesanovsky}},\ and\ \bibinfo {author} {\bibfnamefont {J.~P.}\ \bibnamefont
  {Garrahan}},\ }\bibfield  {title} {\bibinfo {title} {Towards a theory of
  metastability in open quantum dynamics},\ }\href
  {https://doi.org/10.1103/PhysRevLett.116.240404} {\bibfield  {journal}
  {\bibinfo  {journal} {Phys. Rev. Lett.}\ }\textbf {\bibinfo {volume} {116}},\
  \bibinfo {pages} {240404} (\bibinfo {year} {2016})}\BibitemShut {NoStop}%
\bibitem [{\citenamefont {Kato}(1995)}]{Kato1995}%
  \BibitemOpen
  \bibfield  {author} {\bibinfo {author} {\bibfnamefont {T.}~\bibnamefont
  {Kato}},\ }\href {https://doi.org/10.1007/978-3-642-66282-9} {\emph {\bibinfo
  {title} {Perturbation Theory for Linear Operators}}},\ Classics in
  Mathematics\ (\bibinfo  {publisher} {Springer},\ \bibinfo {address}
  {Berlin},\ \bibinfo {year} {1995})\BibitemShut {NoStop}%
\bibitem [{\citenamefont {Stewart}(1973)}]{Stewart1973}%
  \BibitemOpen
  \bibfield  {author} {\bibinfo {author} {\bibfnamefont {G.~W.}\ \bibnamefont
  {Stewart}},\ }\bibfield  {title} {\bibinfo {title} {Error and perturbation
  bounds for subspaces associated with certain eigenvalue problems},\ }\href
  {https://doi.org/10.1137/1015095} {\bibfield  {journal} {\bibinfo  {journal}
  {SIAM Rev.}\ }\textbf {\bibinfo {volume} {15}},\ \bibinfo {pages} {727}
  (\bibinfo {year} {1973})}\BibitemShut {NoStop}%
\bibitem [{\citenamefont {Bhatia}\ and\ \citenamefont
  {Rosenthal}(1997)}]{BhatiaRosenthal1997}%
  \BibitemOpen
  \bibfield  {author} {\bibinfo {author} {\bibfnamefont {R.}~\bibnamefont
  {Bhatia}}\ and\ \bibinfo {author} {\bibfnamefont {P.}~\bibnamefont
  {Rosenthal}},\ }\bibfield  {title} {\bibinfo {title} {How and why to solve
  the operator equation $ax-xb=y$},\ }\href
  {https://doi.org/10.1112/S0024609396001828} {\bibfield  {journal} {\bibinfo
  {journal} {Bull. London Math. Soc.}\ }\textbf {\bibinfo {volume} {29}},\
  \bibinfo {pages} {1} (\bibinfo {year} {1997})}\BibitemShut {NoStop}%
\bibitem [{\citenamefont {Ciccarello}\ \emph {et~al.}(2022)\citenamefont
  {Ciccarello}, \citenamefont {Lorenzo}, \citenamefont {Giovannetti},\ and\
  \citenamefont {Palma}}]{Ciccarello2022}%
  \BibitemOpen
  \bibfield  {author} {\bibinfo {author} {\bibfnamefont {F.}~\bibnamefont
  {Ciccarello}}, \bibinfo {author} {\bibfnamefont {S.}~\bibnamefont {Lorenzo}},
  \bibinfo {author} {\bibfnamefont {V.}~\bibnamefont {Giovannetti}},\ and\
  \bibinfo {author} {\bibfnamefont {G.~M.}\ \bibnamefont {Palma}},\ }\bibfield
  {title} {\bibinfo {title} {Quantum collision models: Open system dynamics
  from repeated interactions},\ }\href
  {https://doi.org/10.1016/j.physrep.2022.01.001} {\bibfield  {journal}
  {\bibinfo  {journal} {Phys. Rep.}\ }\textbf {\bibinfo {volume} {954}},\
  \bibinfo {pages} {1} (\bibinfo {year} {2022})}\BibitemShut {NoStop}%
\bibitem [{\citenamefont {Herviou}\ \emph {et~al.}(2019)\citenamefont
  {Herviou}, \citenamefont {Bardarson},\ and\ \citenamefont
  {Regnault}}]{Herviou2019}%
  \BibitemOpen
  \bibfield  {author} {\bibinfo {author} {\bibfnamefont {L.}~\bibnamefont
  {Herviou}}, \bibinfo {author} {\bibfnamefont {J.~H.}\ \bibnamefont
  {Bardarson}},\ and\ \bibinfo {author} {\bibfnamefont {N.}~\bibnamefont
  {Regnault}},\ }\bibfield  {title} {\bibinfo {title} {Defining a bulk-edge
  correspondence for non-{Hermitian} {Hamiltonians} via singular-value
  decomposition},\ }\href {https://doi.org/10.1103/PhysRevA.99.052118}
  {\bibfield  {journal} {\bibinfo  {journal} {Phys. Rev. A}\ }\textbf {\bibinfo
  {volume} {99}},\ \bibinfo {pages} {052118} (\bibinfo {year}
  {2019})}\BibitemShut {NoStop}%
\bibitem [{\citenamefont {Su}\ \emph {et~al.}(1979)\citenamefont {Su},
  \citenamefont {Schrieffer},\ and\ \citenamefont {Heeger}}]{SSH1979}%
  \BibitemOpen
  \bibfield  {author} {\bibinfo {author} {\bibfnamefont {W.~P.}\ \bibnamefont
  {Su}}, \bibinfo {author} {\bibfnamefont {J.~R.}\ \bibnamefont {Schrieffer}},\
  and\ \bibinfo {author} {\bibfnamefont {A.~J.}\ \bibnamefont {Heeger}},\
  }\bibfield  {title} {\bibinfo {title} {Solitons in polyacetylene},\ }\href
  {https://doi.org/10.1103/PhysRevLett.42.1698} {\bibfield  {journal} {\bibinfo
   {journal} {Phys. Rev. Lett.}\ }\textbf {\bibinfo {volume} {42}},\ \bibinfo
  {pages} {1698} (\bibinfo {year} {1979})}\BibitemShut {NoStop}%
\bibitem [{\citenamefont {Gong}\ \emph {et~al.}(2018)\citenamefont {Gong},
  \citenamefont {Ashida}, \citenamefont {Kawabata}, \citenamefont {Takasan},
  \citenamefont {Higashikawa},\ and\ \citenamefont {Ueda}}]{Gong2018}%
  \BibitemOpen
  \bibfield  {author} {\bibinfo {author} {\bibfnamefont {Z.}~\bibnamefont
  {Gong}}, \bibinfo {author} {\bibfnamefont {Y.}~\bibnamefont {Ashida}},
  \bibinfo {author} {\bibfnamefont {K.}~\bibnamefont {Kawabata}}, \bibinfo
  {author} {\bibfnamefont {K.}~\bibnamefont {Takasan}}, \bibinfo {author}
  {\bibfnamefont {S.}~\bibnamefont {Higashikawa}},\ and\ \bibinfo {author}
  {\bibfnamefont {M.}~\bibnamefont {Ueda}},\ }\bibfield  {title} {\bibinfo
  {title} {Topological phases of non-{Hermitian} systems},\ }\href
  {https://doi.org/10.1103/PhysRevX.8.031079} {\bibfield  {journal} {\bibinfo
  {journal} {Phys. Rev. X}\ }\textbf {\bibinfo {volume} {8}},\ \bibinfo {pages}
  {031079} (\bibinfo {year} {2018})}\BibitemShut {NoStop}%
\bibitem [{\citenamefont {Yao}\ and\ \citenamefont {Wang}(2018)}]{YaoWang2018}%
  \BibitemOpen
  \bibfield  {author} {\bibinfo {author} {\bibfnamefont {S.}~\bibnamefont
  {Yao}}\ and\ \bibinfo {author} {\bibfnamefont {Z.}~\bibnamefont {Wang}},\
  }\bibfield  {title} {\bibinfo {title} {Edge states and topological invariants
  of non-{Hermitian} systems},\ }\href
  {https://doi.org/10.1103/PhysRevLett.121.086803} {\bibfield  {journal}
  {\bibinfo  {journal} {Phys. Rev. Lett.}\ }\textbf {\bibinfo {volume} {121}},\
  \bibinfo {pages} {086803} (\bibinfo {year} {2018})}\BibitemShut {NoStop}%
\bibitem [{\citenamefont {Okuma}\ \emph {et~al.}(2020)\citenamefont {Okuma},
  \citenamefont {Kawabata}, \citenamefont {Shiozaki},\ and\ \citenamefont
  {Sato}}]{Okuma2020}%
  \BibitemOpen
  \bibfield  {author} {\bibinfo {author} {\bibfnamefont {N.}~\bibnamefont
  {Okuma}}, \bibinfo {author} {\bibfnamefont {K.}~\bibnamefont {Kawabata}},
  \bibinfo {author} {\bibfnamefont {K.}~\bibnamefont {Shiozaki}},\ and\
  \bibinfo {author} {\bibfnamefont {M.}~\bibnamefont {Sato}},\ }\bibfield
  {title} {\bibinfo {title} {Topological origin of non-{Hermitian} skin
  effects},\ }\href {https://doi.org/10.1103/PhysRevLett.124.086801} {\bibfield
   {journal} {\bibinfo  {journal} {Phys. Rev. Lett.}\ }\textbf {\bibinfo
  {volume} {124}},\ \bibinfo {pages} {086801} (\bibinfo {year}
  {2020})}\BibitemShut {NoStop}%
\bibitem [{\citenamefont {Ammari}\ \emph {et~al.}(2025)\citenamefont {Ammari},
  \citenamefont {Barandun}, \citenamefont {De~Bruijn}, \citenamefont {Liu},\
  and\ \citenamefont {Thalhammer}}]{Ammari2025}%
  \BibitemOpen
  \bibfield  {author} {\bibinfo {author} {\bibfnamefont {H.}~\bibnamefont
  {Ammari}}, \bibinfo {author} {\bibfnamefont {S.}~\bibnamefont {Barandun}},
  \bibinfo {author} {\bibfnamefont {Y.}~\bibnamefont {De~Bruijn}}, \bibinfo
  {author} {\bibfnamefont {P.}~\bibnamefont {Liu}},\ and\ \bibinfo {author}
  {\bibfnamefont {C.}~\bibnamefont {Thalhammer}},\ }\bibfield  {title}
  {\bibinfo {title} {Spectra and pseudo-spectra of tridiagonal $k$-{Toeplitz}
  matrices and the topological origin of the non-{Hermitian} skin effect},\
  }\href {https://doi.org/10.1088/1751-8121/add5ab} {\bibfield  {journal}
  {\bibinfo  {journal} {J. Phys. A}\ }\textbf {\bibinfo {volume} {58}},\
  \bibinfo {pages} {205201} (\bibinfo {year} {2025})}\BibitemShut {NoStop}%
\bibitem [{\citenamefont {Kawabata}\ \emph {et~al.}(2019)\citenamefont
  {Kawabata}, \citenamefont {Shiozaki}, \citenamefont {Ueda},\ and\
  \citenamefont {Sato}}]{Kawabata2019}%
  \BibitemOpen
  \bibfield  {author} {\bibinfo {author} {\bibfnamefont {K.}~\bibnamefont
  {Kawabata}}, \bibinfo {author} {\bibfnamefont {K.}~\bibnamefont {Shiozaki}},
  \bibinfo {author} {\bibfnamefont {M.}~\bibnamefont {Ueda}},\ and\ \bibinfo
  {author} {\bibfnamefont {M.}~\bibnamefont {Sato}},\ }\bibfield  {title}
  {\bibinfo {title} {Symmetry and topology in non-{Hermitian} physics},\ }\href
  {https://doi.org/10.1103/PhysRevX.9.041015} {\bibfield  {journal} {\bibinfo
  {journal} {Phys. Rev. X}\ }\textbf {\bibinfo {volume} {9}},\ \bibinfo {pages}
  {041015} (\bibinfo {year} {2019})}\BibitemShut {NoStop}%
\bibitem [{\citenamefont {Ashida}\ \emph {et~al.}(2020)\citenamefont {Ashida},
  \citenamefont {Gong},\ and\ \citenamefont {Ueda}}]{Ashida2020}%
  \BibitemOpen
  \bibfield  {author} {\bibinfo {author} {\bibfnamefont {Y.}~\bibnamefont
  {Ashida}}, \bibinfo {author} {\bibfnamefont {Z.}~\bibnamefont {Gong}},\ and\
  \bibinfo {author} {\bibfnamefont {M.}~\bibnamefont {Ueda}},\ }\bibfield
  {title} {\bibinfo {title} {Non-{Hermitian} physics},\ }\href
  {https://doi.org/10.1080/00018732.2021.1876991} {\bibfield  {journal}
  {\bibinfo  {journal} {Adv. Phys.}\ }\textbf {\bibinfo {volume} {69}},\
  \bibinfo {pages} {249} (\bibinfo {year} {2020})}\BibitemShut {NoStop}%
\bibitem [{\citenamefont {Bergholtz}\ \emph {et~al.}(2021)\citenamefont
  {Bergholtz}, \citenamefont {Budich},\ and\ \citenamefont
  {Kunst}}]{Bergholtz2021}%
  \BibitemOpen
  \bibfield  {author} {\bibinfo {author} {\bibfnamefont {E.~J.}\ \bibnamefont
  {Bergholtz}}, \bibinfo {author} {\bibfnamefont {J.~C.}\ \bibnamefont
  {Budich}},\ and\ \bibinfo {author} {\bibfnamefont {F.~K.}\ \bibnamefont
  {Kunst}},\ }\bibfield  {title} {\bibinfo {title} {Exceptional topology of
  non-{Hermitian} systems},\ }\href
  {https://doi.org/10.1103/RevModPhys.93.015005} {\bibfield  {journal}
  {\bibinfo  {journal} {Rev. Mod. Phys.}\ }\textbf {\bibinfo {volume} {93}},\
  \bibinfo {pages} {015005} (\bibinfo {year} {2021})}\BibitemShut {NoStop}%
\bibitem [{\citenamefont {Okuma}\ and\ \citenamefont
  {Sato}(2023)}]{OkumaSato2023}%
  \BibitemOpen
  \bibfield  {author} {\bibinfo {author} {\bibfnamefont {N.}~\bibnamefont
  {Okuma}}\ and\ \bibinfo {author} {\bibfnamefont {M.}~\bibnamefont {Sato}},\
  }\bibfield  {title} {\bibinfo {title} {Non-{Hermitian} topological phenomena:
  A review},\ }\href {https://doi.org/10.1146/annurev-conmatphys-040521-033133}
  {\bibfield  {journal} {\bibinfo  {journal} {Annu. Rev. Condens. Matter
  Phys.}\ }\textbf {\bibinfo {volume} {14}},\ \bibinfo {pages} {83} (\bibinfo
  {year} {2023})}\BibitemShut {NoStop}%
\bibitem [{\citenamefont {Zhang}\ \emph {et~al.}(2020)\citenamefont {Zhang},
  \citenamefont {Yang},\ and\ \citenamefont {Fang}}]{ZhangYangFang2020}%
  \BibitemOpen
  \bibfield  {author} {\bibinfo {author} {\bibfnamefont {K.}~\bibnamefont
  {Zhang}}, \bibinfo {author} {\bibfnamefont {Z.}~\bibnamefont {Yang}},\ and\
  \bibinfo {author} {\bibfnamefont {C.}~\bibnamefont {Fang}},\ }\bibfield
  {title} {\bibinfo {title} {Correspondence between winding numbers and skin
  modes in non-{Hermitian} systems},\ }\href
  {https://doi.org/10.1103/PhysRevLett.125.126402} {\bibfield  {journal}
  {\bibinfo  {journal} {Phys. Rev. Lett.}\ }\textbf {\bibinfo {volume} {125}},\
  \bibinfo {pages} {126402} (\bibinfo {year} {2020})}\BibitemShut {NoStop}%
\bibitem [{\citenamefont {Borgnia}\ \emph {et~al.}(2020)\citenamefont
  {Borgnia}, \citenamefont {Kruchkov},\ and\ \citenamefont
  {Slager}}]{Borgnia2020}%
  \BibitemOpen
  \bibfield  {author} {\bibinfo {author} {\bibfnamefont {D.~S.}\ \bibnamefont
  {Borgnia}}, \bibinfo {author} {\bibfnamefont {A.~J.}\ \bibnamefont
  {Kruchkov}},\ and\ \bibinfo {author} {\bibfnamefont {R.-J.}\ \bibnamefont
  {Slager}},\ }\bibfield  {title} {\bibinfo {title} {Non-{Hermitian} boundary
  modes and topology},\ }\href {https://doi.org/10.1103/PhysRevLett.124.056802}
  {\bibfield  {journal} {\bibinfo  {journal} {Phys. Rev. Lett.}\ }\textbf
  {\bibinfo {volume} {124}},\ \bibinfo {pages} {056802} (\bibinfo {year}
  {2020})}\BibitemShut {NoStop}%
\bibitem [{\citenamefont {Kunjummen}\ \emph {et~al.}(2023)\citenamefont
  {Kunjummen}, \citenamefont {Tran}, \citenamefont {Carney},\ and\
  \citenamefont {Taylor}}]{Kunjummen2023}%
  \BibitemOpen
  \bibfield  {author} {\bibinfo {author} {\bibfnamefont {J.}~\bibnamefont
  {Kunjummen}}, \bibinfo {author} {\bibfnamefont {M.~C.}\ \bibnamefont {Tran}},
  \bibinfo {author} {\bibfnamefont {D.}~\bibnamefont {Carney}},\ and\ \bibinfo
  {author} {\bibfnamefont {J.~M.}\ \bibnamefont {Taylor}},\ }\bibfield  {title}
  {\bibinfo {title} {Shadow process tomography of quantum channels},\ }\href
  {https://doi.org/10.1103/PhysRevA.107.042403} {\bibfield  {journal} {\bibinfo
   {journal} {Phys. Rev. A}\ }\textbf {\bibinfo {volume} {107}},\ \bibinfo
  {pages} {042403} (\bibinfo {year} {2023})}\BibitemShut {NoStop}%
\bibitem [{\citenamefont {Wang}\ \emph {et~al.}(2024)\citenamefont {Wang},
  \citenamefont {Song},\ and\ \citenamefont {Wang}}]{Wang2024}%
  \BibitemOpen
  \bibfield  {author} {\bibinfo {author} {\bibfnamefont {H.-Y.}\ \bibnamefont
  {Wang}}, \bibinfo {author} {\bibfnamefont {F.}~\bibnamefont {Song}},\ and\
  \bibinfo {author} {\bibfnamefont {Z.}~\bibnamefont {Wang}},\ }\bibfield
  {title} {\bibinfo {title} {Amoeba formulation of non-bloch band theory in
  arbitrary dimensions},\ }\href {https://doi.org/10.1103/PhysRevX.14.021011}
  {\bibfield  {journal} {\bibinfo  {journal} {Phys. Rev. X}\ }\textbf {\bibinfo
  {volume} {14}},\ \bibinfo {pages} {021011} (\bibinfo {year}
  {2024})}\BibitemShut {NoStop}%
\bibitem [{\citenamefont {Kaneshiro}\ and\ \citenamefont
  {Peters}(2026)}]{Kaneshiro2026}%
  \BibitemOpen
  \bibfield  {author} {\bibinfo {author} {\bibfnamefont {S.}~\bibnamefont
  {Kaneshiro}}\ and\ \bibinfo {author} {\bibfnamefont {R.}~\bibnamefont
  {Peters}},\ }\bibfield  {title} {\bibinfo {title} {Wiener--hopf factorization
  and non-hermitian topology for {Amoeba} formulation in one-dimensional
  multiband systems},\ }\href {https://doi.org/10.1103/s43l-h6z6} {\bibfield
  {journal} {\bibinfo  {journal} {Phys. Rev. Research}\ }\textbf {\bibinfo
  {volume} {8}},\ \bibinfo {pages} {013292} (\bibinfo {year}
  {2026})}\BibitemShut {NoStop}%
\bibitem [{\citenamefont {Yokomizo}\ and\ \citenamefont
  {Murakami}(2019)}]{Yokomizo2019}%
  \BibitemOpen
  \bibfield  {author} {\bibinfo {author} {\bibfnamefont {K.}~\bibnamefont
  {Yokomizo}}\ and\ \bibinfo {author} {\bibfnamefont {S.}~\bibnamefont
  {Murakami}},\ }\bibfield  {title} {\bibinfo {title} {Non-{Bloch} band theory
  of non-{Hermitian} systems},\ }\href
  {https://doi.org/10.1103/PhysRevLett.123.066404} {\bibfield  {journal}
  {\bibinfo  {journal} {Phys. Rev. Lett.}\ }\textbf {\bibinfo {volume} {123}},\
  \bibinfo {pages} {066404} (\bibinfo {year} {2019})}\BibitemShut {NoStop}%
\bibitem [{\citenamefont {Weidemann}\ \emph {et~al.}(2020)\citenamefont
  {Weidemann}, \citenamefont {Kremer}, \citenamefont {Helbig}, \citenamefont
  {Hofmann}, \citenamefont {Stegmaier}, \citenamefont {Greiter}, \citenamefont
  {Thomale},\ and\ \citenamefont {Szameit}}]{Weidemann2020}%
  \BibitemOpen
  \bibfield  {author} {\bibinfo {author} {\bibfnamefont {S.}~\bibnamefont
  {Weidemann}}, \bibinfo {author} {\bibfnamefont {M.}~\bibnamefont {Kremer}},
  \bibinfo {author} {\bibfnamefont {T.}~\bibnamefont {Helbig}}, \bibinfo
  {author} {\bibfnamefont {T.}~\bibnamefont {Hofmann}}, \bibinfo {author}
  {\bibfnamefont {A.}~\bibnamefont {Stegmaier}}, \bibinfo {author}
  {\bibfnamefont {M.}~\bibnamefont {Greiter}}, \bibinfo {author} {\bibfnamefont
  {R.}~\bibnamefont {Thomale}},\ and\ \bibinfo {author} {\bibfnamefont
  {A.}~\bibnamefont {Szameit}},\ }\bibfield  {title} {\bibinfo {title}
  {Topological funneling of light},\ }\href
  {https://doi.org/10.1126/science.aaz8727} {\bibfield  {journal} {\bibinfo
  {journal} {Science}\ }\textbf {\bibinfo {volume} {368}},\ \bibinfo {pages}
  {311} (\bibinfo {year} {2020})}\BibitemShut {NoStop}%
\bibitem [{\citenamefont {Jordan}\ and\ \citenamefont
  {Wigner}(1928)}]{JordanWigner1928}%
  \BibitemOpen
  \bibfield  {author} {\bibinfo {author} {\bibfnamefont {P.}~\bibnamefont
  {Jordan}}\ and\ \bibinfo {author} {\bibfnamefont {E.}~\bibnamefont
  {Wigner}},\ }\bibfield  {title} {\bibinfo {title} {{\"U}ber das {Paulische}
  {\"a}quivalenzverbot},\ }\href {https://doi.org/10.1007/BF01331938}
  {\bibfield  {journal} {\bibinfo  {journal} {Z. Phys.}\ }\textbf {\bibinfo
  {volume} {47}},\ \bibinfo {pages} {631} (\bibinfo {year} {1928})}\BibitemShut
  {NoStop}%
\bibitem [{\citenamefont {Lieb}\ \emph {et~al.}(1961)\citenamefont {Lieb},
  \citenamefont {Schultz},\ and\ \citenamefont
  {Mattis}}]{LiebSchultzMattis1961}%
  \BibitemOpen
  \bibfield  {author} {\bibinfo {author} {\bibfnamefont {E.}~\bibnamefont
  {Lieb}}, \bibinfo {author} {\bibfnamefont {T.}~\bibnamefont {Schultz}},\ and\
  \bibinfo {author} {\bibfnamefont {D.}~\bibnamefont {Mattis}},\ }\bibfield
  {title} {\bibinfo {title} {Two soluble models of an antiferromagnetic
  chain},\ }\href {https://doi.org/10.1016/0003-4916(61)90115-4} {\bibfield
  {journal} {\bibinfo  {journal} {Ann. Phys. (N.Y.)}\ }\textbf {\bibinfo
  {volume} {16}},\ \bibinfo {pages} {407} (\bibinfo {year} {1961})}\BibitemShut
  {NoStop}%
\bibitem [{\citenamefont {Katsura}(1962)}]{Katsura1962}%
  \BibitemOpen
  \bibfield  {author} {\bibinfo {author} {\bibfnamefont {S.}~\bibnamefont
  {Katsura}},\ }\bibfield  {title} {\bibinfo {title} {Statistical mechanics of
  the anisotropic linear {Heisenberg} model},\ }\href
  {https://doi.org/10.1103/PhysRev.127.1508} {\bibfield  {journal} {\bibinfo
  {journal} {Phys. Rev.}\ }\textbf {\bibinfo {volume} {127}},\ \bibinfo {pages}
  {1508} (\bibinfo {year} {1962})}\BibitemShut {NoStop}%
\bibitem [{\citenamefont {Diehl}\ \emph {et~al.}(2008)\citenamefont {Diehl},
  \citenamefont {Micheli}, \citenamefont {Kantian}, \citenamefont {Kraus},
  \citenamefont {B{\"u}chler},\ and\ \citenamefont {Zoller}}]{Diehl2008}%
  \BibitemOpen
  \bibfield  {author} {\bibinfo {author} {\bibfnamefont {S.}~\bibnamefont
  {Diehl}}, \bibinfo {author} {\bibfnamefont {A.}~\bibnamefont {Micheli}},
  \bibinfo {author} {\bibfnamefont {A.}~\bibnamefont {Kantian}}, \bibinfo
  {author} {\bibfnamefont {B.}~\bibnamefont {Kraus}}, \bibinfo {author}
  {\bibfnamefont {H.~P.}\ \bibnamefont {B{\"u}chler}},\ and\ \bibinfo {author}
  {\bibfnamefont {P.}~\bibnamefont {Zoller}},\ }\bibfield  {title} {\bibinfo
  {title} {Quantum states and phases in driven open quantum systems with cold
  atoms},\ }\href {https://doi.org/10.1038/nphys1073} {\bibfield  {journal}
  {\bibinfo  {journal} {Nat. Phys.}\ }\textbf {\bibinfo {volume} {4}},\
  \bibinfo {pages} {878} (\bibinfo {year} {2008})}\BibitemShut {NoStop}%
\bibitem [{\citenamefont {Kraus}\ \emph {et~al.}(2008)\citenamefont {Kraus},
  \citenamefont {B{\"u}chler}, \citenamefont {Diehl}, \citenamefont {Kantian},
  \citenamefont {Micheli},\ and\ \citenamefont {Zoller}}]{Kraus2008}%
  \BibitemOpen
  \bibfield  {author} {\bibinfo {author} {\bibfnamefont {B.}~\bibnamefont
  {Kraus}}, \bibinfo {author} {\bibfnamefont {H.~P.}\ \bibnamefont
  {B{\"u}chler}}, \bibinfo {author} {\bibfnamefont {S.}~\bibnamefont {Diehl}},
  \bibinfo {author} {\bibfnamefont {A.}~\bibnamefont {Kantian}}, \bibinfo
  {author} {\bibfnamefont {A.}~\bibnamefont {Micheli}},\ and\ \bibinfo {author}
  {\bibfnamefont {P.}~\bibnamefont {Zoller}},\ }\bibfield  {title} {\bibinfo
  {title} {Preparation of entangled states by quantum {Markov} processes},\
  }\href {https://doi.org/10.1103/PhysRevA.78.042307} {\bibfield  {journal}
  {\bibinfo  {journal} {Phys. Rev. A}\ }\textbf {\bibinfo {volume} {78}},\
  \bibinfo {pages} {042307} (\bibinfo {year} {2008})}\BibitemShut {NoStop}%
\bibitem [{\citenamefont {Diehl}\ \emph {et~al.}(2011)\citenamefont {Diehl},
  \citenamefont {Rico}, \citenamefont {Baranov},\ and\ \citenamefont
  {Zoller}}]{Diehl2011}%
  \BibitemOpen
  \bibfield  {author} {\bibinfo {author} {\bibfnamefont {S.}~\bibnamefont
  {Diehl}}, \bibinfo {author} {\bibfnamefont {E.}~\bibnamefont {Rico}},
  \bibinfo {author} {\bibfnamefont {M.~A.}\ \bibnamefont {Baranov}},\ and\
  \bibinfo {author} {\bibfnamefont {P.}~\bibnamefont {Zoller}},\ }\bibfield
  {title} {\bibinfo {title} {Topology by dissipation in atomic quantum wires},\
  }\href {https://doi.org/10.1038/nphys2106} {\bibfield  {journal} {\bibinfo
  {journal} {Nat. Phys.}\ }\textbf {\bibinfo {volume} {7}},\ \bibinfo {pages}
  {971} (\bibinfo {year} {2011})}\BibitemShut {NoStop}%
\bibitem [{\citenamefont {Prosen}(2011)}]{Prosen2011XXZ}%
  \BibitemOpen
  \bibfield  {author} {\bibinfo {author} {\bibfnamefont {T.}~\bibnamefont
  {Prosen}},\ }\bibfield  {title} {\bibinfo {title} {Open {XXZ} spin chain:
  Nonequilibrium steady state and a strict bound on ballistic transport},\
  }\href {https://doi.org/10.1103/PhysRevLett.106.217206} {\bibfield  {journal}
  {\bibinfo  {journal} {Phys. Rev. Lett.}\ }\textbf {\bibinfo {volume} {106}},\
  \bibinfo {pages} {217206} (\bibinfo {year} {2011})}\BibitemShut {NoStop}%
\bibitem [{\citenamefont {Altland}\ \emph {et~al.}(2021)\citenamefont
  {Altland}, \citenamefont {Fleischhauer},\ and\ \citenamefont
  {Diehl}}]{Altland2021}%
  \BibitemOpen
  \bibfield  {author} {\bibinfo {author} {\bibfnamefont {A.}~\bibnamefont
  {Altland}}, \bibinfo {author} {\bibfnamefont {M.}~\bibnamefont
  {Fleischhauer}},\ and\ \bibinfo {author} {\bibfnamefont {S.}~\bibnamefont
  {Diehl}},\ }\bibfield  {title} {\bibinfo {title} {Symmetry classes of open
  fermionic quantum matter},\ }\href
  {https://doi.org/10.1103/PhysRevX.11.021037} {\bibfield  {journal} {\bibinfo
  {journal} {Phys. Rev. X}\ }\textbf {\bibinfo {volume} {11}},\ \bibinfo
  {pages} {021037} (\bibinfo {year} {2021})}\BibitemShut {NoStop}%
\bibitem [{\citenamefont {Lau}\ and\ \citenamefont
  {Clerk}(2018)}]{LauClerk2018}%
  \BibitemOpen
  \bibfield  {author} {\bibinfo {author} {\bibfnamefont {H.-K.}\ \bibnamefont
  {Lau}}\ and\ \bibinfo {author} {\bibfnamefont {A.~A.}\ \bibnamefont
  {Clerk}},\ }\bibfield  {title} {\bibinfo {title} {Fundamental limits and
  non-reciprocal approaches in non-{Hermitian} quantum sensing},\ }\href
  {https://doi.org/10.1038/s41467-018-06477-7} {\bibfield  {journal} {\bibinfo
  {journal} {Nat. Commun.}\ }\textbf {\bibinfo {volume} {9}},\ \bibinfo {pages}
  {4320} (\bibinfo {year} {2018})}\BibitemShut {NoStop}%
\bibitem [{\citenamefont {Prosen}\ and\ \citenamefont
  {Ilievski}(2011)}]{ProsenIlievski2011}%
  \BibitemOpen
  \bibfield  {author} {\bibinfo {author} {\bibfnamefont {T.}~\bibnamefont
  {Prosen}}\ and\ \bibinfo {author} {\bibfnamefont {E.}~\bibnamefont
  {Ilievski}},\ }\bibfield  {title} {\bibinfo {title} {Nonequilibrium phase
  transition in a periodically driven {XY} spin chain},\ }\href
  {https://doi.org/10.1103/PhysRevLett.107.060403} {\bibfield  {journal}
  {\bibinfo  {journal} {Phys. Rev. Lett.}\ }\textbf {\bibinfo {volume} {107}},\
  \bibinfo {pages} {060403} (\bibinfo {year} {2011})}\BibitemShut {NoStop}%
\bibitem [{\citenamefont {Barthel}\ and\ \citenamefont
  {Zhang}(2022)}]{BarthelZhang2022}%
  \BibitemOpen
  \bibfield  {author} {\bibinfo {author} {\bibfnamefont {T.}~\bibnamefont
  {Barthel}}\ and\ \bibinfo {author} {\bibfnamefont {Y.}~\bibnamefont
  {Zhang}},\ }\bibfield  {title} {\bibinfo {title} {Solving quasi-free and
  quadratic {Lindblad} master equations for open fermionic and bosonic
  systems},\ }\href {https://doi.org/10.1088/1742-5468/ac8e5c} {\bibfield
  {journal} {\bibinfo  {journal} {J. Stat. Mech.}\ }\textbf {\bibinfo {volume}
  {2022}},\ \bibinfo {pages} {113101} (\bibinfo {year} {2022})}\BibitemShut
  {NoStop}%
\bibitem [{\citenamefont {Song}\ \emph {et~al.}(2019)\citenamefont {Song},
  \citenamefont {Yao},\ and\ \citenamefont {Wang}}]{SongYaoWang2019}%
  \BibitemOpen
  \bibfield  {author} {\bibinfo {author} {\bibfnamefont {F.}~\bibnamefont
  {Song}}, \bibinfo {author} {\bibfnamefont {S.}~\bibnamefont {Yao}},\ and\
  \bibinfo {author} {\bibfnamefont {Z.}~\bibnamefont {Wang}},\ }\bibfield
  {title} {\bibinfo {title} {Non-{Hermitian} skin effect and chiral damping in
  open quantum systems},\ }\href
  {https://doi.org/10.1103/PhysRevLett.123.170401} {\bibfield  {journal}
  {\bibinfo  {journal} {Phys. Rev. Lett.}\ }\textbf {\bibinfo {volume} {123}},\
  \bibinfo {pages} {170401} (\bibinfo {year} {2019})}\BibitemShut {NoStop}%
\bibitem [{\citenamefont {Lee}\ \emph {et~al.}(2013)\citenamefont {Lee},
  \citenamefont {Gopalakrishnan},\ and\ \citenamefont {Lukin}}]{Lee2013}%
  \BibitemOpen
  \bibfield  {author} {\bibinfo {author} {\bibfnamefont {T.~E.}\ \bibnamefont
  {Lee}}, \bibinfo {author} {\bibfnamefont {S.}~\bibnamefont
  {Gopalakrishnan}},\ and\ \bibinfo {author} {\bibfnamefont {M.~D.}\
  \bibnamefont {Lukin}},\ }\bibfield  {title} {\bibinfo {title} {Unconventional
  magnetism via optical pumping of interacting spin systems},\ }\href
  {https://doi.org/10.1103/PhysRevLett.110.257204} {\bibfield  {journal}
  {\bibinfo  {journal} {Phys. Rev. Lett.}\ }\textbf {\bibinfo {volume} {110}},\
  \bibinfo {pages} {257204} (\bibinfo {year} {2013})}\BibitemShut {NoStop}%
\bibitem [{\citenamefont {Mori}\ \emph {et~al.}(2016)\citenamefont {Mori},
  \citenamefont {Kuwahara},\ and\ \citenamefont
  {Saito}}]{MoriKuwaharaSaito2016}%
  \BibitemOpen
  \bibfield  {author} {\bibinfo {author} {\bibfnamefont {T.}~\bibnamefont
  {Mori}}, \bibinfo {author} {\bibfnamefont {T.}~\bibnamefont {Kuwahara}},\
  and\ \bibinfo {author} {\bibfnamefont {K.}~\bibnamefont {Saito}},\ }\bibfield
   {title} {\bibinfo {title} {Rigorous bound on energy absorption and generic
  relaxation in periodically driven quantum systems},\ }\href
  {https://doi.org/10.1103/PhysRevLett.116.120401} {\bibfield  {journal}
  {\bibinfo  {journal} {Phys. Rev. Lett.}\ }\textbf {\bibinfo {volume} {116}},\
  \bibinfo {pages} {120401} (\bibinfo {year} {2016})}\BibitemShut {NoStop}%
\bibitem [{\citenamefont {Nachtergaele}\ and\ \citenamefont
  {Sims}(2006)}]{NachtergaeleSims2006}%
  \BibitemOpen
  \bibfield  {author} {\bibinfo {author} {\bibfnamefont {B.}~\bibnamefont
  {Nachtergaele}}\ and\ \bibinfo {author} {\bibfnamefont {R.}~\bibnamefont
  {Sims}},\ }\bibfield  {title} {\bibinfo {title} {Lieb--robinson bounds and
  the exponential clustering theorem},\ }\href
  {https://doi.org/10.1007/s00220-006-1556-1} {\bibfield  {journal} {\bibinfo
  {journal} {Commun. Math. Phys.}\ }\textbf {\bibinfo {volume} {265}},\
  \bibinfo {pages} {119} (\bibinfo {year} {2006})}\BibitemShut {NoStop}%
\bibitem [{\citenamefont {Nachtergaele}\ \emph {et~al.}(2019)\citenamefont
  {Nachtergaele}, \citenamefont {Sims},\ and\ \citenamefont
  {Young}}]{NachtergaeleSimsYoung2019}%
  \BibitemOpen
  \bibfield  {author} {\bibinfo {author} {\bibfnamefont {B.}~\bibnamefont
  {Nachtergaele}}, \bibinfo {author} {\bibfnamefont {R.}~\bibnamefont {Sims}},\
  and\ \bibinfo {author} {\bibfnamefont {A.}~\bibnamefont {Young}},\ }\bibfield
   {title} {\bibinfo {title} {Quasi-locality bounds for quantum lattice
  systems. {I}. {Lieb--Robinson} bounds, quasi-local maps, and spectral flow
  automorphisms},\ }\href {https://doi.org/10.1063/1.5095769} {\bibfield
  {journal} {\bibinfo  {journal} {J. Math. Phys.}\ }\textbf {\bibinfo {volume}
  {60}},\ \bibinfo {pages} {061101} (\bibinfo {year} {2019})}\BibitemShut
  {NoStop}%
\bibitem [{\citenamefont {Zwolak}\ and\ \citenamefont
  {Vidal}(2004)}]{ZwolakVidal2004}%
  \BibitemOpen
  \bibfield  {author} {\bibinfo {author} {\bibfnamefont {M.}~\bibnamefont
  {Zwolak}}\ and\ \bibinfo {author} {\bibfnamefont {G.}~\bibnamefont {Vidal}},\
  }\bibfield  {title} {\bibinfo {title} {Mixed-state dynamics in
  one-dimensional quantum lattice systems: A time-dependent superoperator
  renormalization algorithm},\ }\href
  {https://doi.org/10.1103/PhysRevLett.93.207205} {\bibfield  {journal}
  {\bibinfo  {journal} {Phys. Rev. Lett.}\ }\textbf {\bibinfo {volume} {93}},\
  \bibinfo {pages} {207205} (\bibinfo {year} {2004})}\BibitemShut {NoStop}%
\bibitem [{\citenamefont {Verstraete}\ \emph {et~al.}(2004)\citenamefont
  {Verstraete}, \citenamefont {Garc{\'i}a-Ripoll},\ and\ \citenamefont
  {Cirac}}]{Verstraete2004}%
  \BibitemOpen
  \bibfield  {author} {\bibinfo {author} {\bibfnamefont {F.}~\bibnamefont
  {Verstraete}}, \bibinfo {author} {\bibfnamefont {J.~J.}\ \bibnamefont
  {Garc{\'i}a-Ripoll}},\ and\ \bibinfo {author} {\bibfnamefont {J.~I.}\
  \bibnamefont {Cirac}},\ }\bibfield  {title} {\bibinfo {title} {Matrix product
  density operators: Simulation of finite-temperature and dissipative
  systems},\ }\href {https://doi.org/10.1103/PhysRevLett.93.207204} {\bibfield
  {journal} {\bibinfo  {journal} {Phys. Rev. Lett.}\ }\textbf {\bibinfo
  {volume} {93}},\ \bibinfo {pages} {207204} (\bibinfo {year}
  {2004})}\BibitemShut {NoStop}%
\bibitem [{\citenamefont {Werner}\ \emph {et~al.}(2016)\citenamefont {Werner},
  \citenamefont {Jaschke}, \citenamefont {Silvi}, \citenamefont {Kliesch},
  \citenamefont {Calarco}, \citenamefont {Eisert},\ and\ \citenamefont
  {Montangero}}]{Werner2016}%
  \BibitemOpen
  \bibfield  {author} {\bibinfo {author} {\bibfnamefont {A.~H.}\ \bibnamefont
  {Werner}}, \bibinfo {author} {\bibfnamefont {D.}~\bibnamefont {Jaschke}},
  \bibinfo {author} {\bibfnamefont {P.}~\bibnamefont {Silvi}}, \bibinfo
  {author} {\bibfnamefont {M.}~\bibnamefont {Kliesch}}, \bibinfo {author}
  {\bibfnamefont {T.}~\bibnamefont {Calarco}}, \bibinfo {author} {\bibfnamefont
  {J.}~\bibnamefont {Eisert}},\ and\ \bibinfo {author} {\bibfnamefont
  {S.}~\bibnamefont {Montangero}},\ }\bibfield  {title} {\bibinfo {title}
  {Positive tensor network approach for simulating open quantum many-body
  systems},\ }\href {https://doi.org/10.1103/PhysRevLett.116.237201} {\bibfield
   {journal} {\bibinfo  {journal} {Phys. Rev. Lett.}\ }\textbf {\bibinfo
  {volume} {116}},\ \bibinfo {pages} {237201} (\bibinfo {year}
  {2016})}\BibitemShut {NoStop}%
\bibitem [{\citenamefont {Cui}\ \emph {et~al.}(2015)\citenamefont {Cui},
  \citenamefont {Cirac},\ and\ \citenamefont {Ba{\~n}uls}}]{Cui2015}%
  \BibitemOpen
  \bibfield  {author} {\bibinfo {author} {\bibfnamefont {J.}~\bibnamefont
  {Cui}}, \bibinfo {author} {\bibfnamefont {J.~I.}\ \bibnamefont {Cirac}},\
  and\ \bibinfo {author} {\bibfnamefont {M.~C.}\ \bibnamefont {Ba{\~n}uls}},\
  }\bibfield  {title} {\bibinfo {title} {Variational matrix product operators
  for the steady state of dissipative quantum systems},\ }\href
  {https://doi.org/10.1103/PhysRevLett.114.220601} {\bibfield  {journal}
  {\bibinfo  {journal} {Phys. Rev. Lett.}\ }\textbf {\bibinfo {volume} {114}},\
  \bibinfo {pages} {220601} (\bibinfo {year} {2015})}\BibitemShut {NoStop}%
\bibitem [{\citenamefont {Wright}\ and\ \citenamefont
  {Trefethen}(2001)}]{WrightTrefethen2001}%
  \BibitemOpen
  \bibfield  {author} {\bibinfo {author} {\bibfnamefont {T.~G.}\ \bibnamefont
  {Wright}}\ and\ \bibinfo {author} {\bibfnamefont {L.~N.}\ \bibnamefont
  {Trefethen}},\ }\bibfield  {title} {\bibinfo {title} {Large-scale computation
  of pseudospectra using {ARPACK} and {Eigs}},\ }\href
  {https://doi.org/10.1137/S106482750037322X} {\bibfield  {journal} {\bibinfo
  {journal} {SIAM J. Sci. Comput.}\ }\textbf {\bibinfo {volume} {23}},\
  \bibinfo {pages} {591} (\bibinfo {year} {2001})}\BibitemShut {NoStop}%
\bibitem [{\citenamefont {Hochstenbach}(2001)}]{Hochstenbach2001}%
  \BibitemOpen
  \bibfield  {author} {\bibinfo {author} {\bibfnamefont {M.~E.}\ \bibnamefont
  {Hochstenbach}},\ }\bibfield  {title} {\bibinfo {title} {A {Jacobi--Davidson}
  type {SVD} method},\ }\href {https://doi.org/10.1137/S1064827500372973}
  {\bibfield  {journal} {\bibinfo  {journal} {SIAM J. Sci. Comput.}\ }\textbf
  {\bibinfo {volume} {23}},\ \bibinfo {pages} {606} (\bibinfo {year}
  {2001})}\BibitemShut {NoStop}%
\end{thebibliography}%


\begin{thebibliography}{23}%
\makeatletter
\providecommand \@ifxundefined [1]{%
 \@ifx{#1\undefined}
}%
\providecommand \@ifnum [1]{%
 \ifnum #1\expandafter \@firstoftwo
 \else \expandafter \@secondoftwo
 \fi
}%
\providecommand \@ifx [1]{%
 \ifx #1\expandafter \@firstoftwo
 \else \expandafter \@secondoftwo
 \fi
}%
\providecommand \natexlab [1]{#1}%
\providecommand \enquote  [1]{``#1''}%
\providecommand \bibnamefont  [1]{#1}%
\providecommand \bibfnamefont [1]{#1}%
\providecommand \citenamefont [1]{#1}%
\providecommand \href@noop [0]{\@secondoftwo}%
\providecommand \href [0]{\begingroup \@sanitize@url \@href}%
\providecommand \@href[1]{\@@startlink{#1}\@@href}%
\providecommand \@@href[1]{\endgroup#1\@@endlink}%
\providecommand \@sanitize@url [0]{\catcode `\\12\catcode `\$12\catcode
  `\&12\catcode `\#12\catcode `\^12\catcode `\_12\catcode `\%12\relax}%
\providecommand \@@startlink[1]{}%
\providecommand \@@endlink[0]{}%
\providecommand \url  [0]{\begingroup\@sanitize@url \@url }%
\providecommand \@url [1]{\endgroup\@href {#1}{\urlprefix }}%
\providecommand \urlprefix  [0]{URL }%
\providecommand \Eprint [0]{\href }%
\providecommand \doibase [0]{https://doi.org/}%
\providecommand \selectlanguage [0]{\@gobble}%
\providecommand \bibinfo  [0]{\@secondoftwo}%
\providecommand \bibfield  [0]{\@secondoftwo}%
\providecommand \translation [1]{[#1]}%
\providecommand \BibitemOpen [0]{}%
\providecommand \bibitemStop [0]{}%
\providecommand \bibitemNoStop [0]{.\EOS\space}%
\providecommand \EOS [0]{\spacefactor3000\relax}%
\providecommand \BibitemShut  [1]{\csname bibitem#1\endcsname}%
\let\auto@bib@innerbib\@empty
\bibitem [{\citenamefont {Trefethen}\ and\ \citenamefont
  {Embree}(2005)}]{TrefethenEmbree2005}%
  \BibitemOpen
  \bibfield  {author} {\bibinfo {author} {\bibfnamefont {L.~N.}\ \bibnamefont
  {Trefethen}}\ and\ \bibinfo {author} {\bibfnamefont {M.}~\bibnamefont
  {Embree}},\ }\href {https://doi.org/10.1515/9780691213101} {\emph {\bibinfo
  {title} {Spectra and Pseudospectra: The Behavior of Nonnormal Matrices and
  Operators}}}\ (\bibinfo  {publisher} {Princeton University Press},\ \bibinfo
  {address} {Princeton, NJ},\ \bibinfo {year} {2005})\BibitemShut {NoStop}%
\bibitem [{\citenamefont {Lindblad}(1976)}]{Lindblad1976}%
  \BibitemOpen
  \bibfield  {author} {\bibinfo {author} {\bibfnamefont {G.}~\bibnamefont
  {Lindblad}},\ }\bibfield  {title} {\bibinfo {title} {On the generators of
  quantum dynamical semigroups},\ }\href {https://doi.org/10.1007/BF01608499}
  {\bibfield  {journal} {\bibinfo  {journal} {Commun. Math. Phys.}\ }\textbf
  {\bibinfo {volume} {48}},\ \bibinfo {pages} {119} (\bibinfo {year}
  {1976})}\BibitemShut {NoStop}%
\bibitem [{\citenamefont {Gorini}\ \emph {et~al.}(1976)\citenamefont {Gorini},
  \citenamefont {Kossakowski},\ and\ \citenamefont {Sudarshan}}]{GKS1976}%
  \BibitemOpen
  \bibfield  {author} {\bibinfo {author} {\bibfnamefont {V.}~\bibnamefont
  {Gorini}}, \bibinfo {author} {\bibfnamefont {A.}~\bibnamefont
  {Kossakowski}},\ and\ \bibinfo {author} {\bibfnamefont {E.~C.~G.}\
  \bibnamefont {Sudarshan}},\ }\bibfield  {title} {\bibinfo {title} {Completely
  positive dynamical semigroups of {N}-level systems},\ }\href
  {https://doi.org/10.1063/1.522979} {\bibfield  {journal} {\bibinfo  {journal}
  {J. Math. Phys.}\ }\textbf {\bibinfo {volume} {17}},\ \bibinfo {pages} {821}
  (\bibinfo {year} {1976})}\BibitemShut {NoStop}%
\bibitem [{\citenamefont {Sz.-Nagy}\ \emph {et~al.}(2010)\citenamefont
  {Sz.-Nagy}, \citenamefont {Foias}, \citenamefont {Bercovici},\ and\
  \citenamefont {K{\'e}rchy}}]{SzNagy2010}%
  \BibitemOpen
  \bibfield  {author} {\bibinfo {author} {\bibfnamefont {B.}~\bibnamefont
  {Sz.-Nagy}}, \bibinfo {author} {\bibfnamefont {C.}~\bibnamefont {Foias}},
  \bibinfo {author} {\bibfnamefont {H.}~\bibnamefont {Bercovici}},\ and\
  \bibinfo {author} {\bibfnamefont {L.}~\bibnamefont {K{\'e}rchy}},\ }\href
  {https://doi.org/10.1007/978-1-4419-6094-8} {\emph {\bibinfo {title}
  {Harmonic Analysis of Operators on Hilbert Space}}},\ \bibinfo {edition}
  {2nd}\ ed.\ (\bibinfo  {publisher} {Springer},\ \bibinfo {address} {New
  York},\ \bibinfo {year} {2010})\BibitemShut {NoStop}%
\bibitem [{\citenamefont {Trefethen}(1997)}]{Trefethen1997}%
  \BibitemOpen
  \bibfield  {author} {\bibinfo {author} {\bibfnamefont {L.~N.}\ \bibnamefont
  {Trefethen}},\ }\bibfield  {title} {\bibinfo {title} {Pseudospectra of linear
  operators},\ }\href {https://doi.org/10.1137/S0036144595295284} {\bibfield
  {journal} {\bibinfo  {journal} {SIAM Rev.}\ }\textbf {\bibinfo {volume}
  {39}},\ \bibinfo {pages} {383} (\bibinfo {year} {1997})}\BibitemShut
  {NoStop}%
\bibitem [{\citenamefont {von Neumann}(1951)}]{vonNeumann1951}%
  \BibitemOpen
  \bibfield  {author} {\bibinfo {author} {\bibfnamefont {J.}~\bibnamefont {von
  Neumann}},\ }\bibfield  {title} {\bibinfo {title} {Eine {Spektraltheorie}
  f{\"u}r allgemeine {Operatoren} eines unit{\"a}ren {Raumes}},\ }\href
  {https://doi.org/10.1002/mana.3210040124} {\bibfield  {journal} {\bibinfo
  {journal} {Math. Nachr.}\ }\textbf {\bibinfo {volume} {4}},\ \bibinfo {pages}
  {258} (\bibinfo {year} {1951})}\BibitemShut {NoStop}%
\bibitem [{\citenamefont {Bingham}\ \emph {et~al.}(1987)\citenamefont
  {Bingham}, \citenamefont {Goldie},\ and\ \citenamefont
  {Teugels}}]{Bingham1987}%
  \BibitemOpen
  \bibfield  {author} {\bibinfo {author} {\bibfnamefont {N.~H.}\ \bibnamefont
  {Bingham}}, \bibinfo {author} {\bibfnamefont {C.~M.}\ \bibnamefont
  {Goldie}},\ and\ \bibinfo {author} {\bibfnamefont {J.~L.}\ \bibnamefont
  {Teugels}},\ }\href {https://doi.org/10.1017/CBO9780511721434} {\emph
  {\bibinfo {title} {Regular Variation}}},\ \bibinfo {series} {Encyclopedia of
  Mathematics and its Applications}, Vol.~\bibinfo {volume} {27}\ (\bibinfo
  {publisher} {Cambridge University Press},\ \bibinfo {address} {Cambridge},\
  \bibinfo {year} {1987})\BibitemShut {NoStop}%
\bibitem [{\citenamefont {Kato}(1995)}]{Kato1995}%
  \BibitemOpen
  \bibfield  {author} {\bibinfo {author} {\bibfnamefont {T.}~\bibnamefont
  {Kato}},\ }\href {https://doi.org/10.1007/978-3-642-66282-9} {\emph {\bibinfo
  {title} {Perturbation Theory for Linear Operators}}},\ Classics in
  Mathematics\ (\bibinfo  {publisher} {Springer},\ \bibinfo {address}
  {Berlin},\ \bibinfo {year} {1995})\BibitemShut {NoStop}%
\bibitem [{\citenamefont {Stewart}(1973)}]{Stewart1973}%
  \BibitemOpen
  \bibfield  {author} {\bibinfo {author} {\bibfnamefont {G.~W.}\ \bibnamefont
  {Stewart}},\ }\bibfield  {title} {\bibinfo {title} {Error and perturbation
  bounds for subspaces associated with certain eigenvalue problems},\ }\href
  {https://doi.org/10.1137/1015095} {\bibfield  {journal} {\bibinfo  {journal}
  {SIAM Rev.}\ }\textbf {\bibinfo {volume} {15}},\ \bibinfo {pages} {727}
  (\bibinfo {year} {1973})}\BibitemShut {NoStop}%
\bibitem [{\citenamefont {Macieszczak}\ \emph {et~al.}(2016)\citenamefont
  {Macieszczak}, \citenamefont {Gu\c{t}\u{a}}, \citenamefont {Lesanovsky},\
  and\ \citenamefont {Garrahan}}]{Macieszczak2016}%
  \BibitemOpen
  \bibfield  {author} {\bibinfo {author} {\bibfnamefont {K.}~\bibnamefont
  {Macieszczak}}, \bibinfo {author} {\bibfnamefont {M.}~\bibnamefont
  {Gu\c{t}\u{a}}}, \bibinfo {author} {\bibfnamefont {I.}~\bibnamefont
  {Lesanovsky}},\ and\ \bibinfo {author} {\bibfnamefont {J.~P.}\ \bibnamefont
  {Garrahan}},\ }\bibfield  {title} {\bibinfo {title} {Towards a theory of
  metastability in open quantum dynamics},\ }\href
  {https://doi.org/10.1103/PhysRevLett.116.240404} {\bibfield  {journal}
  {\bibinfo  {journal} {Phys. Rev. Lett.}\ }\textbf {\bibinfo {volume} {116}},\
  \bibinfo {pages} {240404} (\bibinfo {year} {2016})}\BibitemShut {NoStop}%
\bibitem [{\citenamefont {Ciccarello}\ \emph {et~al.}(2022)\citenamefont
  {Ciccarello}, \citenamefont {Lorenzo}, \citenamefont {Giovannetti},\ and\
  \citenamefont {Palma}}]{Ciccarello2022}%
  \BibitemOpen
  \bibfield  {author} {\bibinfo {author} {\bibfnamefont {F.}~\bibnamefont
  {Ciccarello}}, \bibinfo {author} {\bibfnamefont {S.}~\bibnamefont {Lorenzo}},
  \bibinfo {author} {\bibfnamefont {V.}~\bibnamefont {Giovannetti}},\ and\
  \bibinfo {author} {\bibfnamefont {G.~M.}\ \bibnamefont {Palma}},\ }\bibfield
  {title} {\bibinfo {title} {Quantum collision models: Open system dynamics
  from repeated interactions},\ }\href
  {https://doi.org/10.1016/j.physrep.2022.01.001} {\bibfield  {journal}
  {\bibinfo  {journal} {Phys. Rep.}\ }\textbf {\bibinfo {volume} {954}},\
  \bibinfo {pages} {1} (\bibinfo {year} {2022})}\BibitemShut {NoStop}%
\bibitem [{\citenamefont {Porras}\ and\ \citenamefont
  {Fern{\'a}ndez-Lorenzo}(2019)}]{Porras2019}%
  \BibitemOpen
  \bibfield  {author} {\bibinfo {author} {\bibfnamefont {D.}~\bibnamefont
  {Porras}}\ and\ \bibinfo {author} {\bibfnamefont {S.}~\bibnamefont
  {Fern{\'a}ndez-Lorenzo}},\ }\bibfield  {title} {\bibinfo {title} {Topological
  amplification in photonic lattices},\ }\href
  {https://doi.org/10.1103/PhysRevLett.122.143901} {\bibfield  {journal}
  {\bibinfo  {journal} {Phys. Rev. Lett.}\ }\textbf {\bibinfo {volume} {122}},\
  \bibinfo {pages} {143901} (\bibinfo {year} {2019})}\BibitemShut {NoStop}%
\bibitem [{\citenamefont {Gong}\ \emph {et~al.}(2018)\citenamefont {Gong},
  \citenamefont {Ashida}, \citenamefont {Kawabata}, \citenamefont {Takasan},
  \citenamefont {Higashikawa},\ and\ \citenamefont {Ueda}}]{Gong2018}%
  \BibitemOpen
  \bibfield  {author} {\bibinfo {author} {\bibfnamefont {Z.}~\bibnamefont
  {Gong}}, \bibinfo {author} {\bibfnamefont {Y.}~\bibnamefont {Ashida}},
  \bibinfo {author} {\bibfnamefont {K.}~\bibnamefont {Kawabata}}, \bibinfo
  {author} {\bibfnamefont {K.}~\bibnamefont {Takasan}}, \bibinfo {author}
  {\bibfnamefont {S.}~\bibnamefont {Higashikawa}},\ and\ \bibinfo {author}
  {\bibfnamefont {M.}~\bibnamefont {Ueda}},\ }\bibfield  {title} {\bibinfo
  {title} {Topological phases of non-{Hermitian} systems},\ }\href
  {https://doi.org/10.1103/PhysRevX.8.031079} {\bibfield  {journal} {\bibinfo
  {journal} {Phys. Rev. X}\ }\textbf {\bibinfo {volume} {8}},\ \bibinfo {pages}
  {031079} (\bibinfo {year} {2018})}\BibitemShut {NoStop}%
\bibitem [{\citenamefont {Su}\ \emph {et~al.}(1979)\citenamefont {Su},
  \citenamefont {Schrieffer},\ and\ \citenamefont {Heeger}}]{SSH1979}%
  \BibitemOpen
  \bibfield  {author} {\bibinfo {author} {\bibfnamefont {W.~P.}\ \bibnamefont
  {Su}}, \bibinfo {author} {\bibfnamefont {J.~R.}\ \bibnamefont {Schrieffer}},\
  and\ \bibinfo {author} {\bibfnamefont {A.~J.}\ \bibnamefont {Heeger}},\
  }\bibfield  {title} {\bibinfo {title} {Solitons in polyacetylene},\ }\href
  {https://doi.org/10.1103/PhysRevLett.42.1698} {\bibfield  {journal} {\bibinfo
   {journal} {Phys. Rev. Lett.}\ }\textbf {\bibinfo {volume} {42}},\ \bibinfo
  {pages} {1698} (\bibinfo {year} {1979})}\BibitemShut {NoStop}%
\bibitem [{\citenamefont {Kunst}\ \emph {et~al.}(2018)\citenamefont {Kunst},
  \citenamefont {Edvardsson}, \citenamefont {Budich},\ and\ \citenamefont
  {Bergholtz}}]{Kunst2018}%
  \BibitemOpen
  \bibfield  {author} {\bibinfo {author} {\bibfnamefont {F.~K.}\ \bibnamefont
  {Kunst}}, \bibinfo {author} {\bibfnamefont {E.}~\bibnamefont {Edvardsson}},
  \bibinfo {author} {\bibfnamefont {J.~C.}\ \bibnamefont {Budich}},\ and\
  \bibinfo {author} {\bibfnamefont {E.~J.}\ \bibnamefont {Bergholtz}},\
  }\bibfield  {title} {\bibinfo {title} {Biorthogonal bulk-boundary
  correspondence in non-{Hermitian} systems},\ }\href
  {https://doi.org/10.1103/PhysRevLett.121.026808} {\bibfield  {journal}
  {\bibinfo  {journal} {Phys. Rev. Lett.}\ }\textbf {\bibinfo {volume} {121}},\
  \bibinfo {pages} {026808} (\bibinfo {year} {2018})}\BibitemShut {NoStop}%
\bibitem [{\citenamefont {Wang}\ \emph {et~al.}(2024)\citenamefont {Wang},
  \citenamefont {Song},\ and\ \citenamefont {Wang}}]{Wang2024}%
  \BibitemOpen
  \bibfield  {author} {\bibinfo {author} {\bibfnamefont {H.-Y.}\ \bibnamefont
  {Wang}}, \bibinfo {author} {\bibfnamefont {F.}~\bibnamefont {Song}},\ and\
  \bibinfo {author} {\bibfnamefont {Z.}~\bibnamefont {Wang}},\ }\bibfield
  {title} {\bibinfo {title} {Amoeba formulation of non-bloch band theory in
  arbitrary dimensions},\ }\href {https://doi.org/10.1103/PhysRevX.14.021011}
  {\bibfield  {journal} {\bibinfo  {journal} {Phys. Rev. X}\ }\textbf {\bibinfo
  {volume} {14}},\ \bibinfo {pages} {021011} (\bibinfo {year}
  {2024})}\BibitemShut {NoStop}%
\bibitem [{\citenamefont {Kaneshiro}\ and\ \citenamefont
  {Peters}(2026)}]{Kaneshiro2026}%
  \BibitemOpen
  \bibfield  {author} {\bibinfo {author} {\bibfnamefont {S.}~\bibnamefont
  {Kaneshiro}}\ and\ \bibinfo {author} {\bibfnamefont {R.}~\bibnamefont
  {Peters}},\ }\bibfield  {title} {\bibinfo {title} {Wiener--hopf factorization
  and non-hermitian topology for {Amoeba} formulation in one-dimensional
  multiband systems},\ }\href {https://doi.org/10.1103/s43l-h6z6} {\bibfield
  {journal} {\bibinfo  {journal} {Phys. Rev. Research}\ }\textbf {\bibinfo
  {volume} {8}},\ \bibinfo {pages} {013292} (\bibinfo {year}
  {2026})}\BibitemShut {NoStop}%
\bibitem [{\citenamefont {Bhatia}\ and\ \citenamefont
  {Rosenthal}(1997)}]{BhatiaRosenthal1997}%
  \BibitemOpen
  \bibfield  {author} {\bibinfo {author} {\bibfnamefont {R.}~\bibnamefont
  {Bhatia}}\ and\ \bibinfo {author} {\bibfnamefont {P.}~\bibnamefont
  {Rosenthal}},\ }\bibfield  {title} {\bibinfo {title} {How and why to solve
  the operator equation $ax-xb=y$},\ }\href
  {https://doi.org/10.1112/S0024609396001828} {\bibfield  {journal} {\bibinfo
  {journal} {Bull. London Math. Soc.}\ }\textbf {\bibinfo {volume} {29}},\
  \bibinfo {pages} {1} (\bibinfo {year} {1997})}\BibitemShut {NoStop}%
\bibitem [{\citenamefont {Jordan}\ and\ \citenamefont
  {Wigner}(1928)}]{JordanWigner1928}%
  \BibitemOpen
  \bibfield  {author} {\bibinfo {author} {\bibfnamefont {P.}~\bibnamefont
  {Jordan}}\ and\ \bibinfo {author} {\bibfnamefont {E.}~\bibnamefont
  {Wigner}},\ }\bibfield  {title} {\bibinfo {title} {{\"U}ber das {Paulische}
  {\"a}quivalenzverbot},\ }\href {https://doi.org/10.1007/BF01331938}
  {\bibfield  {journal} {\bibinfo  {journal} {Z. Phys.}\ }\textbf {\bibinfo
  {volume} {47}},\ \bibinfo {pages} {631} (\bibinfo {year} {1928})}\BibitemShut
  {NoStop}%
\bibitem [{\citenamefont {Lieb}\ \emph {et~al.}(1961)\citenamefont {Lieb},
  \citenamefont {Schultz},\ and\ \citenamefont
  {Mattis}}]{LiebSchultzMattis1961}%
  \BibitemOpen
  \bibfield  {author} {\bibinfo {author} {\bibfnamefont {E.}~\bibnamefont
  {Lieb}}, \bibinfo {author} {\bibfnamefont {T.}~\bibnamefont {Schultz}},\ and\
  \bibinfo {author} {\bibfnamefont {D.}~\bibnamefont {Mattis}},\ }\bibfield
  {title} {\bibinfo {title} {Two soluble models of an antiferromagnetic
  chain},\ }\href {https://doi.org/10.1016/0003-4916(61)90115-4} {\bibfield
  {journal} {\bibinfo  {journal} {Ann. Phys. (N.Y.)}\ }\textbf {\bibinfo
  {volume} {16}},\ \bibinfo {pages} {407} (\bibinfo {year} {1961})}\BibitemShut
  {NoStop}%
\bibitem [{\citenamefont {Katsura}(1962)}]{Katsura1962}%
  \BibitemOpen
  \bibfield  {author} {\bibinfo {author} {\bibfnamefont {S.}~\bibnamefont
  {Katsura}},\ }\bibfield  {title} {\bibinfo {title} {Statistical mechanics of
  the anisotropic linear {Heisenberg} model},\ }\href
  {https://doi.org/10.1103/PhysRev.127.1508} {\bibfield  {journal} {\bibinfo
  {journal} {Phys. Rev.}\ }\textbf {\bibinfo {volume} {127}},\ \bibinfo {pages}
  {1508} (\bibinfo {year} {1962})}\BibitemShut {NoStop}%
\bibitem [{\citenamefont {Prosen}(2008)}]{Prosen2008}%
  \BibitemOpen
  \bibfield  {author} {\bibinfo {author} {\bibfnamefont {T.}~\bibnamefont
  {Prosen}},\ }\bibfield  {title} {\bibinfo {title} {Third quantization: a
  general method to solve master equations for quadratic open {Fermi}
  systems},\ }\href {https://doi.org/10.1088/1367-2630/10/4/043026} {\bibfield
  {journal} {\bibinfo  {journal} {New J. Phys.}\ }\textbf {\bibinfo {volume}
  {10}},\ \bibinfo {pages} {043026} (\bibinfo {year} {2008})}\BibitemShut
  {NoStop}%
\bibitem [{\citenamefont {Barthel}\ and\ \citenamefont
  {Zhang}(2022)}]{BarthelZhang2022}%
  \BibitemOpen
  \bibfield  {author} {\bibinfo {author} {\bibfnamefont {T.}~\bibnamefont
  {Barthel}}\ and\ \bibinfo {author} {\bibfnamefont {Y.}~\bibnamefont
  {Zhang}},\ }\bibfield  {title} {\bibinfo {title} {Solving quasi-free and
  quadratic {Lindblad} master equations for open fermionic and bosonic
  systems},\ }\href {https://doi.org/10.1088/1742-5468/ac8e5c} {\bibfield
  {journal} {\bibinfo  {journal} {J. Stat. Mech.}\ }\textbf {\bibinfo {volume}
  {2022}},\ \bibinfo {pages} {113101} (\bibinfo {year} {2022})}\BibitemShut
  {NoStop}%
\end{thebibliography}%


\begin{thebibliography}{0}%
\makeatletter
\providecommand \@ifxundefined [1]{%
 \@ifx{#1\undefined}
}%
\providecommand \@ifnum [1]{%
 \ifnum #1\expandafter \@firstoftwo
 \else \expandafter \@secondoftwo
 \fi
}%
\providecommand \@ifx [1]{%
 \ifx #1\expandafter \@firstoftwo
 \else \expandafter \@secondoftwo
 \fi
}%
\providecommand \natexlab [1]{#1}%
\providecommand \enquote  [1]{``#1''}%
\providecommand \bibnamefont  [1]{#1}%
\providecommand \bibfnamefont [1]{#1}%
\providecommand \citenamefont [1]{#1}%
\providecommand \href@noop [0]{\@secondoftwo}%
\providecommand \href [0]{\begingroup \@sanitize@url \@href}%
\providecommand \@href[1]{\@@startlink{#1}\@@href}%
\providecommand \@@href[1]{\endgroup#1\@@endlink}%
\providecommand \@sanitize@url [0]{\catcode `\\12\catcode `\$12\catcode
  `\&12\catcode `\#12\catcode `\^12\catcode `\_12\catcode `\%12\relax}%
\providecommand \@@startlink[1]{}%
\providecommand \@@endlink[0]{}%
\providecommand \url  [0]{\begingroup\@sanitize@url \@url }%
\providecommand \@url [1]{\endgroup\@href {#1}{\urlprefix }}%
\providecommand \urlprefix  [0]{URL }%
\providecommand \Eprint [0]{\href }%
\providecommand \doibase [0]{https://doi.org/}%
\providecommand \selectlanguage [0]{\@gobble}%
\providecommand \bibinfo  [0]{\@secondoftwo}%
\providecommand \bibfield  [0]{\@secondoftwo}%
\providecommand \translation [1]{[#1]}%
\providecommand \BibitemOpen [0]{}%
\providecommand \bibitemStop [0]{}%
\providecommand \bibitemNoStop [0]{.\EOS\space}%
\providecommand \EOS [0]{\spacefactor3000\relax}%
\providecommand \BibitemShut  [1]{\csname bibitem#1\endcsname}%
\let\auto@bib@innerbib\@empty
\end{thebibliography}%
\end{document}